\crefname{property}{Property}{Properties}
\crefname{claim}{Claim}{Claims}
\newenvironment{nestedproof}{\noindent{\emph{Proof:}}}{~\hfill $\blacksquare$ \medskip}
\newcommand{\myparagraph}[1]{\medskip\noindent\textbf{\boldmath #1}}
\newmdenv[%
  roundcorner=5pt,%
  backgroundcolor=orange,%
]{todonote}
\DeclareMathOperator{\ch}{ch}
\DeclareMathOperator{\cross}{cr}
\let\doendproof\endproof
\renewcommand\endproof{~\hfill$\qed$\doendproof}
\begin{document}

\title{Min-$k$-planar Drawings of Graphs\thanks{Research started at the Summer Workshop on Graph Drawing (SWGD) 2022, and partially supported by: (i)University of Perugia, Ricerca Base 2021, Proj. AIDMIX and RICBA22CB ; (ii) MUR PRIN Proj. 2022TS4Y3N - ``EXPAND: scalable algorithms for EXPloratory Analyses of heterogeneous and dynamic Networked Data''; (iii) MUR PRIN Proj. 2022ME9Z78 - ``NextGRAAL: Next-generation algorithms for constrained GRAph visuALization''
}
}

\author{Carla~Binucci\inst{1}\texorpdfstring{\href{https://orcid.org/0000-0002-5320-9110}{\protect\includegraphics[scale=0.45]{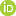}}}{}
\and Aaron~B\"ungener\inst{2} 
\and Giuseppe~Di~Battista\inst{3}\texorpdfstring{\href{https://orcid.org/0000-0003-4224-1550}{\protect\includegraphics[scale=0.45]{orcid}}}{} \and Walter~Didimo\inst{1}\texorpdfstring{\href{https://orcid.org/0000-0002-4379-6059}{\protect\includegraphics[scale=0.45]{orcid}}}{} \and Vida~Dujmovi\'c\inst{4}\texorpdfstring{\href{https://orcid.org/0000-0001-7250-0600}{\protect\includegraphics[scale=0.45]{orcid}}}{} \and Seok-Hee~Hong\inst{5}\texorpdfstring{\href{https://orcid.org/0000-0003-1698-3868}{\protect\includegraphics[scale=0.45]{orcid}}}{} \and Michael~Kaufmann\inst{2}\texorpdfstring{\href{https://orcid.org/0000-0001-9186-3538}{\protect\includegraphics[scale=0.45]{orcid}}}{} \and Giuseppe~Liotta\inst{1}\texorpdfstring{\href{https://orcid.org/0000-0002-2886-9694}{\protect\includegraphics[scale=0.45]{orcid}}}{} \and Pat~Morin\inst{6}\texorpdfstring{\href{https://orcid.org/0000-0003-0471-4118}{\protect\includegraphics[scale=0.45]{orcid}}}{} \and Alessandra~Tappini\inst{1}\texorpdfstring{\href{https://orcid.org/0000-0001-9192-2067}{\protect\includegraphics[scale=0.45]{orcid}}}{}}

\date{}

\institute{Universit\`a degli Studi di Perugia, Perugia, Italy\\
  \texttt{\{carla.binucci,walter.didimo,giuseppe.liotta,alessandra.tappini\}@unipg.it}
  \and
  University of T\"ubingen, T\"ubingen, Germany\\
  \texttt{michael.kaufmann@uni-tuebingen.de},
  \texttt{aaron.buengener@student.uni-tuebingen.de},
  \and
  Universit\`a Roma Tre, Roma, Italy\\
  \texttt{giuseppe.dibattista@uniroma3.it}
  \and
  University of Ottawa, Ottawa, Canada\\  
  \texttt{vdujmovi@uOttawa.ca}
  \and
  University of Sydney, Sydney, Australia\\
  \texttt{seokhee.hong@sydney.edu.au}
  \and
  Carleton Univerity, Ottawa, Canada\\
  \texttt{morin@scs.carleton.ca}
}

\authorrunning{C. Binucci et al.}

\maketitle

\begin{abstract}
 The study of nonplanar drawings of graphs with restricted crossing configurations is a well-established topic in graph drawing, often referred to as \emph{beyond-planar graph drawing}. One of the most studied types of drawings in this area are the \emph{$k$-planar drawings} $(k \geq 1)$, where each edge cannot cross more than $k$ times. We generalize $k$-planar drawings, by introducing the new family of \emph{min-$k$-planar drawings}. In a min-$k$-planar drawing edges can cross an arbitrary number of times, but for any two crossing edges, one of the two must have no more than $k$ crossings. We prove a general upper bound on the number of edges of min-$k$-planar drawings, a finer upper bound for $k=3$, and tight upper bounds for $k=1,2$. Also, we study the inclusion relations between min-$k$-planar graphs (i.e., graphs admitting min-$k$-planar drawings) and $k$-planar graphs. In our setting we only allow \emph{simple} drawings, that is, any two edges cross at most once, no two adjacent edges cross, and no three edges intersect at a common crossing point.
 
 \keywords{Beyond planarity \and $k$-planarity \and edge density} 
\end{abstract} 

\section{Introduction}\label{se:intro}
Beyond planarity~\cite{DBLP:journals/csur/DidimoLM19,DBLP:books/sp/20/HT2020} is a recent area of focus in graph drawing and topological graph theory, having its foundations established in the 1970s and 1980s.
It comprises works on graphs that go beyond planar graphs in the sense that several, mostly local, crossing configurations are forbidden. 
The simplest are \emph{1-planar} graphs, where at most one crossing per edge is allowed~\cite{DBLP:journals/csr/KobourovLM17,MR0187232}, and their generalization \emph{$k$-planar} graphs, where at most $k \geq 1$ crossings per edge are tolerated~\cite{DBLP:conf/compgeom/Bekos0R17,DBLP:journals/csur/DidimoLM19,DBLP:journals/algorithmica/GrigorievB07,DBLP:conf/gd/PachT96,DBLP:journals/combinatorica/PachT97}.
Other prominent examples of graph classes are \emph{fan-planar} graphs~\cite{DBLP:conf/gd/BekosCGHK14,DBLP:journals/algorithmica/BekosCGHK17,DBLP:journals/tcs/BinucciGDMPST15,DBLP:conf/gd/BinucciGDMPT14,DBLP:journals/combinatorics/0001U22}, where several edges might cross the same edge but they should be adjacent to the same vertex, and \emph{$k$-gap-planar} graphs ($k \geq 1)$~\cite{DBLP:conf/gd/BachmaierRS18,DBLP:conf/gd/BaeBCEEGHKMRT17,DBLP:journals/tcs/BaeBCEE0HKMRT18}, where for each pair of crossing edges one of the two edges contains a small gap through which the other edge can pass, and at most $k$ gaps per edge are allowed. Another popular family is the one of \emph{$k$-quasiplanar} graphs, which forbids $k$ mutually crossing edges~\cite{ackerman.tardos:on,DBLP:conf/gd/AgarwalAPPS95,DBLP:journals/combinatorica/AgarwalAPPS97,DBLP:journals/jctb/AngeliniBBLBDHL20,DBLP:journals/siamdm/FoxPS13}. Mostly, edge density and inclusion relations of different beyond-planar graph classes have been studied~\cite{DBLP:journals/jctb/AngeliniBBLBDHL20,DBLP:journals/csur/DidimoLM19,DBLP:books/sp/20/HT2020}.

\smallskip
In this paper we introduce a new graph family that generalizes $k$-planar graphs by permitting certain edges to have more than $k$ crossings. Namely, for each two crossing edges we require that at least one of them contains at most $k$ crossings. Formally, this graph family is defined as follows:

\begin{definition}\label{def:mink}
A graph $G$ is \emph{min-$k$-planar} $(k \geq 1)$ if it admits a drawing on the plane, called \emph{min-$k$-planar} drawing, such that for any two crossing edges $e$ and $e'$ it holds $\min\{\cross(e), \cross(e')\} \leq k$ , where $\cross(e)$ and $\cross(e')$ are the number of crossings of $e$ and $e'$, respectively.
\end{definition}

Clearly, every $k$-planar drawing $\Gamma$ is also min-$k$-planar, but not vice versa. A \emph{crossing edge} in $\Gamma$ with more than $k$ crossings is \emph{heavy}, otherwise it is \emph{light}. There are two main motivations behind the study of min-$k$-planar graphs:

\smallskip \noindent $(i)$ From a theoretical perspective, when a graph is not $k$-planar we may want to draw it by allowing some heavy edges, whose removal yields a $k$-planar drawing. In this respect, if $m$ is the total number of edges in the graph, we will prove that the number of heavy edges in a min-$k$-planar drawing is at most $\frac{k}{2k+1}\cdot m$, which varies in the interval $[\frac{m}{3}, \frac{m}{2})$. 


\smallskip\noindent $(ii)$  From a practical perspective, even if a graph is $k$-planar, allowing (few) pairwise-independent heavy edges may reduce the visual complexity of the layout, even when the total number of crossings grows. For example, \cref{fi:intro} shows two drawings of the same portion of a graph. Despite the drawing in \cref{fi:intro-a} is 2-planar and has fewer crossings in total, the one in \cref{fi:intro-b} appears more readable; it is not 2-planar, but it is min-2-planar. 

\smallskip Min-$k$-planar graphs are also implicitly studied in \cite{DBLP:conf/gd/WoodT06,wt-07}, proving that the underlying graph of a convex min-$k$-planar drawing has treewidth~$3k+11$. 

\begin{figure}[tb]
	\centering
    \subfigure[2-planar drawing]{\includegraphics[page=1,width=.35\textwidth]{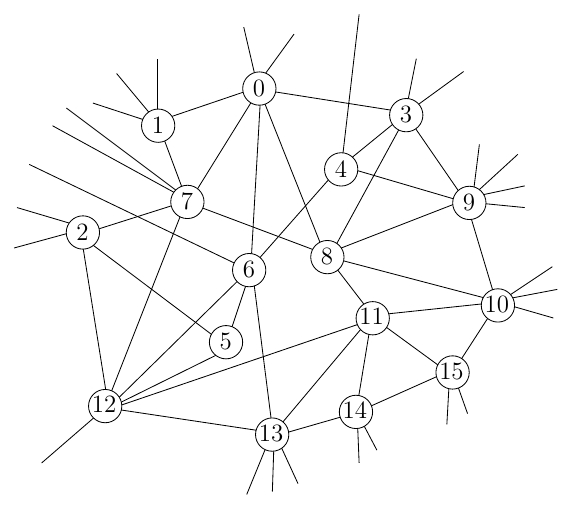}\label{fi:intro-a}}
    \hfil
    \subfigure[min-2-planar drawing]{\includegraphics[page=2,width=.35\textwidth]{intro}\label{fi:intro-b}}
    \caption{Two drawings of the same portion of a graph: (a) is 2-planar and has 10 crossings; (b) is min-2-planar, is not 2-planar, and has 12 crossings; it contains two ``heavy'' edges incident to vertex 6, each with several crossings.}
    \label{fi:intro}
\end{figure}

\myparagraph{Contribution.} We study the edge density of min-$k$-planar graphs (\cref{se:edge-density}) and their inclusion relations with $k$-planar graphs (\cref{se:inclusion-relationships}). In our setting we only allow \emph{simple} drawings, that is, any two edges cross at most once, no two adjacent edges cross, and no three edges intersect at a common crossing point. After giving general bounds on edge and crossing numbers, we focus on $k \in \{1,2,3\}$:

\smallskip \noindent {$-$} We provide tight upper bounds on the maximum number of edges of min-1-planar and min-2-planar graphs. Namely, we prove that $n$-vertex min-1-planar graphs and min-2-planar graphs have at most $4n-8$ edges and at most $5n-10$ edges, respectively, as for 1-planar and 2-planar graphs. For min-3-planar graphs we give an upper bound of $6n-12$ and show min-3-planar graphs with $5.6n-O(1)$ edges, hence having density higher than the one of every 3-planar graph. 

\smallskip \noindent {$-$}
Despite the maximum density of min-$k$-planar graphs for $k=1,2$ equals the one of $k$-planar graphs, we show that $1$-planar and $2$-planar graphs are proper sub-classes of min-1-planar and min-2-planar graphs (as for $k=3$). However, the min-1-planar graphs that can reach the maximum density of $4n-8$ are also 1-planar (i.e., the two classes coincide), while this is not true for $k = 2, 3$. 
 

\smallskip
\Cref{se:basic-definitions} introduces notation and terminology; final remarks and open problems are in \Cref{se:open}.
All full proofs are in the appendix.

\section{Basic Definitions}\label{se:basic-definitions}

We only deal with connected graphs. A graph is \emph{simple} if it does not contain multiple edges and self-loops. A graph with multiple edges but not self-loops is also called a \emph{multi-graph}.
Let $G$ be any (not necessarily simple) graph. We denote by $V(G)$ and $E(G)$ the set of vertices and the set of edges of~$G$, respectively. A \emph{drawing} $\Gamma$ of~$G$ maps each vertex $v \in V(G)$ to a distinct point in the plane and each edge $uv \in E(G)$ to a simple Jordan arc between the points corresponding to $u$ and~$v$. We always assume that $\Gamma$ is a \emph{simple} drawing, that is: $(i)$ two \emph{adjacent} edges (i.e., edges that share a vertex) do not intersect, except at their common endpoint (in particular, no edge is self-crossing); $(ii)$ two \emph{independent}  (i.e.\ non-adjacent) edges intersect at most in one of their interior points, called a \emph{crossing point}; $(iii)$ no three edges intersect at a common crossing point.

Let $\Gamma$ be a drawing of $G$. A \emph{vertex of~$\Gamma$} is either a point corresponding to a vertex of $G$, called a \emph{real-vertex}, or a point corresponding to a crossing point, called a \emph{crossing-vertex} or simply a \emph{crossing}.
We remark that in the literature a plane graph obtained by replacing crossing points with dummy vertices is often referred to as a \emph{planarization}~\cite{DBLP:books/ph/BattistaETT99}.
We denote by $V(\Gamma)$ the set of vertices of $\Gamma$. An \emph{edge} of~$\Gamma$ is a curve connecting two vertices of $\Gamma$. We denote by $E(\Gamma)$ the set of edges of $\Gamma$. An edge $e \in E(\Gamma)$ is a portion of an edge in $E(G)$, which we denote by $\overline{e}$; if both the endpoints of $e$ are real-vertices, then $e$ and $\overline{e}$ coincide. 

Drawing $\Gamma$ subdivides the plane into topologically connected regions, called \emph{faces}. The boundary of a face consists of a cyclical sequence of vertices (real- or crossing-vertices) and edges of $\Gamma$. We denote by $F(\Gamma)$ the set of faces of $\Gamma$. Exactly one face in~$F(\Gamma)$ corresponds to an infinite region of the plane, called the \emph{external face} of~$\Gamma$; the other faces are the \emph{internal faces} of~$\Gamma$.
If the boundary of a face $f$ of $\Gamma$ contains a vertex~$v$ (or an edge~$e$), we say that $f$ \emph{contains} $v$~(or~$e$).

In the following, if not specified, we denote by $n=|V(G)|$ and $m=|E(G)|$ the number of vertices and the number of edges of $G$, respectively. 

\smallskip\noindent {\bf Degree of vertices and faces.} 
For a vertex $v \in V(G)$, denote by $\deg_G(v)$ the \emph{degree of $v$ in $G$}, i.e., the number of edges incident to $v$. Analogously, for a vertex $v \in V(\Gamma)$, denote by $\deg_\Gamma(v)$ the \emph{degree of $v$ in $\Gamma$}. Note that, if $v \in V(G)$ then $\deg_\Gamma(v)=\deg_G(v)$, while if $v$ is a crossing-vertex then $\deg_\Gamma(v)=4$.   
For a face $f \in F(\Gamma)$, denote by $\deg_\Gamma(f)$ the \emph{degree of~$f$}, i.e., the number of times we traverse vertices (either real- or crossing-vertices) while walking on the boundary of~$f$ clockwise. Each vertex contributes to $\deg_\Gamma(f)$ the number of times we traverse it (possibly more than once if the boundary of~$f$ is not a simple cycle). 
Also, denote by $\deg_\Gamma^r(f)$ the \emph{real-vertex degree of~$f$}, i.e., the number of times we traverse a real-vertex of $\Gamma$ while walking on the boundary of $f$ clockwise. Again, each real-vertex contributes to $\deg_\Gamma^r(f)$ the number of times we traverse it. Finally, $\deg_\Gamma^c(f)$ denotes the number of times we traverse a crossing-vertex of $\Gamma$ while walking on the boundary of $f$ clockwise. Clearly, $\deg_\Gamma(f) = \deg_\Gamma^r(f) + \deg_\Gamma^c(f)$.

We say that a face $f \in F(\Gamma)$ is an \emph{$h$-real face}, for~$h \geq 0$, if $\deg_\Gamma^r(f)=h$. 
An $h$-real face of degree $d$ is called an \emph{$h$-real $d$-gon}. For $k=2,3,4,5,6$, a face that is an $h$-real $k$-gon, is also called an \emph{$h$-real bigon} ($k=2)$, an \emph{$h$-real triangle} ($k=3)$, an \emph{$h$-real quadrilateral} ($k=4$), an \emph{$h$-real pentagon} ($k=5$), and an \emph{$h$-real hexagon} ($k=6$), respectively.  
An edge $e=uv \in E(\Gamma)$ is an \emph{$h$-real edge} ($h \in \{0,1,2\}$) if $|\{u,v\} \cap V(G)| = h$, i.e., $e$ contains $h$ real-vertices.

\smallskip\noindent {\bf Beyond-planar graphs.} 
A family $\cal F$ of \emph{beyond-planar graphs} is a set of (nonplanar) graphs that admit drawings with desired or forbidden edge-crossing configurations~\cite{DBLP:journals/csur/DidimoLM19}. The \emph{edge density} of a graph $G \in {\cal F}$ is the ratio between its number $m$ of edges and its number $n$ of vertices.  Graph $G$ is \emph{maximally dense} if it has the maximum edge density over all graphs of $\cal F$ with $n$ vertices. Graph $G$ is \emph{optimal} if it has the maximum edge density over all graphs in $\cal F$. Note that $\cal F$ might not contain optimal graphs for all values of $n$ (see, e.g., \cite{DBLP:journals/csur/DidimoLM19}).


\section{Edge Density of Min-$k$-planar Graphs}\label{se:edge-density}

We start by proving some general bounds on the number of crossings in a min-k-planar drawing and on the number of edges of min-$k$-planar graphs.

\begin{property}\label{pr:crossings-min-k}
Any min-$k$-planar drawing $\Gamma$ of a graph $G$ (with $k \geq 1$) has at most $k \cdot \ell$ crossings, where $\ell$ is the number of light edges of $G$ in $\Gamma$.
\end{property}
\begin{proof} 
Two heavy edges cannot cross, thus each crossing in $\Gamma$ belongs to at least one light edge. Since each light edge has at most $k$ crossings, the bound follows.\end{proof}

\begin{property}\label{pr:heavy-min-k} 
Let~$\Gamma$ be a min-$k$-planar drawing of an $m$-edge graph $G$ (with $k \geq 1$). The number of heavy edges of $G$ in $\Gamma$ is at most $\frac{k}{2k+1} \cdot m$.
\end{property}
\begin{proof}
Let $h$ and $\ell$ be the number of heavy edges and the number of light  edges of $G$ in $\Gamma$, respectively. Observe that $m \geq h+\ell$. By definition, each heavy edge contains at least $(k+1)$ crossings, and two heavy edges do not cross. Hence, the number of crossings in $\Gamma$ is at least $h\cdot(k+1)$. By \cref{pr:crossings-min-k}, we have $h\cdot(k+1) \leq k \cdot \ell \leq k \cdot m - k \cdot h$, which implies $h \leq \frac{k}{2k+1} \cdot m$. 
\end{proof}



We now give a general bound on the edge density of min-$k$-planar simple graphs, for any $k \geq 2$. Finer bounds for $k=1,2,3$ are given in the next sections.

\begin{restatable}{theorem}{thdensityminkgeneral}\label{th:density-mink-general}
For any min-$k$-planar simple graph $G$ with $n$ vertices and $m$ edges it holds $m \leq \min\{5.39 \sqrt{k}\cdot n, (3.81\sqrt{k} + 3)\cdot n\}$ when $k \geq 2$.
\end{restatable}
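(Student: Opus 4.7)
\smallskip\noindent\textbf{Proof plan.}
Let $\Gamma$ be a min-$k$-planar drawing of $G$, and decompose its edge set as $m=h+\ell$, where $h$ is the number of heavy edges and $\ell$ the number of light edges of $G$ in $\Gamma$. My plan is to bound $m$ by two essentially orthogonal arguments and take the minimum; neither argument requires reasoning about individual faces of $\Gamma$.

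For the first estimate $m\le 5.39\sqrt{k}\,n$, the idea is to apply the crossing-number inequality. By \cref{pr:crossings-min-k} the total number of crossings of $\Gamma$ is at most $k\ell\le k m$. On the other hand, for a simple graph $G$ with sufficiently many edges (roughly $m\ge 6.95\,n$) the best currently known form of the crossing lemma gives $\cross(G)\ge m^3/(29\,n^2)$. Chaining the two inequalities yields $m^3/(29\,n^2)\le k m$, i.e.\ $m\le \sqrt{29k}\,n<5.39\sqrt{k}\,n$. If instead $m$ is below the crossing-lemma threshold, the bound holds trivially since for every $k\ge 2$ we have $5.39\sqrt{k}\,n\ge 5.39\sqrt{2}\,n>6.95\,n$.

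For the second estimate $m\le(3.81\sqrt{k}+3)\,n$, the plan is to bound $h$ and $\ell$ separately using the structural properties of the two edge types. Since two heavy edges cannot cross (otherwise both would have more than $k$ crossings, contradicting the min-$k$-planarity of $\Gamma$), deleting the light edges from $\Gamma$ produces a planar drawing of the subgraph of $G$ spanned by the heavy edges; as $G$ is simple, this gives $h\le 3n-6<3n$. Symmetrically, deleting the heavy edges from $\Gamma$ leaves every remaining edge with at most $k$ crossings, i.e.\ a $k$-planar drawing of the subgraph spanned by the light edges, so the sharpest currently known upper bound on the edge density of simple $k$-planar graphs (of the form $3.81\sqrt{k}\,n$, in the Pach--Tóth line of results) yields $\ell\le 3.81\sqrt{k}\,n$. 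Adding the two estimates gives the claim.

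The main obstacle is not conceptual but bookkeeping: one has to be careful to invoke the \emph{strongest} presently known constants $1/29$ in the crossing lemma and $3.81\sqrt{k}$ in the $k$-planar density bound, and to verify the small-$m$ case of the crossing-lemma step so that the hypothesis $k\ge 2$ suffices. Once those external inputs are in place, both bounds follow immediately from \cref{pr:crossings-min-k} together with the simple but crucial observation that the light edges alone induce a $k$-planar subdrawing and the heavy edges alone induce a planar subdrawing.
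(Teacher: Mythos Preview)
Your plan is correct and follows essentially the same route as the paper. The only cosmetic difference is in the second estimate: you bound the light edges by citing the $k$-planar edge-density bound $\ell\le 3.81\sqrt{k}\,n$ as a black box, whereas the paper re-derives that inequality by applying Ackerman's crossing lemma directly to the light-edge subgraph (using that each remaining crossing is shared by two light edges, so $\cross\le k\ell/2$); and the paper groups the crossing-free edges with the heavy edges rather than with the light ones, but either grouping gives the same $3n$ planar term.
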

\begin{proof}[Sketch]
Let $\mu = \min\{5.39 \sqrt{k} \cdot n, (3.81\sqrt{k} + 3)\cdot n\}$. Note that $\mu = 5.39 \sqrt{k} \cdot n$ when $2 \leq k \leq 3$, while $\mu = (3.81\sqrt{k} + 3)\cdot n$ when $k \geq 4$.

Suppose first that $2 \leq k \leq 3$. If $m < 6.95n$, the relation $m \leq 5.39\sqrt{k}\cdot n$ trivially holds. If $m \geq 6.95n$, let $\cross(G)$ be the minimum number of crossings required by any min-$k$-planar drawing $\Gamma$ of $G$. The improved version by Ackerman of the popular Crossing Lemma (Theorem~6 in \cite{ackerman:on}) implies that $\cross(G) \geq \frac{1}{29}\frac{m^3}{n^2}$. If $\ell$ is the number of light edges of $G$ in $\Gamma$, by \cref{pr:crossings-min-k} we have $\cross(G) \leq k \cdot \ell \leq k \cdot m$. Hence $\frac{1}{29}\frac{m^3}{n^2} \leq k \cdot m$, which yields $m \leq 5.39 \sqrt{k} \cdot n$.        

Suppose now that $k \geq 4$ and let $\Gamma$ be any min-$k$-planar drawing of $G$ with $\ell$ light edges. Since no two heavy edges cross, the subgraph of $G$ consisting of all heavy and crossing-free edges in $\Gamma$ has at most $3n-6$ edges, hence $m \leq \ell + 3n - 6$. Let $G'$ be the subgraph of $G$ consisting of the $\ell$ light edges of $G$ only. By applying Ackerman's version of the Crossing Lemma to $G'$, one can prove that $\ell \leq 3.81\sqrt{k}\cdot n$. Therefore, $m \leq \ell + 3n - 6 \leq \ell +3n \leq (3.81\sqrt{k} + 3) \cdot n$.        
\end{proof}


\subsection{Density of Min-1-planar graphs}\label{sse:edge-density-min-1}

Let~$\Gamma$ be a min-1-planar drawing of a graph $G$. We color each edge of $E(G)$ either red or green with the following rule: $(i)$ edges that are crossing-free in $\Gamma$ are colored red; $(ii)$ if $\{e_1,e_2\} \in E(G)$ is a pair of edges that cross in $\Gamma$, with $\cross(e_1) \geq \cross(e_2)$, we color $e_1$ as green and $e_2$ as red (if $\cross(e_1) = \cross(e_2) = 1$, the red edge is chosen arbitrarily). 
Note that, since $\Gamma$ is a min-1-planar drawing, each red edge is crossed at most once, hence the above coloring rule is well-defined. In particular, heavy edges are always colored green, while if two light edges cross, one is colored green and the other is colored red.
Hence, the subgraph induced by the red edges is a plane graph, called the \emph{red subgraph of $G$ defined by $\Gamma$}, or simply the \emph{red subgraph of $\Gamma$}. The following lemma is proved in \Cref{se:app-edge-density}.

\begin{restatable}{lemma}{ledensityminonesupport}\label{le:density-min1-support}
Let $G$ be a simple graph and let $\Gamma$ be a min-1-planar drawing of $G$. We can always augment $\Gamma$ with edges in such a way that the new drawing is still min-1-planar and all faces of its red subgraph  have degree three.  
\end{restatable}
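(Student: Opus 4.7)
The plan is to prove the lemma by iteratively adding new red edges, one at a time, until every face of the red subgraph has degree three; each added red edge strictly decreases the number of non-triangular faces. The central technical step is, given a face $f$ of the red subgraph with $\deg(f)\geq 4$, to add an edge inside $f$ that is red in the augmented drawing.

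I first analyze the structure inside $f$. Since two red edges never cross (for any crossing pair, the coloring rule makes at least one of them green), the red subgraph is plane, so the boundary $\partial f$ consists only of red edges and real-vertices. In the interior of $f$, portions of green edges may appear; I call any maximal sub-arc of a green edge contained in $\bar f$ a \emph{green chord} of $f$. Each green chord has its two endpoints on $\partial f$, and each endpoint is either a real-vertex of $G$ or a crossing point on a red boundary edge (each red boundary edge carries at most one crossing, because red edges have at most one crossing in total). Crucially, two green chords inside $f$ are pairwise non-crossing, since two green edges never cross in a min-1-planar drawing.

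Given this structure, the augmentation step would pick two real-vertices $u,v$ on $\partial f$ and draw a new simple arc inside $f$ from $u$ to $v$ crossing at most one green chord. If the arc crosses zero green chords, the new edge has no crossings and is trivially red. If it crosses exactly one green chord, the new edge has exactly one crossing while the crossed green edge gains one more crossing (so it now has at least two crossings and remains heavy, hence green), and the coloring rule assigns the new edge red because it has strictly fewer crossings than its crossing partner. In either case, min-1-planarity is preserved and the new red edge subdivides $f$ into two smaller faces. If $u,v$ are already joined by an edge of $G$ drawn outside $f$, the new edge is added as a parallel copy; since the two parallel copies lie on opposite sides of $\partial f$, the augmented drawing is still simple.

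The main obstacle will be showing that in every face $f$ with $\deg(f)\geq 4$ such a pair $(u,v)$ and an arc crossing at most one green chord always exist. Since green chords are pairwise non-crossing, they partition $f$ into topological sub-regions, and I would need to prove that either (a)~some sub-region contains in its closure two distinct real-vertices of $\partial f$, yielding a crossing-free arc, or (b)~two sub-regions separated by a single green chord each contain in their closure a real-vertex, yielding an arc that crosses exactly that chord. I expect this to follow from a counting argument that uses $\deg(f)\geq 4$ together with the fact that each red boundary edge contributes at most one green-chord endpoint and that green chord endpoints at boundary crossings are therefore bounded by the boundary length. Some further care will be needed to handle degenerate cases (repeated real-vertices on $\partial f$, or configurations with many nested green chords whose endpoints cluster at a single real-vertex), and to verify termination of the iterative procedure.
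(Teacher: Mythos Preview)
Your overall strategy coincides with the paper's: iteratively add a red edge inside a non-triangular face that either crosses nothing or crosses exactly one green edge. The paper organizes this as two phases. Step~1 exhausts all crossing-free additions (your case~(a)). Step~2 then argues that any remaining face $f$ with $\deg(f)\ge 4$ must be a quadrilateral traversed by a \emph{single} green edge which, crucially, is \emph{not incident} to any vertex of~$f$; one then adds a diagonal crossing that green edge once (your case~(b)).

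What is missing in your proposal is exactly this reduction. The ``counting argument'' you sketch (at most $\deg(f)$ boundary crossings, hence a bounded number of chords) does not by itself produce a non-adjacent pair $(u,v)$ together with an arc crossing at most one chord; the paper instead proves three structural facts for a face that survives Step~1: (i)~some green edge traverses $f$; (ii)~no traversing green edge is incident to a vertex of $f$; (iii)~at most one green edge traverses $f$. Point~(ii) is essential and is a constraint you never address: if the green chord you cross happens to be incident to $u$ or to $v$, the new crossing violates the simple-drawing rule that adjacent edges do not cross. One small side remark: ``green'' is not synonymous with ``heavy'' (a green edge may have a single crossing, namely when it was the arbitrarily chosen member of a pair of once-crossed edges); your conclusion that the added edge becomes red is still correct, since after the addition the green edge has at least two crossings while the new edge has one.
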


We now prove a tight bound on the edge density of min-1-planar graphs.

\begin{restatable}{theorem}{thdensityminone}\label{th:density-min1}
Any $n$-vertex min-1-planar simple graph has at most $4n-8$ edges, and this bound is tight.
\end{restatable}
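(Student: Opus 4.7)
The plan is to combine \cref{le:density-min1-support} with a face-by-face analysis of how green edges cross the triangulation the lemma provides, and to observe that every optimal $1$-planar graph is min-$1$-planar in order to establish tightness. Let $\Gamma$ be any min-$1$-planar drawing of $G$. First, I would invoke the support lemma to augment $\Gamma$ to a min-$1$-planar drawing $\Gamma'$ whose red subgraph $R$ triangulates $V(G)$; since augmentation only adds edges, it suffices to bound $|E(\Gamma')|$. Because $R$ is a plane triangulation on $n$ vertices, $|E(R)|=3n-6$, so the task reduces to proving that the number $g$ of green edges in $\Gamma'$ satisfies $g \le n-2$.

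The coloring has three structural consequences that drive the count. Every crossing is red--green (no two red edges cross and no two green edges cross), each of the $3n-6$ red edges is crossed at most once, and each green edge with $k$ crossings is subdivided by red edges into $k+1$ \emph{segments}, one per triangular face of $R$ that it traverses. Hence the total number of segments equals $g+c$, where $c$ is the total number of crossings. Let $f_i$ denote the number of triangular faces of $R$ with exactly $i$ crossed sides, for $i \in \{0,1,2,3\}$. Then $f_0+f_1+f_2+f_3 = 2n-4$, and double-counting (crossing, face) incidences gives $f_1+2f_2+3f_3 = 2c$.

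The heart of the argument is a local analysis inside a triangular face $F=abc$. Three constraints govern the green segments inside $F$: $(i)$ simplicity of $G'$ rules out a green segment joining two corners of $F$, since those corners are already adjacent via a red edge of $R$; $(ii)$ the no-adjacent-crossings rule of simple drawings forbids a green segment incident to a crossing on side $uv$ from ending at $u$ or $v$; and $(iii)$ no two green segments may cross inside $F$. A short case analysis then shows that a face with $0$ crossed sides contains $0$ green segments, a face with $1$ crossed side contains exactly one (from the crossing to the opposite corner), a face with $2$ crossed sides contains exactly one (a transversal between the two crossings; the alternative with two vertex-to-crossing segments is ruled out by $(iii)$), and no face can have all three sides crossed, so $f_3=0$. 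Summing segments across faces gives $f_1+f_2 = g+c$; combining with the previous two identities yields $g+c = 2n-4-f_0 \le 2n-4$. Since each green edge has at least one crossing, $c \ge g$, so $2g \le g+c \le 2n-4$, giving $g \le n-2$ and hence $|E(G)| \le |E(\Gamma')| = 3n-6+g \le 4n-8$.

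The main obstacle is the local enumeration inside each triangle, especially the impossibility of a face with all three sides crossed: every valid configuration of segments in such a face (three segments from each crossing to the opposite corner, or one transversal between two crossings plus a vertex-to-opposite-corner segment for the third crossing) forces two green segments to cross inside the triangle, contradicting $(iii)$. Verifying this across all the symmetric subcases, while also confirming the exact per-face segment count in Cases $1$ and $2$, is the delicate step; once $f_3=0$ and the per-face counts are pinned down, the global bound follows by the elementary counting above.
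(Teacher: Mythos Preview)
Your argument is correct and follows the same overall strategy as the paper: invoke \cref{le:density-min1-support} to triangulate the red subgraph, then bound the number of green edges by a local analysis inside each red triangle, and cite optimal $1$-planar graphs for tightness. The paper's version is more concise---it simply observes that every green edge traverses at least two red faces and every red face is traversed by at most one green edge, so $g \le (2n-4)/2$---whereas your segment-counting with the $f_i$'s spells out the same double count in finer detail. One small caveat: your justification for~(i) appeals to simplicity of the augmented graph $G'$, but \cref{le:density-min1-support} explicitly allows the augmentation to introduce multi-edges, so $G'$ need not be simple; the conclusion that no green segment joins two corners of a red triangle still holds, but for a different reason---such a segment would be the entire green edge, which would then lie inside a single red face and have no crossings, contradicting its green colour.
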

\begin{proof}
    Let $\Gamma$ be a min-1-planar drawing of a simple graph $G$ with $n$ vertices. By \cref{le:density-min1-support}, we can augment $\Gamma$ (and hence $G$) with new
    edges, in such a way that the new drawing $\Gamma'$ (and the corresponding graph $G'$) is min-1-planar and its red subgraph $\Gamma'_r$ is a triangulated planar graph. Hence, $\Gamma'_r$ has exactly $3n-6$ edges and $2n-4$ faces. 
    Every green edge of $G'$ (which is also a green edge of~$G$) traverses at least two faces of $\Gamma'_r$. Also, since $\Gamma'$ is a min-1-planar drawing and the red subgraph has only triangular faces, each face of the red subgraph is crossed by at most one green edge. Hence the number of green edges is at most  
    $\frac{2n - 4}{2} = n-2$, and therefore $G'$ has at most $(3n-6) + (n-2) = 4n-8$ edges in total.
    Since $G$ is a subgraph of $G'$, then also $G$ has at most $4n-8$ edges.
    
    About the tightness of the bound, we recall that optimal 1-planar graphs with $n$ vertices (which are also min-1-planar) have $4n-8$ edges \cite{bodendiek,DBLP:journals/combinatorica/PachT97,MR0187232}.
\end{proof}

Plugging the bound of \Cref{th:density-min1} 
into the bound of \Cref{pr:heavy-min-k}, we immediately get that any min-1-planar drawing has at most $\frac{4}{3}n-\frac{8}{3}$ heavy edges. We considerably improve this bound in the next theorem.

\begin{restatable}{theorem}{thheavyminone}\label{th:heavy-min1}
Any $n$-vertex min-1-planar drawing has at most $\frac{2}{3}n -1$~heavy~edges. Further, there exist min-1-planar drawings with $\frac{2}{3}n - O(1) $ heavy edges.
\end{restatable}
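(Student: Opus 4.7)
The plan for the upper bound is to apply \cref{le:density-min1-support} to augment $\Gamma$ to a min-1-planar drawing $\Gamma'$ whose red subgraph $\Gamma'_r$ is a triangulation with exactly $2n-4$ triangular faces. Since the augmentation only adds crossing-free (and hence red) edges, the set of heavy edges is preserved.

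The first key observation is that heavy edges only cross red edges: two heavy edges cannot cross by the min-1-planarity condition, and if a heavy edge $h$ crossed a green light edge $e'$, then in the coloring rule applied to the pair $\{h,e'\}$ the heavier edge $h$ would be labeled green and $e'$ labeled red, contradicting the greenness of $e'$. Consequently, a heavy edge with $c \geq 2$ crossings traces a path of $c+1 \geq 3$ triangular faces of $\Gamma'_r$, one face per piece between consecutive crossings. The main technical step is to show that these face-paths are pairwise disjoint across different heavy edges, and that each heavy edge visits each face at most once; this allows us to pack the paths into triples of disjoint faces. Suppose a face $F$ hosts two heavy-edge portions (either from two distinct heavy edges or twice from the same heavy edge). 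Classify each visit as a \emph{middle} visit (entering and exiting $F$ through red edges of $F$, using two boundary crossings) or an \emph{endpoint} visit (one endpoint of the heavy edge is a vertex of $F$, using one boundary crossing). Two middle visits would use $2+2=4$ boundary crossings on $\partial F$, exceeding the $3$ allowed by the one-crossing-per-red-edge property. In each remaining configuration (middle vs endpoint, endpoint vs endpoint), a short planarity argument inside $F$ shows that the chord traced by the first visit separates the starting vertex or entry point of the second visit from the red edge through which the second visit must exit, forcing a crossing inside $F$; such a crossing is forbidden either by the no-crossing condition between two heavy edges or by the simplicity of the drawing. Summing the lengths of the face-paths over all heavy edges gives $3h \leq 2n-4$, hence $h \leq \frac{2}{3}n-1$.

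For the matching lower bound, the plan is to exhibit an explicit ``double-fan'' construction. Let $n = k+3$ and take the triangulation consisting of a path $v_0, v_1, \dots, v_k$, two apices $u$ (above the path) and $w$ (below the path), each joined to every $v_i$, and the diagonal $uw$ closing the outer face. For each admissible index $j$, add the edge $v_{3j}v_{3j+3}$ drawn through the three consecutive top triangles $(u, v_{3j+i}, v_{3j+i+1})$ with $i \in \{0,1,2\}$, crossing the spokes $uv_{3j+1}$ and $uv_{3j+2}$; symmetrically, add $v_{3j+1}v_{3j+4}$ routed through three consecutive bottom triangles, crossing $wv_{3j+2}$ and $wv_{3j+3}$. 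The top and bottom families use pairwise distinct vertex pairs and lie in disjoint topological regions (above vs.\ below the path), so no two added edges cross; each added edge has exactly two crossings with spokes and is therefore heavy. A direct count yields $\lfloor k/3 \rfloor + \lfloor (k-1)/3 \rfloor = \frac{2k}{3} - O(1) = \frac{2}{3}n - O(1)$ heavy edges, matching the upper bound asymptotically.

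The main obstacle will be establishing the face-disjointness claim. A naive boundary-crossing count alone rules out only the middle-middle configuration; the mixed and endpoint-endpoint configurations require a topological separation argument inside the triangle $F$ in order to force a crossing between the two portions of heavy edges and thereby derive the desired contradiction.
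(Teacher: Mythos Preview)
Your upper-bound argument follows exactly the paper's approach: augment via \cref{le:density-min1-support} to a min-1-planar drawing whose red subgraph is triangulated with $2n-4$ faces, observe that each heavy edge traverses at least three of these faces, and that no face is traversed by more than one heavy edge, yielding $3h \le 2n-4$. The paper states the last step tersely by reference to the proof of \cref{th:density-min1} (``each face of the red subgraph is crossed by at most one green edge''), while you supply the case analysis explicitly; your analysis is essentially correct, though one sub-case (two endpoint visits from the \emph{same} vertex of $F$) is actually resolved by the one-crossing bound on the opposite red edge rather than by the separation argument you describe.

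One small inaccuracy: the augmentation of \cref{le:density-min1-support} does \emph{not} add only crossing-free edges. Step~2 of its proof inserts a diagonal inside a red quadrilateral traversed by a single green edge, and that diagonal crosses the green edge once. What is true---and all you need---is that the added edges are red, the resulting drawing is still min-1-planar, and every edge that was heavy in $\Gamma$ remains heavy in $\Gamma'$ because crossing counts can only increase under augmentation. The paper phrases this as ``the number of heavy edges of $\Gamma'$ is not smaller than the one of $\Gamma$''.

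For the lower bound you give a different construction from the paper. The paper takes a planar pentagonalization and fills each pentagon with a pattern producing one heavy edge, so that the number of heavy edges equals the number of pentagonal faces, roughly $(2n-4)/3$. Your double-fan construction is equally valid and arguably more self-contained: it is fully explicit for every $n$, needs no appeal to the existence of pentagonalizations, and the verification of min-1-planarity is a direct local check on the spokes. Both constructions achieve $\tfrac{2}{3}n - O(1)$ heavy edges.
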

\begin{proof}
    Let $\Gamma$ be a min-1-planar drawing of a simple graph $G$ with $n$ vertices. As in the proof of \cref{th:density-min1}, by \cref{le:density-min1-support} we can augment $\Gamma$ with new red edges, in such a way that the new drawing $\Gamma'$ is min-1-planar and its red subgraph $\Gamma'_r$ has all faces of degree three. 
    Hence, $\Gamma'_r$ has exactly $3n-6$ edges and $2n-4$ faces. Clearly, the number of heavy edges of $\Gamma'$ is not smaller than the one of $\Gamma$. 
    By definition, every heavy edge of $\Gamma'$ is crossed at least twice, hence it traverses at least three faces of $\Gamma'_r$. As before, each face of the red subgraph is crossed by at most one heavy edge. Hence the number of heavy edges is at most  
    $\frac{2n - 4}{3} \leq \frac{2}{3}n -1 $.
    For the lower bound, we refer to the left part of \cref{fi:min-k-heavy-edge-lower-bound} in the appendix.
\end{proof}

\subsection{Density of Min-2-planar graphs}\label{sse:edge-density-min-2}

Proving a tight bound on the edge density of min-2-planar graphs is more challenging than for min-1-planar graphs. Observe that there are min-2-planar simple graphs with $5n-10$ edges, namely the optimal 2-planar graphs~\cite{DBLP:conf/compgeom/Bekos0R17}.
Each optimal 2-planar drawing consists of a subset of planar edges forming faces of size five (i.e., pentagons), and each face is filled up with five more edges that cross each other twice.
In the following we prove that 
$5n-10$ is also an upper bound to the number of edges of min-2-planar graphs.
To this aim, for any $k \geq 1$, we introduce a class of multi-graphs that generalize min-$k$-planar simple graphs.

\smallskip
Let $G$ be a (multi-)graph (without self-loops) and let $\Gamma$ be a (simple) drawing of $G$. A set of parallel edges of $G$ between the same pair of vertices is called a \emph{bundle} of $G$. We say that $\Gamma$ is \emph{bundle-proper} if for every bundle in $G$:
$(i)$ at most one of the edges of the bundle is involved in a crossing; and $(ii)$ $\Gamma$ has no face bounded only by two edges of the bundle (i.e., no face of $\Gamma$ is a 2-real bigon). 
We remark that, in the literature, two parallel edges that form a face of degree two are called \emph{homotopic}. Hence, property $(ii)$ is equivalent to saying that a \emph{bundle-proper} drawing does not contain homotopic parallel edges.

Graph $G$ is \emph{bundle-proper min-$k$-planar} if it admits a (simple) drawing $\Gamma$ that is both min-$k$-planar and bundle-proper. If $G$ has $n$ vertices and has the maximum number of edges over all bundle-proper min-$k$-planar $n$-vertex graphs, then we say that $G$ is a \emph{maximally-dense} bundle-proper min-$k$-planar graph. Consider a pair $(G,\Gamma)$, where 
$G$ is an $n$-vertex bundle-proper min-$k$-planar graph and $\Gamma$ is a bundle-proper min-$k$-planar drawing of $G$. We say that $(G,\Gamma)$ is a \emph{maximally-dense crossing-minimal bundle-proper min-$k$-planar pair} if $G$ is maximally-dense and $\Gamma$ has the minimum number of crossings over all bundle-proper min-$k$-planar drawings of maximally-dense bundle-proper min-$k$-planar $n$-vertex graphs. The proof of the next lemma is in \Cref{se:app-edge-density}.

\begin{restatable}{lemma}{lemaximalmintwoplanarprop}\label{le:maximal-min-2-planar-prop}
  Let $(G,\Gamma)$ be a maximally-dense crossing-minimal bundle-proper min-$k$-planar pair. These properties hold:
  \begin{inparaenum}[$(a)$]
    \item If a face~$f$ of $\Gamma$ contains two distinct real-vertices $u$ and $v$, then $f$ contains an edge $uv$.\label{edge_appears}
%
%
    \item For each face~$f$ of $\Gamma$,  $\deg_{\Gamma}(f) \geq 3$.\label{no_bigon}
    \item A face $f$ of $\Gamma$ with $\deg_{\Gamma}^r(f) \geq 3$ is a 3-real triangle.\label{3_real_triangle}
    
  \end{inparaenum}
\end{restatable}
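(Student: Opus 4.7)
My plan is to prove all three properties by contradiction via a unifying local edge-insertion argument that exploits the maximal density of $G$ (and the $2$-connectedness of $G$, which itself follows from maximality by a similar augmentation). The key recurring move is the following: whenever two real-vertices $u, v$ lie on the boundary of some face whose two boundary arcs between $u$ and $v$ each consist of at least $2$ $\Gamma$-edges, I may route a new crossing-free $G$-edge $uv$ through the interior of that face. The enlarged drawing remains min-$k$-planar (no new crossings), preserves bundle-proper condition (i) (the new edge has no crossings), and preserves condition (ii) (each subface produced by the split has degree at least $3$). This enlarges $G$, contradicting maximality.

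For (a), if no single $\Gamma$-edge joins $u$ to $v$ on $f$'s boundary, then each of the two old arcs of $f$ from $u$ to $v$ has length at least $2$, so the core move applies and maximality is contradicted. Hence a single crossing-free $\Gamma$-edge $uv$ must already appear on $f$'s boundary. Note that the argument in fact gives this \emph{strong} form (a single $\Gamma$-edge, i.e., a crossing-free $G$-edge, rather than any $G$-edge possibly broken by crossings), and this strong form will drive part (c).

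For (b), degree $1$ is impossible (self-loops are forbidden). If $\deg_\Gamma(f) = 2$, the two boundary $\Gamma$-edges share both endpoints $p, q$ and must belong to two distinct $G$-edges $e_1, e_2$ (else some $G$-edge would self-cross). If both $p, q$ are real-vertices, $f$ is a $2$-real bigon, violating bundle-properness. If exactly one of $p, q$ is real, then $e_1, e_2$ are adjacent at the real endpoint and cross at the other endpoint, violating the no-adjacent-crossing rule. If both are crossings, $e_1$ and $e_2$ cross each other at two distinct points, violating the at-most-one-crossing rule.

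For (c), set $h := \deg_\Gamma^r(f) \ge 3$. Since $G$ is connected, $f$ is simply connected; assuming $G$ is $2$-connected (by maximality), the boundary of $f$ is a simple cycle that visits its real-vertices in some cyclic order $u_1, \ldots, u_h$, separated by arcs of at least one $\Gamma$-edge each. If $h \ge 4$, pick $u_1, u_2, u_3, u_4$ in cyclic order: the strong form of (a) provides a single $\Gamma$-edge $u_1 u_3$ on the boundary cycle, but on a simple cycle a single $\Gamma$-edge between two real-vertices can only join cyclically consecutive real-vertices, and $u_2$ lies strictly between $u_1$ and $u_3$; contradiction. Hence $h = 3$, and the strong form of (a) places single $\Gamma$-edges $u_1 u_2, u_2 u_3, u_3 u_1$ on the boundary, forming a length-$3$ closed subwalk of a simple cycle; so the boundary of $f$ is exactly this triangle, whence $\deg_\Gamma(f) = 3$ and $f$ is a $3$-real triangle. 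The main obstacle I anticipate is establishing the strong form of (a) cleanly and setting up the cyclic-order framework by ruling out cut-vertex-induced repetitions on boundary walks: both stem from the same maximality-based augmentation but require careful bookkeeping.
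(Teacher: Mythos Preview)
Your proposal is correct and follows essentially the same approach as the paper: part~(a) by edge-insertion exploiting maximal density, part~(b) by a case analysis of bigons using simplicity and bundle-properness, and part~(c) by applying~(a) to non-adjacent real vertices on the boundary.

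One remark on the detour through $2$-connectedness: the paper's proof of~(a) avoids the ``two boundary arcs'' framing entirely. It simply routes a new copy of $uv$ through the interior of $f$ and observes that a $2$-real bigon could arise only if one of the resulting subfaces were bounded by the new edge together with a single old $\Gamma$-edge~$uv$---which is precisely what you assumed absent. This argument works regardless of whether the boundary of $f$ is a simple cycle, so you do not need to establish $2$-connectedness of $G$ beforehand, and the potential circularity you flag (using~(a)-style augmentation to get $2$-connectedness, while your version of~(a) presupposes the two-arc structure) disappears. For~(c) the paper is correspondingly terse: with~(a) available in this stronger form, one just notes that if $\deg_\Gamma(f)\ge 4$ then two real vertices on the boundary must fail to be joined by a boundary edge, contradicting~(a). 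Your more explicit cyclic-order argument reaches the same conclusion and is a fine way to spell this out.
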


To prove the upper bound we use \emph{discharging} techniques. See~\cite{ackerman:on,ackerman.tardos:on,bdd-23-wg,DBLP:journals/cjtcs/DujmovicGMW11} for previous works that use this tool.  
Define a \emph{charging function} $\ch: F(\Gamma) \rightarrow \mathbb{R}$ such that, for each $f \in F(\Gamma)$:
\begin{equation}\label{eq:initial-charge}
 \ch(f) = \deg_\Gamma(f) + \deg^r_\Gamma(f) - 4 = 2\deg^r_\Gamma(f) + \deg^c_\Gamma(f) - 4
\end{equation}
\noindent The value $\ch(f)$ is called the \emph{initial charge} of $f$. Using Euler's formula, it is not difficult to see that the following equality holds (refer to~\cite{ackerman.tardos:on} for details):
\begin{equation}\label{eq:initial-charge-sum}
    \sum_{f \in F(\Gamma)} \ch(f) = 4n - 8
\end{equation}
The goal of a discharging technique is to derive from the initial charging function $\ch(\cdot)$ a new function $\ch'(\cdot)$ that satisfies two properties: {\textsf{(C1)}} $\ch'(f) \geq \alpha \deg_\Gamma^r(f)$, for some real number $\alpha > 0$; and
{\textsf{(C2)}} $\sum_{f \in F(\Gamma)} \ch'(f) \leq \sum_{f \in F(\Gamma)} \ch(f)$. 

\smallskip
\noindent If $\alpha > 0$ is a number for which a function $\ch'(\cdot)$ satisfies $\textsf{(C1)}$ and $\textsf{(C2)}$, by \cref{eq:initial-charge-sum} we get:
$4n-8 = \sum_{f \in F(\Gamma)} \ch(f) \geq \sum_{f \in F(\Gamma)} \ch'(f) \geq \alpha \sum_{f \in F(\Gamma)} \deg_\Gamma^r(f)$.
Also, since $\sum_{f \in F(\Gamma)} \deg_\Gamma^r(f) = \sum_{v \in V(G)} \deg_G(v) = 2m$, we get the following:
\begin{equation}\label{eq:modified-charge}
    m \leq \frac{2}{\alpha}(n-2)
\end{equation}
Thus, \cref{eq:modified-charge} can be exploited to prove upper bounds on the edge density of a graph for specific values of $\alpha$, whenever we find a charging function $\ch'(\cdot)$ that fulfills \textsf{(C1)} and \textsf{(C2)}. We prove the following.


\begin{restatable}{theorem}{thdensitymintwo}\label{th:density-min2}
Any $n$-vertex min-2-planar simple graph has at most $5n-10$~edges, and this bound is tight. 
\end{restatable}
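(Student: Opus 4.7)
The plan is to apply the discharging framework set up immediately above the theorem with parameter $\alpha = 2/5$: a charging function $\ch'(\cdot)$ satisfying $\textsf{(C1)}$ and $\textsf{(C2)}$ then yields, via~\eqref{eq:modified-charge}, exactly the bound $m \leq \tfrac{2}{\alpha}(n-2) = 5n-10$. I would work with a maximally-dense crossing-minimal bundle-proper min-2-planar pair $(G,\Gamma)$ and let the bound transfer to simple graphs, since every simple min-2-planar graph is in particular bundle-proper min-2-planar. For tightness I would invoke the optimal 2-planar construction of Bekos, Kaufmann, and Raftopoulou~\cite{DBLP:conf/compgeom/Bekos0R17}, which realises $5n-10$ edges as a simple 2-planar (hence min-2-planar) graph.

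Using \Cref{le:maximal-min-2-planar-prop} I would first enumerate the possible face types: writing $h = \deg^r_\Gamma(f)$ and $c = \deg^c_\Gamma(f)$, the admissible $(h,c)$-pairs are $(0,c)$ with $c \ge 3$, $(1,c)$ with $c \ge 2$, $(2,c)$ with $c \ge 1$, and $(3,0)$. Comparing the initial charge $\ch(f) = 2h + c - 4$ to the target $2h/5$, exactly two face types are deficient: 0-real triangles (charge $-1$, deficit $1$) and 1-real triangles (charge $0$, deficit $2/5$). All remaining types have nonnegative surplus; moreover, 3-real triangles are bounded by crossing-free edges and so share no side with any deficient face, meaning their surplus of $4/5$ must not be relied upon. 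I would therefore send charge only across crossing-bounded sides, guided by two structural observations to be derived from min-2-planarity: (i) no two 0-real triangles share a side, because the two edges crossing the shared segment would then meet at both triangle apexes, yielding two crossings of the same pair and violating simplicity; and (ii) the four neighbours of a 0-real quadrilateral contain at most two 0-real triangles, and these must lie on consecutive (not opposite) sides, by the same two-edges-cross-twice argument applied across the quadrilateral.

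With these observations, rules of the form \emph{``each 0-real triangle draws $1/3$ across each of its three sides; each 1-real triangle draws $1/5$ across each of its two crossing-incident sides''} can be tuned, and then $\textsf{(C1)}$ verified face by face. The main obstacle I anticipate is precisely this verification at small-surplus faces, notably 0-real quadrilaterals (surplus $0$) and 2-real triangles (surplus $1/5$), which can still abut deficient triangles. For these configurations one either rules out the simultaneous occurrence of multiple deficient neighbours using the simple-drawing, crossing-minimality, and bundle-proper assumptions, or refines the rule so that the two flanks of a heavy edge---which by min-2-planarity can only be crossed by light edges but each carry high $\deg^c_\Gamma$---collectively supply the needed compensating charge from further-away high-surplus faces. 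Balancing these local deficits against the available surpluses, while ensuring the rules remain local enough to be checked in a finite case analysis, is the technical core of the argument.
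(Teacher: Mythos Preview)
Your overall framework is correct and matches the paper: reduce to a maximally-dense crossing-minimal bundle-proper min-$2$-planar pair, discharge with target $\alpha=2/5$, and invoke optimal $2$-planar graphs for tightness. You also correctly identify that only $0$-real and $1$-real triangles are deficient. Where your proposal diverges from the paper---and where there is a genuine gap---is in the discharging rules themselves.

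For $0$-real triangles the paper does \emph{not} pull a uniform $1/3$ through each side. Instead it exploits the min-$2$ condition directly: the three edges $\overline{e}_1,\overline{e}_2,\overline{e}_3$ forming a $0$-real triangle are pairwise crossing, so at least two of them, say $\overline{e}_2,\overline{e}_3$, have exactly two crossings in total. This pins down the local picture completely: by \cref{le:maximal-min-2-planar-prop}\ref{edge_appears} the endpoints of $\overline{e}_2,\overline{e}_3$ bound a $4$-cycle, and the $0$-real triangle is flanked by a specific $2$-real quadrilateral $f_1$ (excess $6/5$) and a specific $2$-real triangle $f_2$ (excess $1/5$). The charge of $1$ is taken as $4/5$ from $f_1$ and $1/5$ from $f_2$. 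Your rule, by contrast, would ask each of the three side-neighbours for $1/3$; but the side opposite the heavy edge $\overline{e}_1$ can abut a $0$-real pentagon or a $1$-real quadrilateral that is simultaneously being drained by $1$-real triangles, and you give no mechanism to balance that.

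The more serious gap is at $1$-real triangles. Pulling $1/5$ across each $1$-real side cannot work as stated, because the face across a $1$-real side of a $1$-real triangle can itself be a $1$-real triangle (two edges from the same real vertex crossing a common heavy edge), which has zero charge and its own deficit of $2/5$. The paper's fix is to send the demand of $2/5$ through the \emph{single $0$-real side} instead, along a \emph{demand path} of $0$-real quadrilaterals (each with surplus~$0$, so they merely transmit) until it reaches a face $f_p$ of degree $\ge 4$ that is not a $0$-real quadrilateral; one then argues that $f_p$ has excess strictly greater than $2/5$. Even this is not the end: the same $f_p$ may receive several demands, and the paper needs a further ``supporting face'' lemma---locating a nearby $2$-real triangle or $2$-real quadrilateral via a curve argument through consecutive demanded edges---to top up $1$-real quadrilaterals, $0$-real pentagons, and $0$-real hexagons that got over-drained. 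None of this machinery is present in your sketch, and your final paragraph (``refines the rule so that the two flanks of a heavy edge \ldots\ supply the needed compensating charge from further-away high-surplus faces'') gestures in the right direction but does not supply the mechanism.
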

\begin{proof}[Sketch]
  We already observed that there exist min-2-planar simple graphs with $5n-10$ edges (e.g., the optimal 2-planar graphs). We now prove that min-2-planar simple graphs have at most $5n-10$ edges. Since any simple graph is also bundle-proper, we can show that the bound holds more in general for bundle-proper min-2-planar. Also, we can restrict to maximally-dense bundle-proper min-2-planar graphs, and in particular to crossing-minimal drawings. Let $(G,\Gamma)$ be any maximally-dense crossing-minimal bundle-proper min-2-planar pair, with $|V(G)|=n$.
  We show the existence of a charging function $\ch'(\cdot)$ that satisfies \textsf{(C1)} and \textsf{(C2)} for $\alpha = \frac{2}{5}$, so the result will follow from \cref{eq:modified-charge}.

  Consider the initial charging function $\ch(\cdot)$ defined in \cref{eq:initial-charge}. For each type of triangle $t$ we analyze the value of $\ch(t)$ and the deficit/excess w.r.t. $\frac{2}{5}\deg_\Gamma^r(t)$.
  If $t$ is a $0$-real triangle, $\ch(t)=-1 < 0 = \tfrac{2}{5}\deg_{\Gamma}^r(t)$, thus $t$ has a deficit of $1$. 
  If $t$ is a $1$-real triangle, $\ch(t)=0 < \tfrac{2}{5}=\tfrac{2}{5}\deg_{\Gamma}^r(t)$, thus $t$ has a deficit of $\frac{2}{5}$.
  If $t$ is a $2$-real triangle, $\ch(t)=1 > \tfrac{4}{5}=\tfrac{2}{5}\deg_{\Gamma}^r(t)$, thus $t$ has an excess of $\frac{1}{5}$.
  If $t$ is a $3$-real triangle, $\ch(t) = 2 > \tfrac{6}{5} = \tfrac{2}{5}\deg_{\Gamma}^r(t)$, thus $t$ has an excess of $\frac{4}{5}$.
  
  Also, if $f$ is any face of $\Gamma$ with $\deg_{\Gamma}(f) \geq 4$, then $\ch(f)=2\deg_{\Gamma}^r(f)+\deg_{\Gamma}^c(f)-4 = \deg_{\Gamma}(f)-4 + \deg_{\Gamma}^r(f) \geq \deg_{\Gamma}^r(f) \geq \tfrac{2}{5}\deg_{\Gamma}^r(f)$.
  Therefore $\ch(\cdot)$ only fails to satisfy \textsf{(C1)} at $0$-real and $1$-real triangles.  We begin by setting $\ch'(f)=\ch(f)$ for each face $f$ of $\Gamma$ and we explain how to modify $\ch'(\cdot)$ in such a way that $\ch'(f) \geq \frac{2}{5}\deg_{\Gamma}^r(f)$ for each face $f \in F(\Gamma)$, thus satisfying \textsf{(C1)}, and such that the total charge remains the same, thus satisfying \textsf{(C2)}. 
  
  \paragraph{Fixing $0$-real triangles.}
  Let $t$ be a $0$-real triangle in $\Gamma$ with edges $e_1$, $e_2$, and $e_3$.  Refer to \cref{fi:0-real-triangles}.  The edges $\overline{e}_1$, $\overline{e}_2$ and $\overline{e}_3$ are three pairwise crossing edges of $G$. Since $\Gamma$ is a simple drawing, $\overline{e}_1$, $\overline{e}_2$ and $\overline{e}_3$ are independent edges of $G$ (i.e., their six end-vertices are all distinct). Also, since $\Gamma$ is min-2-planar, at least two of these three edges, say $\overline{e}_2$ and $\overline{e}_3$, do not cross other edges of $G$ in $\Gamma$. This implies that each of the two end-vertices of $\overline{e}_2$ shares a face with an end-vertex of $\overline{e}_3$. Hence, by      
  \cref{le:maximal-min-2-planar-prop}(\ref{edge_appears}), the four vertices of $\overline{e}_2$ and $\overline{e}_3$ form a $4$-cycle $e'\overline{e}_2e''\overline{e}_3$ in $G$ and $\Gamma$ contains a $2$-real quadrilateral $f_1$ bounded by portions of $e''$, $\overline{e}_1$, $\overline{e}_2$, $\overline{e}_3$, and a $2$-real triangle $f_2$ bounded by portions of $e'$, $\overline{e}_2$, $\overline{e}_3$. 

\begin{figure}[tb]
    \centering
    \subfigure[]{\includegraphics[page=1,height=.2\textwidth]{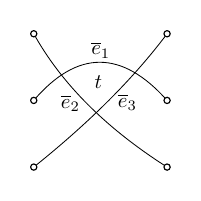}\label{fi:0-real-triangles-a}}
    \hfil
    \subfigure[]{\includegraphics[page=2,height=.2\textwidth]{0-real-triangles}\label{fi:0-real-triangles-b}}
    \hfil
    \subfigure[]
    {\includegraphics[page=3,height=.2\textwidth]{0-real-triangles}\label{fi:0-real-triangles-c}}
    \hfil
    \subfigure[]{\includegraphics[page=4,height=.2\textwidth]{0-real-triangles}\label{fi:0-real-triangles-d}}
    \caption{(a) A 0-real triangle $t$. (b) A 2-real quadrilateral $f_1$ and a 2-real triangle $f_2$ neighboring $t$. (c) The initial charges. (d) The charges after a redistribution.}
    \label{fi:0-real-triangles}
\end{figure}
  
  The charge of $f_1$ is $\ch'(f_1)=2$, with an excess of $\frac{6}{5}$ w.r.t. $\frac{2}{5}\deg_\Gamma^r(f_1)=\frac{4}{5}$. The charge of $f_2$ is $\ch'(f_2)=1$, with an excess of $\frac{1}{5}$ w.r.t. $\frac{2}{5}\deg_\Gamma^r(f_2)=\frac{4}{5}$. We reduce $\ch'(f_1)$ by $\frac{4}{5}$, reduce $\ch'(f_2)$ by $\frac{1}{5}$, and increase $\ch'(t)$ by $1$.
  After that, the total charge is unchanged and all the three faces $t$, $f_1$, and $f_2$ satisfy \textsf{(C1)}. Namely, $\ch'(t)=0$ (it has no deficit/excess), $\ch'(f_1)=\frac{6}{5}$ (it has an excess of $\frac{2}{5}$), and $\ch'(f_2)=\frac{4}{5}$ (it has no deficit/excess).   
  In the remainder of the proof, we call each of the faces $f_1$ and $f_2$ a \emph{$0$-real triangle-neighboring face}.  Each $0$-real triangle-neighboring face that is a $2$-real triangle (as $f_2$) shares its unique crossing vertex with a $0$-real triangle; each $0$-real triangle-neighboring face that is a $2$-real quadrilateral (as $f_1$) shares its unique $0$-real edge with a $0$-real triangle.
  
  \paragraph{Fixing $1$-real triangles.}
  Let $t$ be a $1$-real triangle, with real-vertex $v_1$ and crossing-vertices $v_2$ and $v_3$.  Refer to \cref{fi:1-real-triangles} for an illustration. Let $e_0=v_2v_3$ be the $0$-real edge of $t$, and let $f_1$ be the face of $\Gamma$ that shares $e_0$ with $t$. If $f_1$ is a $0$-real quadrilateral, denote by $e_1$ the $0$-real edge of $f_1$ not adjacent to $e_0$, and by $f_2$ the face of $\Gamma$ that shares $e_1$ with $f_1$. If $f_2$ is a $0$-real quadrilateral, denote by $e_2$ the $0$-real edge of $f_2$ not adjacent to $e_1$, and by $f_3$ the face of $\Gamma$ that shares $e_2$ with $f_2$. We continue in this way until we encounter a face $f_p$ $(p \geq 1)$ that is not a $0$-real quadrilateral. This procedure determines a sequence of faces $f_0, f_1, f_2, \dots f_p$, and a sequence of $0$-real edges $e_0, e_1, \dots, e_{p-1}$ such that $f_0=t$, $f_i$ is a $0$-real quadrilateral for each $i \in \{1, \dots, p-1\}$, $f_p$ is not a $0$-real quadrilateral, and the faces $f_i$ and $f_{i-1}$ share edge $e_{i-1}$ $(i \in \{1, \dots, p\})$. 

  \begin{figure}[tb]
    \begin{center}
    \includegraphics[page=1,height=.25\textwidth]{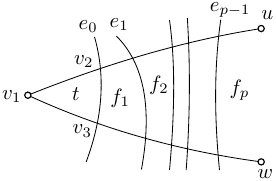}
    \end{center}
    \caption{The demand path of a $1$-real triangle $t$, ending at a face $f_p$.}
    \label{fi:1-real-triangles}
  \end{figure}
  
  Note that $\deg_\Gamma(f_p) \geq 4$. Namely, let $e=v_1v_2$ and $e'=v_1v_3$, and let $\overline{e}=v_1u$ and $\overline{e'}=v_1w$ be the edges of $G$ that contain $e$ and $e'$. Since $f_p$ has at least two crossing-vertices, if $f_p$ were a triangle then it would be either a $0$-real triangle or a $1$-real triangle.
  If $f_p$ were a $0$-real triangle then $\overline{e}$ and $\overline{e'}$ would cross in $\Gamma$, which is impossible as $\overline{e}$ and $\overline{e'}$ are adjacent edges and $\Gamma$ is a simple drawing. If $f_p$ were a $1$-real triangle then $u=w$, i.e., $\overline{e}$ and $\overline{e'}$ would be parallel edges both involved in a crossing, which is impossible as $\Gamma$ is bundle-proper. 
  Hence, $\deg_\Gamma(f_p) \geq 4$ and, as observed at the beginning of this proof, $\ch'(f_p) \geq \frac{2}{5} \deg_\Gamma^r(f_p)$. Also, the charge excess of $f_p$ is larger than $\frac{2}{5}$. Namely, this excess is $x = 2\deg_\Gamma^r(f_p) + \deg_\Gamma^c(f_p)-4 - \frac{2}{5}\deg_\Gamma^r(f_p) = \deg_\Gamma(f)  + \frac{3}{5}\deg_\Gamma^r(f) - 4$.
  If $f_p$ has no real-vertices, it must have at least five crossing-vertices (as $f_p$ is not a $0$-real quadrilateral), which implies $x \geq 1 > \frac{2}{5}$. If $f_p$ has at least one real-vertex then $x \geq \frac{3}{5} > \frac{2}{5}$.
  
  Therefore, the idea is to fill the $\frac{2}{5}$ charge deficit of $t$ by moving an equivalent amount of charge from $f_p$ to $t$. We say that  
  $t$ \emph{demands from $f_p$ through edge $e_{p-1}$} a charge amount of $\tfrac{2}{5}$. We call $f_0,\ldots,f_p$ (which is a path in the dual of $\Gamma$) the \emph{demand path} for $t$. Hence, for each $1$-real triangle $t$ of $\Gamma$ whose demand path ends at a face $f=f_p$, we decrease $\ch'(f)$ by $\tfrac{2}{5}$ and increase $\ch'(t)$ from $0$ to $\tfrac{2}{5}$. Note that $f$ cannot be a $0$-real triangle-neighboring face. Indeed, $f$ is not a triangle, and if $f$ is a $2$-real quadrilateral then its $0$-real edge is shared either with a $0$-real quadrilateral or directly with the $1$-real triangle $t$. It follows that the set of faces whose charge is affected by fixing $1$-real triangles does not intersect with the set of faces whose charge is affected by fixing $0$-real triangles.
  
  From the reasoning above, after we have fixed all 1-real triangles, we may have problems only if multiple $1$-real triangles demanded from the~same~face~$f$. In this case, $f$ might no longer satisfy \textsf{(C1)}. 
  The rest of the proof (see \Cref{se:app-edge-density}) shows which faces may be in this situation and how to fix their charge.
\end{proof}

Combining 
\Cref{th:density-min2} with \Cref{pr:heavy-min-k} we immediately get that any min-2-planar drawing has at most $2n-4$ heavy edges.
The next theorem considerably improves this bound by exploiting discharging techniques (see \Cref{se:app-edge-density}).

\begin{restatable}{theorem}{thheavymintwo}\label{th:heavy-min2}
Any $n$-vertex min-2-planar drawing has at most $\frac{6}{5}(n-2)$ heavy edges. Further, there exist min-2-planar drawings with $n-O(1)$ heavy edges.
\end{restatable}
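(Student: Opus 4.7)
My plan is to adapt the discharging argument of \cref{th:density-min2} so that the charging inequality tracks heavy edges rather than all edges. As a first step, adding new edges inside the faces of $\Gamma$ can only raise the crossing count of pre-existing edges, so every originally heavy edge stays heavy; hence, as in the proof of \cref{th:density-min2}, I may assume that $(G,\Gamma)$ is a maximally-dense, crossing-minimal, bundle-proper min-$2$-planar pair obeying \cref{le:maximal-min-2-planar-prop}.

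I retain the initial charge $\ch(f)=\deg_\Gamma(f)+\deg^r_\Gamma(f)-4$, which still sums to $4n-8$. For each face $f$, let $\beta(f)$ denote the number of \emph{heavy corners} of $f$, i.e.\ traversals of a real-vertex along $\partial f$ whose adjacent boundary piece belongs to a heavy edge. Since each heavy edge $e=uv$ contributes exactly two such corners at $u$ and two at $v$ (one per flanking face at each endpoint), one has
\[
  \sum_{f\in F(\Gamma)}\beta(f)\;=\;4h.
\]
The goal is to construct a redistribution $\ch'(\cdot)$ that preserves the total charge and satisfies $\ch'(f)\geq\tfrac{5}{6}\,\beta(f)$ for every face $f$; combining this with the identity above yields $\tfrac{10}{3}\,h\leq 4n-8$, hence $h\leq\tfrac{6}{5}(n-2)$.

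The redistribution rules will mirror those of \cref{th:density-min2}. Faces with $\beta(f)=0$ only need $\ch'(f)\geq 0$, and their deficits at $0$- and $1$-real triangles are still filled via demand paths to larger neighbouring faces, exactly as in the density proof. Faces with $\beta(f)>0$ require stricter lower bounds, but because two heavy edges cannot cross, any face carrying several heavy corners is forced to be large in degree and to sit next to $2$-real quadrilaterals or faces of degree at least four which hold excess charge that can be routed to it. The chief obstacle will be the interplay between density-style and heavy-corner demands: one large face may have to both feed a deficit $1$-real triangle (for condition \textsf{(C1)} of the density argument) and support adjacent heavy corners simultaneously. Showing that all such demands can be served in parallel requires a careful case analysis around each heavy edge, steered by \cref{le:maximal-min-2-planar-prop}(a)--(c) and the structural rigidity forced by heavy edges being pairwise non-crossing.

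For the lower bound, I would exhibit the construction illustrated in the right-hand side of \cref{fi:min-k-heavy-edge-lower-bound} in the appendix: a linear-size family of drawings obtained by stacking copies of a small gadget inside a planar triangulation, so that each gadget contributes one heavy edge per $O(1)$ additional vertices, yielding $n-O(1)$ heavy edges.
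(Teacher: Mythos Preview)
Your proposal is a plan rather than a proof, and it diverges substantially from the paper's argument. The paper does \emph{not} work inside a maximally-dense crossing-minimal min-$2$-planar drawing. Instead it deletes the heavy edges (and any light edge left with two crossings) to obtain a $1$-planar drawing, triangulates that drawing so that only $2$-real and $3$-real triangles remain, and then tracks how each heavy edge threads through this triangulation: start segments of a heavy edge are charged $1$ and intermediate segments $\tfrac{2}{3}$, and short structural claims about how many heavy edges can traverse each triangle show that every heavy edge receives at least $\tfrac{10}{3}$ from the total charge $4n-8$. The simplification to a $1$-planar base is the key device; it eliminates $0$-real and $1$-real faces entirely and reduces the case analysis to a handful of easy claims.

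Your route has two genuine gaps. First, you cannot simultaneously assume ``maximally dense'' and ``crossing-minimal'' while preserving the heavy edges you are trying to count. Adding crossing-free edges to $\Gamma$ keeps heavy edges heavy and pushes $G$ toward maximal density, but the crossing-minimality used in the density proof (e.g.\ in the claim about the curve $C$ inside the ``Fixing faces that received multiple demands'' step) is a global optimality assumption over \emph{all} drawings of \emph{all} maximally-dense graphs; passing to such a pair may change both the graph and the number of heavy edges. So the parts of \cref{th:density-min2} that rely on crossing-minimality are unavailable to you. Second, the heart of your plan---showing $\ch'(f)\ge\tfrac{5}{6}\beta(f)$ for every face---is not carried out at all. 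Already a $1$-real triangle whose two $1$-real edges are initial segments of heavy edges has $\beta=2$ and initial charge $0$, a deficit of $\tfrac{5}{3}$ rather than the $\tfrac{2}{5}$ handled in the density proof; you would need a new and considerably more involved discharging scheme, and you have not supplied one.

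For the lower bound, the paper's min-$2$-planar construction is the \emph{middle} panel of \cref{fi:min-k-heavy-edge-lower-bound} (hexagonal tiles, two heavy edges per hexagon, yielding roughly $n$ heavy edges), not the right-hand panel, which is the min-$3$-planar example. Your description ``one heavy edge per $O(1)$ additional vertices'' gives $\Theta(n)$ heavy edges, but not $n-O(1)$ without specifying that the ratio is~$1$; this should be made precise.
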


\subsection{Density of Min-3-planar graphs}\label{sse:edge-density-min-3}

For the family of min-3-planar graphs we consider graphs that can contain non-homotopic parallel edges. Indeed, it is known that $n$-vertex 3-planar graphs that are simple have at most $5.5 n - 15$ edges~\cite{DBLP:journals/combinatorica/PachT97}, but this bound is not tight. On the other hand, a tight upper bound is known for 3-planar graphs that can contain non-homotopic multiple edges, namely $5.5 n - 11$~\cite{DBLP:journals/dcg/PachRTT06}.    
We give upper bounds on the edge density and on the density of the heavy edges in min-3-planar graphs. 
The proofs still exploit discharging techniques (see \Cref{se:app-edge-density}).

\begin{restatable}{theorem}{thdensityminthree}\label{th:density-min3}
Any min-$3$-planar graph with $n$ vertices has at most $6n-12$ edges. 
\end{restatable}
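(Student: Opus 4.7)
The plan is to adapt the discharging argument from the proof of \cref{th:density-min2} to the min-$3$-planar setting. Since the paper allows non-homotopic parallel edges in min-$3$-planar graphs, I would work with a maximally-dense crossing-minimal bundle-proper min-$3$-planar pair $(G,\Gamma)$; \cref{le:maximal-min-2-planar-prop} still applies, as it is stated for general $k$. Keep the same initial charging $\ch(f) = 2\deg^r_\Gamma(f) + \deg^c_\Gamma(f) - 4$, whose total is $4n-8$ by \cref{eq:initial-charge-sum}. The new target is $\ch'(f) \geq \tfrac{1}{3}\deg^r_\Gamma(f)$ for every face, so that \cref{eq:modified-charge} with $\alpha = \tfrac{1}{3}$ yields the desired bound $m \leq 6n-12$.

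Under this target, the only faces with initial deficit are $0$-real triangles (deficit $1$) and $1$-real triangles (deficit $\tfrac{1}{3}$). All remaining faces have non-negative surplus: $2$-real triangles have surplus $\tfrac{1}{3}$, $3$-real triangles have surplus $1$, $0$-real quadrilaterals have surplus $0$, and any face with $\deg_\Gamma(f) \geq 4$ has surplus $\deg_\Gamma(f) + \tfrac{2}{3}\deg^r_\Gamma(f) - 4$, which is at least $\tfrac{2}{3}$ whenever $\deg^r_\Gamma(f) \geq 1$ and at least $1$ whenever $\deg_\Gamma(f) \geq 5$. So only $0$-real triangles and $1$-real triangles require incoming charge, and every face not belonging to these types (and not a $0$-real quadrilateral) can serve as a potential donor.

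I would handle $1$-real triangles with the same demand-path construction as in \cref{th:density-min2}: starting from the $0$-real edge of a $1$-real triangle $t$, walk through consecutive $0$-real quadrilaterals until reaching a face $f_p$ that is not one. The same simplicity and bundle-properness argument rules out $f_p$ being a $0$-real or $1$-real triangle, so $\deg_\Gamma(f_p) \geq 4$ and $f_p$ has surplus at least $\tfrac{2}{3}$, from which $\tfrac{1}{3}$ is transferred to $t$. The main new effort is the treatment of $0$-real triangles. In the min-$2$-planar proof, at least two of the triangle's edges had exactly two crossings (both inside $t$), forcing a neighboring $2$-real quadrilateral $f_1$ and $2$-real triangle $f_2$ from which $1$ unit of charge could be pulled. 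In min-$3$-planar, the pairwise-crossing condition still forces at least two of the three edges to be light (otherwise two heavy edges would cross), but each light edge may now carry a third crossing outside $t$, loosening the neighboring structure. I would perform a case analysis on how many of these two light edges use their extra crossing: when neither does, the rigid structure from \cref{th:density-min2} reappears and the same redistribution supplies $t$ with $1$ unit (the surpluses of $f_1$ and $f_2$ under the new target $\tfrac{1}{3}\deg^r_\Gamma$ are even more comfortable than in the min-$2$-planar case); when one or both light edges carry a third crossing, inspect the additional crossing-vertex and the faces it induces near $t$ and show that those faces collectively carry enough surplus to donate $1$ unit to $t$.

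The main obstacle is precisely this case analysis for $0$-real triangles: the loss of the rigid two-crossing structure of light edges forces a careful geometric accounting to guarantee that the looser neighborhood still supplies the full unit deficit. After this, the argument concludes as in \cref{th:density-min2}: verify that the donor sets for $0$-real triangles and $1$-real triangles do not overlap pathologically, and that when a single face is donor to several triangles its total outgoing charge stays within its surplus (this check is gentler than in the min-$2$-planar case, since both the per-demand amount $\tfrac{1}{3}$ and the deficit $1$ of $0$-real triangles are accommodated by larger per-face surpluses). Once both \textsf{(C1)} and \textsf{(C2)} are established, the bound $m \leq 6n-12$ follows from \cref{eq:modified-charge}.
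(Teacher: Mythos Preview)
Your setup is correct and matches the paper: work with a maximally-dense crossing-minimal bundle-proper min-$3$-planar pair, use the same initial charge, and aim for $\alpha=\tfrac{1}{3}$. You also correctly identify $0$-real and $1$-real triangles as the only faces with an initial deficit.

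There is, however, a genuine gap in your plan, and it is not where you think the difficulty lies. You call the final donor-overlap check ``gentler than in the min-$2$-planar case''; in fact it is substantially harder, and the paper spends the bulk of the proof on it. The critical faces are the $0$-real pentagons. A $0$-real pentagon has surplus exactly $1$ under the target $\tfrac{1}{3}\deg^r_\Gamma$, and each of its five $0$-real edges can be the terminus of a demand path. Even if you treat $0$-real triangles by local structure rather than demand paths (so only $1$-real triangles demand $\tfrac{1}{3}$ along paths), a $0$-real pentagon receiving four or five such demands already goes negative, and min-$3$-planarity permits configurations where all five demand-path-neighbors are $1$-real triangles. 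The ``supporting face'' trick from the min-$2$ proof does not carry over directly, because light edges may now have three crossings and the neighborhood is far less constrained. The paper resolves this with a different mechanism: an additional discharging step in which every face distributes its remaining excess to $0$-real pentagons that are \emph{vertex-neighbors} (sharing a crossing-vertex but no edge), followed by a six-case analysis on the types of the five demand-path-neighbors of a $0$-real pentagon. This is the heart of the argument, and your proposal does not anticipate it.

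A secondary divergence: the paper does \emph{not} fix $0$-real triangles via the local $2$-real quadrilateral/$2$-real triangle structure you suggest (which becomes delicate once light edges can carry a third crossing outside $t$). Instead it extends the demand-path mechanism to $0$-real triangles themselves, letting each pull $\tfrac{1}{3}$ through each of its three $0$-real edges. This is cleaner for the $0$-real triangles but loads the $0$-real pentagons further, which is why the pentagon analysis dominates. Your alternative route for $0$-real triangles is not obviously wrong, but the case analysis you allude to is left entirely unspecified, and in any event you would still have to confront the $0$-real pentagon bottleneck. One more minor point: the paper first reduces to the case where the planarization of $\Gamma$ is $2$-connected (invoking a proposition of Ackerman), so that each face boundary is a simple cycle; you should include this reduction as well.
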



\begin{restatable}{theorem}{thheavyminthree}\label{th:heavy-min3}
Any $n$-vertex min-3-planar drawing has at most $2(n-2)$ heavy edges. Further, there exist min-3-planar drawings with $\frac{6}{5}n-O(1)$ heavy edges.
\end{restatable}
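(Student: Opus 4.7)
The plan is to prove the upper bound via a discharging argument on the planarization of $\Gamma$, adapting the technique used in the proof of \Cref{th:density-min2}, and to establish the lower bound via an explicit construction. Let $\Gamma$ be a min-$3$-planar drawing of an $n$-vertex graph $G$ with $h$ heavy edges, and let $\Gamma^*$ be its planarization. I reuse the initial charging function $\ch(f)=2\deg_{\Gamma^*}^r(f)+\deg_{\Gamma^*}^c(f)-4$ on every face $f\in F(\Gamma^*)$, so that $\sum_{f\in F(\Gamma^*)}\ch(f)=4n-8$ as in \eqref{eq:initial-charge-sum}. I design a discharging procedure that transfers charge from faces onto heavy edges, with the goal that every heavy edge receives at least $2$ units while every face retains non-negative residual charge; this will imply $2h\leq 4n-8$, hence $h\leq 2(n-2)$.

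The discharging proceeds in two rounds. In the first round, I perform a preliminary redistribution analogous to the one in the proof of \Cref{th:density-min2} to eliminate the initial deficits of $0$-real and $1$-real triangles; the structural properties used there (\Cref{le:maximal-min-2-planar-prop}) extend to min-$3$-planar maximally-dense crossing-minimal bundle-proper pairs, so after the usual reduction this step leaves every face with residual charge at least $\tfrac{2}{5}\deg_{\Gamma^*}^r(f)\geq 0$. In the second round, for each heavy edge $e$ I harvest at least $2$ units of charge from the faces incident to the segments of $e$ in $\Gamma^*$. Since $\cross(e)\geq 4$, the edge $e$ is subdivided into at least $5$ segments and is incident to at least $10$ face-slots; moreover, at every crossing on $e$ the four faces meeting at that crossing carry enough $\deg_{\Gamma^*}^c$-based charge to release approximately $\tfrac{1}{2}$ unit per crossing on $e$, and thus $e$ collects at least $\tfrac{1}{2}\cdot\cross(e)\geq 2$ units.

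For the lower bound, I exhibit a family of min-$3$-planar drawings containing $\tfrac{6}{5}n-O(1)$ heavy edges. The construction is a planar backbone tiled by small gadgets, each introducing $5$ new vertices and $6$ heavy edges; every heavy edge inside a gadget has exactly $4$ crossings against a dedicated set of local light edges, so the drawing is min-$3$-planar, and the overall count yields $\tfrac{6}{5}n-O(1)$ heavy edges. This generalizes the lower-bound constructions used for $k=1,2$ in \Cref{th:heavy-min1} and \Cref{th:heavy-min2}. The main obstacle in the upper-bound proof is verifying that no face ends with negative residual charge after the second round, in particular faces whose boundary is incident to several heavy-edge segments; this requires a careful local case analysis on the face types ($0$-real triangles, $1$-real triangles, $2$-real quadrilaterals, and higher-degree faces) and on the number of incident heavy crossings, which is analogous to but more intricate than the corresponding analysis in the proof of \Cref{th:heavy-min2}.
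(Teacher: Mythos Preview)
Your proposal has genuine gaps in both parts.

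\textbf{Upper bound.} The two-round scheme does not hold together. First, the round-one redistribution you invoke is literally the discharging from \Cref{th:density-min2}, which relies on structural facts specific to min-$2$-planar drawings (e.g.\ in a $0$-real triangle two of the three graph edges have exactly two crossings, forcing adjacent $2$-real faces); those facts fail in a min-$3$-planar drawing, and \Cref{le:maximal-min-2-planar-prop} does not by itself supply replacements. Second, and more seriously, even if round one succeeded in bringing every face to $\ch'(f)\ge\tfrac{2}{5}\deg_{\Gamma^*}^r(f)$, that target is tight for many faces: $0$-real faces end at charge~$0$, $1$-real triangles end at exactly~$\tfrac{2}{5}$, and so on. There is no reserved surplus to hand to heavy edges in round two. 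Your sentence ``the four faces meeting at that crossing carry enough $\deg_{\Gamma^*}^c$-based charge to release approximately $\tfrac{1}{2}$ unit per crossing'' is simply asserted; a $0$-real triangle abutting a crossing on $e$ has residual charge~$0$ after round one and can give nothing, and such triangles can appear all along a heavy edge.

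The paper avoids this trap by \emph{not} discharging on the planarization of $\Gamma$ at all. Instead it deletes the heavy edges (and some light edges with three remaining crossings) to obtain a $2$-planar drawing, augments that drawing so that inserting a light edge near each crossing controls how many heavy edges can traverse a face, and then charges the faces of this simpler drawing. Because that drawing has no negatively-charged faces to fix first, the entire $4n-8$ is available for the heavy-edge segments; a careful case analysis of $0$-, $1$-, $2$-, $3$-real triangles, small quadrilaterals, pentagons, etc., shows each start segment can be paid $\tfrac{1}{4}$ and each intermediate segment $\tfrac{1}{2}$, so every heavy edge (at least two start segments and three intermediate segments) receives at least~$2$. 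Your plan is missing exactly this reduction step, and without it the discharging cannot balance.

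\textbf{Lower bound.} You only assert the existence of a $5$-vertex, $6$-heavy-edge gadget. Six pairwise non-crossing heavy edges on five vertices, each crossed at least four times by light edges that themselves have at most three crossings, is far from obvious and needs an explicit drawing. The paper's construction is different: it reuses the $8$-gon building blocks from \Cref{th:min3-3planar} (repeated between two poles $u,v$), counting six heavy edges per block to reach the $\tfrac{6}{5}n-O(1)$ bound.
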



\section{Relationships with $k$-planar Graphs}\label{se:inclusion-relationships}

%
The next theorem shows that while the family of min-1-planar graphs properly contains the family of 1-planar graphs, the two classes coincide when we restrict to optimal graphs, i.e., those with $4n-8$ edges.

\begin{restatable}{theorem}{thminoneoneplanar}\label{th:min1-1planar}
1-planar graphs are a proper subset of min-1-planar graphs, while
optimal min-1-planar graphs are optimal 1-planar.
\end{restatable}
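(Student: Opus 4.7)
My plan is to prove the two halves separately. The containment 1-planar $\subseteq$ min-1-planar is immediate from Definition \ref{def:mink}: in a 1-planar drawing $\Gamma$ every edge satisfies $\cross(e)\le 1$, hence $\min\{\cross(e),\cross(e')\}\le 1$ for every crossing pair, so $\Gamma$ is already a min-1-planar drawing. For the strict-inclusion half I plan to exhibit a concrete graph that is not 1-planar yet is min-1-planar. A natural place to look is among graphs whose edge count lies well below the $4n-8$ threshold of Theorem \ref{th:density-min1} but which are known to be non-1-planar; for instance small dense bipartite graphs just beyond the 1-planar regime, or small graphs obtained from an optimal 1-planar graph by a local modification that concentrates crossings on a single independent heavy edge.

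For the optimality assertion, let $G$ be a min-1-planar graph on $n$ vertices with $m = 4n-8$ edges, and let $\Gamma$ be any min-1-planar drawing of $G$. Color its edges red or green according to the rule of Section \ref{sse:edge-density-min-1}, and apply Lemma \ref{le:density-min1-support} to augment $\Gamma$ to a min-1-planar drawing $\Gamma'$ of a supergraph $G' \supseteq G$ whose red subgraph $\Gamma'_r$ is a triangulation of the sphere. Since $G \subseteq G'$ and $|E(G)| = 4n-8$, Theorem \ref{th:density-min1} forces $|E(G')| = 4n-8$ as well, so no edge is added by the augmentation: $G = G'$ and $\Gamma = \Gamma'$. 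Hence the red subgraph of $\Gamma$ is already a triangulation with $3n-6$ red edges, $2n-4$ triangular faces, and exactly $n-2$ green edges.

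The conclusion follows from a tight counting, pushing the inequalities of Theorem \ref{th:density-min1} to equality. A green edge is crossed only by red edges, because two heavy edges cannot cross in a min-1-planar drawing and in a crossing of two light edges the two partners receive opposite colors by the coloring rule. Therefore a green edge with $c_e$ crossings passes through exactly $c_e + 1$ triangular faces of $\Gamma'_r$. Moreover, as already established inside the proof of Theorem \ref{th:density-min1}, each triangular face of $\Gamma'_r$ is traversed by at most one green edge (otherwise at least one red side of that face would accumulate two crossings, and paired with one of the two green chords this would form a crossing whose both edges have $\cross \ge 2$, violating min-1-planarity). Summing face-incidences yields
\[
\sum_{e\text{ green}}(c_e+1) \;\le\; 2n-4,
\]
and since there are exactly $n-2$ green edges this gives $\sum_e c_e \le n-2$. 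Every green edge has $c_e \ge 1$, so equality holds throughout and $c_e = 1$ for every green edge. Consequently $\Gamma$ contains no heavy edge, is a 1-planar drawing of $G$, and $G$ is optimal 1-planar. The main obstacle I foresee is pinning down a clean non-1-planar witness for the strict-inclusion half; the optimality half is essentially the equality case of the counting already developed for Theorem \ref{th:density-min1}.
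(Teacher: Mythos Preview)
Your optimality argument is correct and is exactly the equality analysis of the counting in \cref{th:density-min1}; this is also what the paper does, only more tersely (``each green edge traverses exactly two faces of the red subgraph, hence crosses exactly once'').

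The strict-inclusion half, however, is not proved: you only describe where you would look for a witness and explicitly flag it as an open obstacle. The paper closes this gap with a one-line observation you are missing. Take any planar graph $H$ and any extra edge $e$; draw $H$ planarly and route $e$ arbitrarily. Then $e$ may accumulate many crossings, but every edge of $H$ that $e$ crosses has exactly one crossing, so the drawing is min-$1$-planar by \cref{def:mink}. Thus every graph of the form ``planar plus one edge'' is min-$1$-planar. On the other hand, by the \NP-hardness result of Cabello and Mohar for testing $1$-planarity of a planar graph plus a single edge, there exist such graphs that are \emph{not} $1$-planar. This gives the required separating witnesses without any explicit construction. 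Your suggested search directions (dense bipartite graphs near the threshold, local modifications of optimal $1$-planar graphs) could also work, but would require ad-hoc non-$1$-planarity proofs; the planar-plus-one-edge argument avoids that entirely.
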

\begin{proof}
Any 1-planar graph is min-1-planar. By the \NP-hardness of testing whether a given planar graph plus a single edge is 1-planar~\cite{DBLP:journals/siamcomp/CabelloM13}, we know that there are such graphs that are not 1-planar, while any planar graph that is extended by a single edge can be drawn min-1-planar. Hence, 1-planar graphs are a proper subset of min-1-planar graphs.  
Finally, as in the proof of \cref{th:density-min1}, in every optimal min-1-planar drawing the red subgraph is maximal planar and each green edge traverses exactly two faces of the red subgraph. Hence, each green edge crosses exactly once, i.e., the drawing is also (optimal) 1-planar.
\end{proof}

Unlike min-1-planar graphs, we show that min-2-planar graphs are a proper superset of the 2-planar graphs even when we restrict to optimal graphs.  

\begin{restatable}{theorem}{thmintwotwoplanar}\label{th:min2-2planar}
2-planar graphs are a proper subset of min-2-planar graphs,
and there are optimal min-2-planar graphs that are not optimal 2-planar.
\end{restatable}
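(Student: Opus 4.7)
My plan is to establish both claims of the theorem simultaneously, by exhibiting a family of min-$2$-planar graphs with $5n-10$ edges that are not $2$-planar. The easy direction, that every $2$-planar graph is min-$2$-planar, is immediate from \cref{def:mink}: in a $2$-planar drawing every edge has at most two crossings, so the minimum over any pair of crossing edges is trivially at most $2$. Hence it suffices to construct one $n$-vertex graph $G_n$ with $5n-10$ edges that is min-$2$-planar but not $2$-planar; this gives proper containment (first assertion) because $G_n$ lies in the larger class but not in the smaller one, and it gives the second assertion because, by \cref{th:density-min2}, a $2$-planar graph on $n$ vertices has at most $5n-10$ edges, so our $G_n$ cannot be $2$-planar at all.

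For the construction, my intention is to start from the canonical structure of an optimal $2$-planar graph (whose planar skeleton consists entirely of pentagonal faces, each filled with a pentagram of five chords that are pairwise crossing) and apply a small local surgery: merge a constant-sized block of adjacent pentagons into a larger region and refill it with the same number of edges but in a way that requires at least one pairwise-independent pair of heavy edges crossing the region. Concretely, I would draw a few chords that each cross three or more of the surrounding skeleton edges while all crossings they take part in involve a light edge, so the drawing is min-$2$-planar but not $2$-planar by construction. The global edge count would be kept at $5n-10$ by a careful accounting in the modified block, and the surgery can be repeated on disjoint pentagon-blocks to obtain the family $\{G_n\}$ for infinitely many~$n$.

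The core of the proof will be showing that $G_n$ is not $2$-planar, not merely that the constructed drawing $\Gamma_n$ fails to be $2$-planar. Here I would invoke the structural characterization of optimal $2$-planar graphs from \cite{DBLP:conf/compgeom/Bekos0R17}: any $2$-planar drawing achieving $5n-10$ edges has a planar skeleton whose faces are all pentagons, each filled by a pentagram. I would therefore arrange the local surgery so that $G_n$ contains a small induced substructure (e.g. around a designated vertex with a forbidden degree/neighborhood pattern) that cannot be embedded as part of any pentagon-plus-pentagram tiling of the sphere. A direct case analysis, enumerating how the offending vertex and its incident edges could sit inside such a tiling, then derives a contradiction.

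The main obstacle I anticipate is the verification that no $2$-planar drawing of $G_n$ whatsoever accommodates the modified local region, as opposed to just the drawing mimicking our $\Gamma_n$. Routine counting or girth-type arguments are unlikely to suffice because the global density already matches the $2$-planar bound exactly; the proof essentially has to force equality in the structural lemma and then exhibit an obstruction to it. If the structural lemma turns out to be too coarse, a fallback is to weaken the claim to a sufficiently small concrete example (computer-verified or hand-checked) and then replicate it across a skeleton to produce the infinite family.
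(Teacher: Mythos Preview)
Your outline is sound in principle and you correctly identify the hard step: certifying that the constructed graph admits \emph{no} $2$-planar drawing, not merely that the exhibited one fails. However, what you have is a plan, not a proof, and the plan leaves the two decisive choices unspecified: the concrete surgery and the concrete obstruction. The paper's proof makes both choices in a way that avoids any case analysis, and it is worth seeing how.

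For the construction, rather than performing surgery on an all-pentagon skeleton, the paper starts from the truncated icosahedral graph: a $3$-regular planar graph on $60$ vertices whose faces are twelve pentagons and twenty hexagons. Each pentagon is filled with the usual five chords (all light), and each hexagon is filled with seven chords: the six ``distance-two'' chords plus one long diagonal. In the resulting drawing the diagonal crosses two of the distance-two chords three times each, but those two heavy chords are independent of one another, so the drawing is min-$2$-planar. A direct count gives $90 + 12\cdot 5 + 20\cdot 7 = 290 = 5\cdot 60 - 10$ edges.

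For the obstruction, the paper does not invoke the pentagon-plus-pentagram structural lemma you cite, but a sharper consequence proved in \cite{DBLP:conf/gd/Forster0R21}: in any optimal $2$-planar graph, \emph{every vertex degree is a multiple of~$3$}. In the truncated-icosahedron construction each vertex lies on one pentagon and two hexagons; its degree is $3 + 2 + 2\cdot 2 = 9$ plus one for each hexagon in which it is a diagonal endpoint. Since diagonals exist, some vertices have degree $10$ or $11$, and the graph is immediately disqualified from being optimal $2$-planar---hence, having $5n-10$ edges, from being $2$-planar at all. This single divisibility check replaces the entire case analysis you anticipated.

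So your route could be made to work, but the paper's two concrete choices---a mixed pentagon/hexagon skeleton and the degree-mod-$3$ criterion---turn the argument into a short computation. (For the bare proper-containment claim the paper also notes, independently, that $K_{5,5}$ is min-$2$-planar but not $2$-planar.)
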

\begin{proof}[Sketch]
First observe that there exist non-optimal min-2-planar graphs that are not 2-planar. For example, $K_{5,5}$ is not 2-planar~\cite{DBLP:conf/gd/AngeliniB0S19,DBLP:journals/jgaa/AngeliniBKS20},  
while \cref{fi:K55} illustrates a min-2-planar drawing of $K_{5,5}$. 
%
%
\begin{figure}[tb]
    \centering
    \subfigure[]{\includegraphics[page=1,width=.3\textwidth]{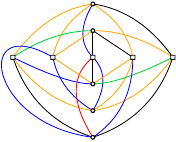}\label{fi:K55}}
    \hfil
    \subfigure[]{\includegraphics[page=1,width=.34\textwidth]{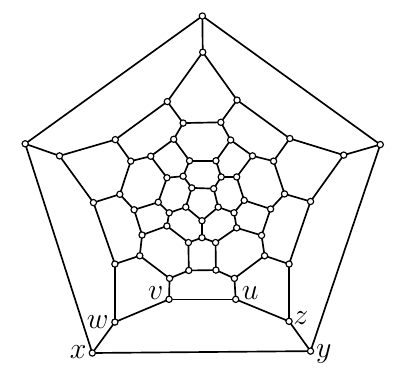}\label{fi:truncated-icosahedral-graph-a}}
    \hfil
    \subfigure[]{\includegraphics[page=2,width=.34\textwidth]{truncated-icosahedral-graph}\label{fi:truncated-icosahedral-graph-b}}
    \caption{
    (a) A min-2-planar drawing of $K_{5,5}$. (b) A planar drawing $\Gamma$ of the truncated icosahedral graph $G$. (c) A min-2-planar drawing $\Gamma'$ of the graph $G'$, obtained by adding 5 edges to each pentagonal face and 7 edges to each hexagonal face of $\Gamma$.}
    \label{fi:truncated-icosahedral-graph}
\end{figure}
To construct an optimal min-2-planar graph that is not 2-planar, start from the truncated icosahedral graph in \cref{fi:truncated-icosahedral-graph-a}, consisting of 12 pentagonal faces, 20 hexagonal faces, 60 vertices and 90 edges. Then enrich it with 5 edges inside each pentagonal face and 7 edges inside each hexagonal face, as in \cref{fi:truncated-icosahedral-graph-b}. The new graph is min-2-planar and has $5n-10$ edges. Conversely, it is not 2-planar as it contains vertices of degree 10 and 11, while it is known that optimal 2-planar graphs must have only vertices with degree multiple of three~\cite{DBLP:conf/gd/Forster0R21}. See \Cref{se:app-inclusion-relationships} for details.
\end{proof}

In contrast to 1- and 2-planar graphs, the maximum densities of 3-planar and min-3-planar graphs differ.

\begin{restatable}{theorem}{thminthreethreeplanar}\label{th:min3-3planar}
There are min-3-planar (non-simple) graphs denser than optimal 3-planar (non-simple) graphs.
\end{restatable}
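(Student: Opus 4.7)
The plan is to exhibit an infinite family of min-$3$-planar (non-simple) graphs whose edge density strictly exceeds $5.5n-11$, the tight upper bound for (non-simple) $3$-planar graphs proved by Pach, Radoičić, Tardos, and Tóth. The strategy mirrors the construction used in \Cref{th:min2-2planar}: pick a suitable planar base graph $H$, and fill each of its faces with a crossing gadget that is min-$3$-planar but locally denser than what is allowed in any $3$-planar drawing.

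A natural candidate for $H$ is a $3$-regular planar polyhedron of fullerene type: on $n$ vertices it has $3n/2$ planar edges, exactly $12$ pentagonal faces, and $n/2-10$ hexagonal faces, and such graphs exist for infinitely many $n$. Inside each pentagonal face I would insert the $5$ diagonals of a pentagram, exactly as in \Cref{th:min2-2planar}; each of the $5$ inserted edges then has exactly $2$ crossings and is therefore light. Inside each hexagonal face I would instead insert a larger gadget that truly exploits the min-$3$-planar flexibility: besides a collection of light diagonals, it contains one or more heavy edges (each with strictly more than $3$ crossings) routed so that every heavy edge crosses only light edges, so that the min-$3$-planarity condition is preserved. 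Since the setting is non-simple, this hexagonal gadget is allowed to use non-homotopic parallel multi-edges, and we exploit this freedom essentially.

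I would then execute the proof in the following order. First, give a precise description of the hexagonal gadget and verify its min-$3$-planarity by a direct case analysis: for every pair of crossing edges inside the gadget, at least one is light. Second, observe that every crossing of the global drawing lies strictly inside a single face of $H$, so the min-$3$-planarity of the whole drawing reduces to that of each gadget in isolation. Third, sum the edge contributions: $3n/2$ planar edges, $60$ edges from the twelve pentagrams, and $X(n/2-10)$ edges from the hexagonal gadgets, for a total of $\tfrac{3+X}{2}\,n - O(1)$ edges. Choosing $X$ so that this total reaches $5.6n - O(1)$ yields a density strictly above $5.5n-11$, establishing the theorem.

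The main obstacle is the design of the hexagonal gadget. In any topological drawing inside a hexagonal face, the three long diagonals are forced to pairwise cross, so naively inserting all nine diagonals produces three mutually crossing heavy edges and violates min-$3$-planarity. The gadget must therefore drop at least one long diagonal or replace it with carefully routed non-homotopic multi-edges whose crossings all fall on light edges, while still packing strictly more than $8$ extra edges on average per hexagon so that the overall density surpasses $5.5n$. Designing such a gadget and verifying its min-$3$-planarity by a local analysis of each pair of crossing edges is the technical heart of the proof.
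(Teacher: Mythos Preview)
Your proposal has the right high-level shape---pack a planar skeleton's faces with dense min-$3$-planar gadgets---but the crucial hexagonal gadget is left entirely unspecified, and in fact the hexagon is too small to do the job. Inside a hexagonal face all six vertices lie on the boundary, so any two parallel edges drawn in that face are homotopic; the ``non-homotopic multi-edges'' you hope to exploit simply do not exist there. You are therefore limited to the nine diagonals. In \emph{any} drawing of all nine, each long diagonal must cross the other two long diagonals and two short ones (a forced, topological count), so all three long diagonals are heavy and pairwise crossing---never min-$3$-planar. Dropping one long diagonal does give a $3$-planar (hence min-$3$-planar) configuration with $8$ interior edges, but then the total is $3n/2 + 60 + 8(n/2-10)=5.5n-20 < 5.5n-11$, so you do not beat the $3$-planar bound. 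The obstruction is not a missing detail; it is structural.

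The paper avoids this by using \emph{octagonal} regions rather than hexagonal ones, and a much simpler skeleton: $h$ parallel $8$-vertex paths between two shared poles $u,v$, interleaved with $h$ non-homotopic copies of the edge $uv$. Each path-bounded octagonal region is filled by a fixed $8$-vertex, $33$-edge min-$3$-planar gadget (necessarily non-simple, since $\binom{8}{2}=28$; the multi-edges now \emph{can} be non-homotopic because they wind around interior vertices of adjacent gadgets or around the poles). The resulting graph has $n=6h+2$ vertices and $m=34h=\tfrac{17}{3}n-\tfrac{34}{3}\approx 5.67\,n$ edges, strictly exceeding $5.5n-11$. If you want to repair your construction, move from hexagons to octagons; that extra pair of vertices is exactly what creates room for enough heavy edges that need not cross one another.
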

\begin{proof}[Sketch]
\begin{figure}[tb]
    \centering
    \subfigure[]{\includegraphics[page=2,width=.4\textwidth]{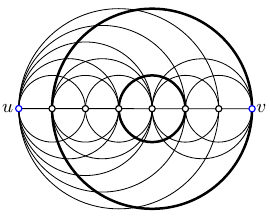}\label{fi:8-gons-a}}
    \hfil
    \subfigure[]{\includegraphics[page=1,width=.4\textwidth]{eight-gons}\label{fi:8-gons-b}}
    \caption{
    Illustration of the construction of \Cref{th:min3-3planar}. If in the graph of figure (a) we replace each shaded chain with a copy of the graph of figure (b), we get a min-3-planar graph that is not 3-planar. The bold edges are heavy edges.}
    \label{fi:8-gons}
\end{figure}
First, consider a planar graph $G$, and a corresponding drawing $\Gamma$, consisting of $h$ parallel chains ($h \geq 1$), each with $8$ vertices, sharing the two end-vertices $u$ and $v$, and interleaved by $h$ copies of edge $uv$; refer to \cref{fi:8-gons-a}. Then, construct a new graph $G'$, and a corresponding drawing $\Gamma'$, obtained from~$G$, and from $\Gamma$, by replacing each parallel chain with a copy of the graph $G''$ depicted in \cref{fi:8-gons-b}. In the drawing $\Gamma'$, each copy of $G''$ has the same edge crossings as the drawing illustrated in \cref{fi:8-gons-b}.
Graph $G''$ has $8$ vertices and $33$ edges, and it is min-3-planar. The bold edges are the heavy edges in the drawing of \cref{fi:8-gons-b}). It can be proved that $G''$ has $5.\overline{6}n-11.\overline{3}$ edges, while it is known that every 3-planar graph has at most $5.5n-11$ edges~\cite{DBLP:journals/dcg/PachRTT06}. 
\end{proof}

\section{Final Remarks and Open Problems}\label{se:open}

About edge density, one can ask whether the bound of \Cref{th:density-min3} for min-3-planar graphs is tight or if it can be further lowered. Providing finer bounds for $k \geq 4$ is also interesting.
Another classical research direction is to establish inclusion or incomparability relations between min-$k$-planar graphs and classes of beyond-planar graphs other than $k$-planar graphs. 
The next two lemmas provide initial results in this direction (see \Cref{se:app-conclusions} for their proofs). In particular, \cref{le:mink-gapplanar} leaves as open what is the relationship between min-2-planar graphs and 1-gap-planar graphs (which have the same maximum edge density). \Cref{le:mink-fanplanar} implies that min-2-planar graphs and fan-planar graphs are incomparable classes, even if they have the same maximum edge density.


\begin{restatable}{lemma}{leminkgapplanar}\label{le:mink-gapplanar}
Min-$k$-planar graphs are a subset of $k$-gap-planar graphs and of $(k+2)$-quasiplanar graphs, for every $k \geq 1$.
\end{restatable}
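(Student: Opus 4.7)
The plan is to prove the two containments separately, both directly from a given min-$k$-planar drawing $\Gamma$ of a graph $G$. In each case I exploit the key structural fact that in a min-$k$-planar drawing the set of heavy edges is pairwise non-crossing.

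For the inclusion in $k$-gap-planar graphs, I will argue that $\Gamma$ itself can be turned into a $k$-gap-planar drawing without changing its curves, by a suitable assignment of crossings to edges. For every crossing point $x$ of $\Gamma$, between edges $e$ and $e'$, the min-$k$-planar condition guarantees that at least one of $e, e'$ is light, i.e.\ has at most $k$ crossings in total. I charge (place the gap of) $x$ to any such light edge; if both are light, pick one arbitrarily. Then every heavy edge receives no charge at all, and every light edge receives at most $\cross(e) \leq k$ charges, i.e.\ at most $k$ gaps. This is exactly the definition of a $k$-gap-planar drawing, so $G$ is $k$-gap-planar.

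For the inclusion in $(k+2)$-quasiplanar graphs, I will argue by contradiction: suppose $\Gamma$ contains $k+2$ pairwise crossing edges $e_1, \ldots, e_{k+2}$. For any two of them, the min-$k$-planar condition forces at least one to be light, so at most one edge among $e_1, \ldots, e_{k+2}$ can be heavy (otherwise two heavy edges would cross, which is impossible). Hence at least $k+1$ of these edges are light. But each such light edge $e_i$ crosses the remaining $k+1$ edges of the family, and therefore has at least $k+1$ crossings, contradicting that it is light. Consequently $\Gamma$ contains no $k+2$ mutually crossing edges, so $G$ is $(k+2)$-quasiplanar.

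Neither argument is obstacle-heavy; both are essentially one-paragraph consequences of the definitions of heavy/light edges and of the two target classes. The only small care point is the first step: ensuring that the gap assignment in the $k$-gap-planar part is well defined even when both crossing edges are light, which is handled by an arbitrary tie-breaking choice and does not affect the $\leq k$ bound per edge.
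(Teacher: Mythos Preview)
Your proof is correct and follows essentially the same approach as the paper: assign each crossing to a light edge participating in it (the paper phrases this as iterating over light edges, you phrase it as a per-crossing choice, but the effect is identical), and for quasiplanarity argue by contradiction that at most one of $k+2$ mutually crossing edges can be heavy, forcing a light edge with $\ge k+1$ crossings.
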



\begin{restatable}{lemma}{leminkfanplanar}\label{le:mink-fanplanar}
For any given $k \geq 2$, fan-planar and min-$k$-planar graphs are incomparable, i.e.,  each of the two classes contains graphs that are not in the other.
\end{restatable}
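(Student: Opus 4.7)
The plan is to prove the two directions of incomparability separately, each by constructing an explicit witness graph (or family thereof).

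For the direction $\text{min-}k\text{-planar} \not\subseteq \text{fan-planar}$, I would exploit edge density whenever possible. Since fan-planar graphs on $n$ vertices are known to have at most $5n-10$ edges, any min-$k$-planar graph that exceeds this bound is automatically not fan-planar. For $k \geq 3$, the construction from \Cref{th:min3-3planar} already produces a min-$3$-planar (hence min-$k$-planar) graph with $5.\overline{6}n - O(1)$ edges, strictly exceeding $5n-10$; this alone settles all $k \geq 3$. For $k=2$, both classes share the same maximum density $5n-10$, so density is not enough. Here the plan is to identify a specific graph known to be min-$2$-planar but with structure incompatible with any fan-planar drawing, e.g.\ $K_{5,5}$ (shown min-$2$-planar in \Cref{th:min2-2planar}), and rule out fan-planarity either by invoking a known forbidden-subgraph/density refinement for bipartite fan-planar graphs, or by a direct case analysis of the rotation systems around high-degree vertices.

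For the direction $\text{fan-planar} \not\subseteq \text{min-}k\text{-planar}$, the plan is to construct, for each $k \geq 2$, a graph $H_k$ that admits one particular drawing which is fan-planar, but for which every simple drawing contains a pair of mutually crossing edges both carrying more than $k$ crossings. The intended construction is a \emph{multi-fan gadget}: several high-degree ``center'' vertices, each with $2k+3$ pendant paths attached, glued along a shared set of ``rim'' vertices in a way that simultaneously (i) admits a planar-like arrangement in which every edge incident to a single center dominates one region, making crossings respect a fan structure, and (ii) has enough edges to force, by a Euler/Crossing Lemma style lower bound, many crossings on ``long'' edges that cross one another in every drawing. A refinement is to chain several copies of such gadgets via shared cut vertices, boosting the enforced number of crossings on the offending pair to any desired value.

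The hardest step is this second direction, where one must argue non-existence of \emph{any} min-$k$-planar drawing, not just the failure of a specific embedding. The plan is to fix an arbitrary simple drawing of $H_k$, restrict attention to the subdrawing on the two interacting fans, and then apply the Jordan Curve Theorem to the two center vertices to show that at least one edge from each fan must cross at least $k+1$ edges of the other fan. Combined with the fact that edges within a fan are pairwise adjacent (hence non-crossing), this forces a crossing pair both with more than $k$ crossings, contradicting min-$k$-planarity. The subtlety will be handling all the rotational possibilities around the centers and the rim; I expect that a canonical ``separator'' argument, counting how many rim-vertices lie on each side of a suitable simple closed curve through the two centers, will pin this down.
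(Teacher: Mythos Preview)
Your proposal has genuine gaps in both directions, and in each case the paper's argument is considerably simpler.

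For the direction $\text{min-}k\text{-planar} \not\subseteq \text{fan-planar}$, your density argument handles $k\ge 3$ cleanly, but the $k=2$ case is left as a hope: you propose to show $K_{5,5}$ is not fan-planar ``either by invoking a known forbidden-subgraph/density refinement \ldots\ or by a direct case analysis,'' without actually carrying out either. The paper sidesteps this entirely by citing the known fact that there exist $2$-planar graphs that are not fan-planar; since every $2$-planar graph is min-$k$-planar for all $k\ge 2$, one citation settles the whole direction uniformly. No separate treatment of $k=2$ is needed.

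For the direction $\text{fan-planar} \not\subseteq \text{min-}k\text{-planar}$, your multi-fan gadget and Jordan-curve separator argument are only sketched (``I expect that \ldots\ will pin this down''), and the step you yourself flag as hardest---ruling out \emph{every} simple drawing---is not actually done. The paper avoids all of this with a global counting argument: the complete tripartite graph $K_{1,3,h}$ is fan-planar, yet by a classical result its crossing number is $\Omega(n^2)$. On the other hand, combining the edge-density bound of \cref{th:density-mink-general} with \cref{pr:crossings-min-k} shows that any min-$k$-planar drawing on $n$ vertices has at most $O(k^{1.5}\, n)$ crossings. For fixed $k$ and large enough $h$, $\Omega(n^2)$ exceeds $O(n)$, so $K_{1,3,h}$ cannot be min-$k$-planar. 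This is the key idea you are missing: rather than analyzing local crossing configurations in an arbitrary drawing, use the linear (in $n$) upper bound on the total number of crossings forced by min-$k$-planarity, and pick a fan-planar family whose crossing number grows superlinearly.
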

\bibliographystyle{splncs04}
\bibliography{biblio}

\begin{thebibliography}{10}
\providecommand{\url}[1]{\texttt{#1}}
\providecommand{\urlprefix}{URL }
\providecommand{\doi}[1]{https://doi.org/#1}

\bibitem{ackerman:on}
Ackerman, E.: On topological graphs with at most four crossings per edge.
  Comput. Geom.  \textbf{85} (2019). \doi{10.1016/j.comgeo.2019.101574}

\bibitem{ackerman.tardos:on}
Ackerman, E., Tardos, G.: On the maximum number of edges in quasi-planar
  graphs. J. Comb. Theory, Ser. {A}  \textbf{114}(3),  563--571 (2007).
  \doi{10.1016/j.jcta.2006.08.002}

\bibitem{DBLP:conf/gd/AgarwalAPPS95}
Agarwal, P.K., Aronov, B., Pach, J., Pollack, R., Sharir, M.: Quasi-planar
  graphs have a linear number of edges. In: Brandenburg, F. (ed.) Graph
  Drawing, {GD} `95. LNCS, vol.~1027, pp.~1--7. Springer (1995).
  \doi{10.1007/BFb0021784}

\bibitem{DBLP:journals/combinatorica/AgarwalAPPS97}
Agarwal, P.K., Aronov, B., Pach, J., Pollack, R., Sharir, M.: Quasi-planar
  graphs have a linear number of edges. Combinatorica  \textbf{17}(1), ~1--9
  (1997), \url{https://doi.org/10.1007/BF01196127}

\bibitem{DBLP:journals/jctb/AngeliniBBLBDHL20}
Angelini, P., Bekos, M.A., Brandenburg, F.J., {Da Lozzo}, G., {Di Battista},
  G., Didimo, W., Hoffmann, M., Liotta, G., Montecchiani, F., Rutter, I.,
  T{\'{o}}th, C.D.: Simple \emph{k}-planar graphs are simple
  (\emph{k}+1)-quasiplanar. J. Comb. Theory, Ser. {B}  \textbf{142},  1--35
  (2020). \doi{10.1016/j.jctb.2019.08.006}

\bibitem{DBLP:conf/gd/AngeliniB0S19}
Angelini, P., Bekos, M.A., Kaufmann, M., Schneck, T.: Efficient generation of
  different topological representations of graphs beyond-planarity. In:
  Archambault, D., T{\'{o}}th, C.D. (eds.) Graph Drawing and Network
  Visualization, {GD} 2019. LNCS, vol. 11904, pp. 253--267. Springer (2019).
  \doi{10.1007/978-3-030-35802-0\_20}

\bibitem{DBLP:journals/jgaa/AngeliniBKS20}
Angelini, P., Bekos, M.A., \KM, Schneck, T.: Efficient generation of different
  topological representations of graphs beyond-planarity. J. Graph Algorithms
  Appl.  \textbf{24}(4),  573--601 (2020). \doi{10.7155/jgaa.00531}

\bibitem{DBLP:journals/jgt/Asano86}
Asano, K.: The crossing number of \emph{K}\({}_{1, 3, \emph{n}}\) and
  \emph{K}\({}_{2, 3, \emph{n}}\). J. Graph Theory  \textbf{10}(1), ~1--8
  (1986). \doi{10.1002/jgt.3190100102}

\bibitem{DBLP:conf/gd/BachmaierRS18}
Bachmaier, C., Rutter, I., Stumpf, P.: 1-gap planarity of complete bipartite
  graphs. In: Biedl, T.C., Kerren, A. (eds.) {Graph Drawing and Network
  Visualization, GD 2018}. LNCS, vol. 11282, pp. 646--648. Springer (2018)

\bibitem{DBLP:conf/gd/BaeBCEEGHKMRT17}
Bae, S.W., Baffier, J., Chun, J., Eades, P., Eickmeyer, K., Grilli, L., Hong,
  S., Korman, M., Montecchiani, F., Rutter, I., T{\'{o}}th, C.D.: Gap-planar
  graphs. In: Frati, F., Ma, K. (eds.) Graph Drawing and Network Visualization,
  {GD} 2017. LNCS, vol. 10692, pp. 531--545. Springer (2017).
  \doi{10.1007/978-3-319-73915-1\_41}

\bibitem{DBLP:journals/tcs/BaeBCEE0HKMRT18}
Bae, S.W., Baffier, J., Chun, J., Eades, P., Eickmeyer, K., Grilli, L., Hong,
  S., Korman, M., Montecchiani, F., Rutter, I., T{\'{o}}th, C.D.: Gap-planar
  graphs. Theor. Comput. Sci.  \textbf{745},  36--52 (2018),
  \url{https://doi.org/10.1016/j.tcs.2018.05.029}

\bibitem{DBLP:conf/gd/BekosCGHK14}
Bekos, M.A., Cornelsen, S., Grilli, L., Hong, S., Kaufmann, M.: On the
  recognition of fan-planar and maximal outer-fan-planar graphs. In: Duncan,
  C.A., Symvonis, A. (eds.) Graph Drawing, {GD} 2014. LNCS, vol.~8871, pp.
  198--209. Springer (2014). \doi{10.1007/978-3-662-45803-7\_17}

\bibitem{DBLP:journals/algorithmica/BekosCGHK17}
Bekos, M.A., Cornelsen, S., Grilli, L., Hong, S., \KM: On the recognition of
  fan-planar and maximal outer-fan-planar graphs. Algorithmica  \textbf{79}(2),
   401--427 (2017), \url{https://doi.org/10.1007/s00453-016-0200-5}

\bibitem{DBLP:conf/compgeom/Bekos0R17}
Bekos, M.A., \KM, Raftopoulou, C.N.: On optimal 2- and 3-planar graphs. In:
  Aronov, B., Katz, M.J. (eds.) {SoCG}. LIPIcs, vol.~77, pp. 16:1--16:16.
  Schloss Dagstuhl (2017). \doi{10.4230/LIPIcs.SoCG.2017.16}

\bibitem{bdd-23-wg}
Binucci, C., {Di Battista}, G., Didimo, W., Hong, S., Kaufmann, M., Liotta, G.,
  Morin, P., Tappini, A.: Nonplanar graph drawings with $k$ vertices per face.
  In: {WG} 2023. LNCS, Springer (2023, to appear)

\bibitem{DBLP:journals/tcs/BinucciGDMPST15}
Binucci, C., {Di Giacomo}, E., Didimo, W., Montecchiani, F., Patrignani, M.,
  Symvonis, A., Tollis, I.G.: Fan-planarity: Properties and complexity. Theor.
  Comput. Sci.  \textbf{589},  76--86 (2015),
  \url{https://doi.org/10.1016/j.tcs.2015.04.020}

\bibitem{DBLP:conf/gd/BinucciGDMPT14}
Binucci, C., {Di Giacomo}, E., Didimo, W., Montecchiani, F., Patrignani, M.,
  Tollis, I.G.: Fan-planar graphs: Combinatorial properties and complexity
  results. In: Duncan, C.A., Symvonis, A. (eds.) Graph Drawing, {GD} 2014.
  LNCS, vol.~8871, pp. 186--197. Springer (2014).
  \doi{10.1007/978-3-662-45803-7\_16}

\bibitem{bodendiek}
Bodendiek, R., Schumacher, H., Wagner, K.: Über 1-optimale graphen.
  Mathematische Nachrichten  \textbf{117},  323--339 (1984)

\bibitem{DBLP:journals/siamcomp/CabelloM13}
Cabello, S., Mohar, B.: Adding one edge to planar graphs makes crossing number
  and 1-planarity hard. {SIAM} J. Comput.  \textbf{42}(5),  1803--1829 (2013).
  \doi{10.1137/120872310}

\bibitem{DBLP:books/ph/BattistaETT99}
{Di Battista}, G., Eades, P., Tamassia, R., Tollis, I.G.: Graph Drawing:
  Algorithms for the Visualization of Graphs. Prentice-Hall (1999)

\bibitem{DBLP:journals/csur/DidimoLM19}
Didimo, W., Liotta, G., Montecchiani, F.: A survey on graph drawing
  beyond~planarity. {ACM} Comput. Surv.  \textbf{52}(1),  4:1--4:37 (2019),
  \url{https://doi.org/10.1145/3301281}

\bibitem{DBLP:journals/cjtcs/DujmovicGMW11}
Dujmovic, V., Gudmundsson, J., Morin, P., Wolle, T.: Notes on large angle
  crossing graphs. Chicago J. Theor. Comput. Sci.  \textbf{2011} (2011)

\bibitem{DBLP:conf/gd/Forster0R21}
F{\"{o}}rster, H., Kaufmann, M., Raftopoulou, C.N.: Recognizing and embedding
  simple optimal 2-planar graphs. In: Purchase, H.C., Rutter, I. (eds.) Graph
  Drawing and Network Visualization, {GD} 2021. LNCS, vol. 12868, pp. 87--100.
  Springer (2021). \doi{10.1007/978-3-030-92931-2\_6}

\bibitem{DBLP:journals/siamdm/FoxPS13}
Fox, J., Pach, J., Suk, A.: The number of edges in $k$-quasi-planar graphs.
  {SIAM} J. Discrete Math.  \textbf{27}(1),  550--561 (2013),
  \url{https://doi.org/10.1137/110858586}

\bibitem{DBLP:journals/algorithmica/GrigorievB07}
Grigoriev, A., Bodlaender, H.L.: Algorithms for graphs embeddable with few
  crossings per edge. Algorithmica  \textbf{49}(1),  1--11 (2007),
  \url{https://doi.org/10.1007/s00453-007-0010-x}

\bibitem{DBLP:books/sp/20/HT2020}
Hong, S., Tokuyama, T. (eds.): Beyond Planar Graphs. Springer (2020).
  \doi{10.1007/978-981-15-6533-5}

\bibitem{DBLP:journals/combinatorics/0001U22}
Kaufmann, M., Ueckerdt, T.: The density of fan-planar graphs. Electron. J.
  Comb.  \textbf{29}(1) (2022). \doi{10.37236/10521}

\bibitem{DBLP:journals/csr/KobourovLM17}
Kobourov, S.G., Liotta, G., Montecchiani, F.: An annotated bibliography on
  1-planarity. Comput. Sci. Rev.  \textbf{25},  49--67 (2017).
  \doi{10.1016/j.cosrev.2017.06.002}

\bibitem{DBLP:journals/dcg/PachRTT06}
Pach, J., Radoicic, R., Tardos, G., T{\'{o}}th, G.: Improving the crossing
  lemma by finding more crossings in sparse graphs. Discrete {\&} Computational
  Geometry  \textbf{36}(4),  527--552 (2006),
  \url{https://doi.org/10.1007/s00454-006-1264-9}

\bibitem{DBLP:conf/gd/PachT96}
Pach, J., T{\'{o}}th, G.: Graphs drawn with few crossings per edge. In: North,
  S.C. (ed.) Graph Drawing, {GD} `96. LNCS, vol.~1190, pp. 345--354. Springer
  (1996). \doi{10.1007/3-540-62495-3\_59}

\bibitem{DBLP:journals/combinatorica/PachT97}
Pach, J., T{\'{o}}th, G.: Graphs drawn with few crossings per edge.
  Combinatorica  \textbf{17}(3),  427--439 (1997),
  \url{https://doi.org/10.1007/BF01215922}

\bibitem{MR0187232}
Ringel, G.: Ein {S}echsfarbenproblem auf der {K}ugel. Abh. Math. Sem. Univ.
  Hamb.  \textbf{29},  107--117 (1965)

\bibitem{DBLP:conf/gd/WoodT06}
Wood, D.R., Telle, J.A.: Planar decompositions and the crossing number of
  graphs with an excluded minor. In: Kaufmann, M., Wagner, D. (eds.) Graph
  Drawing, {GD} 2006. LNCS, vol.~4372, pp. 150--161. Springer (2006).
  \doi{10.1007/978-3-540-70904-6\_16}

\bibitem{wt-07}
Wood, D.R., Telle, J.A.: Planar decompositions and the crossing number of
  graphs with an excluded minor. New York Journal of Mathematics  \textbf{13},
  117--146 (2007), \url{https://nyjm.albany.edu/j/2007/13-8.pdf}

\end{thebibliography}

\newpage
\appendix

\section{Appendix}

\subsection{Details for \Cref{se:edge-density}}\label{se:app-edge-density}

\thdensityminkgeneral* 
\begin{proof}
Let $\mu = \min\{5.39 \sqrt{k} \cdot n, (3.81\sqrt{k} + 3)\cdot n\}$. Note that $\mu = 5.39 \sqrt{k} \cdot n$ when $2 \leq k \leq 3$, while $\mu = (3.81\sqrt{k} + 3)\cdot n$ when $k \geq 4$. Hence, we prove that for $2 \leq k \leq 3$ we have $m \leq 5.39 \sqrt{k} \cdot n$, while for $k \geq 4$ we have $m \leq (3.81\sqrt{k} + 3)\cdot n$.

Suppose first that $2 \leq k \leq 3$. If $m < 6.95n$, the relation $m \leq 5.39\sqrt{k}\cdot n$ trivially holds, as $5.39 \sqrt{k} \cdot n \leq 7.63n$. If $m \geq 6.95n$, let $\cross(G)$ be the minimum number of crossings required by any min-$k$-planar drawing $\Gamma$ of $G$. The improved version by Ackerman of the popular Crossing Lemma (Theorem~6 in \cite{ackerman:on}) implies that $\cross(G) \geq \frac{1}{29}\frac{m^3}{n^2}$. If $\ell$ is the number of light edges of $G$ in $\Gamma$, by \cref{pr:crossings-min-k} we have $\cross(G) \leq k \cdot \ell \leq k \cdot m$. Hence $\frac{1}{29}\frac{m^3}{n^2} \leq k \cdot m$, which yields $m \leq 5.39 \sqrt{k} \cdot n$.        

Suppose now that $k \geq 4$ and let $\Gamma$ be any min-$k$-planar drawing of $G$ with $\ell$ light edges. Since no two heavy edges cross, the subgraph of $G$ consisting of all heavy and crossing-free edges in $\Gamma$ has at most $3n-6$ edges, hence $m \leq \ell + 3n - 6$. Let $G'$ be the subgraph of $G$ consisting of the $\ell$ light edges of $G$ only, and let $\Gamma'$ be the restriction of $G'$ in $\Gamma$. We show that $\ell \leq 3.81\sqrt{k}\cdot n$. The relation trivially holds when $\ell < 6.95 n$, as $k \geq 4$. If $\ell \geq 6.95n$, using  Ackerman's version of the Crossing Lemma applied to $G'$, we have $\cross(G') \geq \frac{1}{29}\frac{\ell^3}{n^2}$. Also, $\Gamma'$ has at most $\frac{k\cdot\ell}{2}$ crossings, because each light edge has at most $k$ crossings and each crossing is shared by two edges of $G'$. It follows that $\frac{1}{29}\frac{\ell^3}{n^2} \leq \frac{k\cdot\ell}{2}$, which still implies $\ell \leq 3.81\sqrt{k}\cdot n$. Therefore, $m \leq \ell + 3n - 6 \leq \ell +3n \leq (3.81\sqrt{k} + 3) \cdot n$.        
\end{proof}

\ledensityminonesupport*
\begin{proof}
    Let $\Gamma_r$ be the red subgraph of $\Gamma$. Since $G$ is simple, every face of $\Gamma$ has degree greater than two.
    Suppose that $\Gamma_r$ has at least one face $f$ such that $\deg_{\Gamma_r}(f) \geq 4$. We augment $\Gamma$ with new red edges in two steps, described below. The augmentation may introduce multiple edges, but it will guarantee that all faces of the new red subgraph have degree three.
    
\begin{figure}[b]
	\centering
    \subfigure[]{\includegraphics[page=1,width=.19\textwidth]{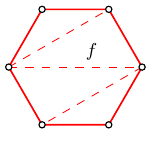}\label{fi:augmentation-a}}
    \hfil
    \subfigure[]{\includegraphics[page=10,width=.19\textwidth]{augmentation}\label{fi:augmentation-b}}
    \hfil
    \subfigure[]{\includegraphics[page=5,width=.19\textwidth]{augmentation}\label{fi:augmentation-c}}
    \hfil
    \subfigure[]{\includegraphics[page=9,width=.19\textwidth]{augmentation}\label{fi:augmentation-d}}
    \caption{
    Illustration for the proof of \cref{le:density-min1-support}.}
    \label{fi:augmentation}
\end{figure}
    
    \begin{compactitem}
    	\item[\textsf{Step 1.}] Suppose that there exists a face $f$ of $\Gamma_r$ with $\deg_{\Gamma_r}(f) \geq 4$ and containing two vertices $u$ and $v$ that can be connected by an edge $uv$ that splits $f$ without crossing other edges of $\Gamma$. We add edge $uv$ and color it as red (as it is crossing-free); we also say that this operation  \emph{augments} $f$. We repeat this procedure until no such a face $f$ exists. The obtained drawing is still min-1-planar, since we added only crossing-free edges.
     
    	\item[\textsf{Step 2.}] Suppose that $\Gamma_r$ still contains a face $f$ with $\deg_{\Gamma_r}(f) \geq 4$.
    	Observe that:
    	
    	$(i)$ Face $f$ is traversed by a green edge in $\Gamma$, otherwise it would have been augmented in \textsf{Step 1}; see \cref{fi:augmentation-a}.

    	$(ii)$ Every green edge $e$ that traverses $f$ is not incident to any vertex $u$ of $f$. Namely, suppose for contradiction that $e$ is incident to a vertex $u$ of $f$, and let $e_r=vw$ be the red edge of $f$ crossed by $e$. Since $\deg_{\Gamma_r}(f) \geq 4$, at least one among $v$ and $w$, say for example $v$, is not adjacent to $u$. However this implies that either $f$ can be augmented by adding a red edge $uv$, which contradicts that we completed \textsf{Step~1}, or there is another green edge that crosses $e_r$, which contradicts that $e_r$ is crossed at most once; see \cref{fi:augmentation-b}.    
    
        $(iii)$ Face $f$ cannot be traversed by two distinct green edges $e_1$ and $e_2$ (refer to \cref{fi:augmentation-c}). More precisely, if this is the case, these edges cannot cross each other and, by property $(ii)$, each of $e_1$ and $e_2$ crosses two distinct red edges of $f$. Also, since each red edge is crossed at most once, $e_1$ and $e_2$ cross two disjoint pairs of red edges of $f$. Denote by $c_1$ and $c'_1$ (resp. $c_2$ and $c'_2$) the two crossing points of $e_1$ (resp. $e_2$) with the boundary of $f$. Assume that $c_1, c_2, c'_2, c'_1$ occur in this clockwise order on the boundary of $f$. This implies that, while moving clockwise on the boundary of $f$, there is at least one vertex $u$ of $f$ between $c_1$ and $c_2$, and at least one vertex $v$ of $f$ between $c'_2$ and $c'_1$. Hence, we can augment $f$ with a red edge $uv$, which contradicts that we completed \textsf{Step~1}.    
        
        By properties $(i)$, $(ii)$, and $(iii)$, $f$ is traversed by exactly one green edge $e$; however this edge cannot leave on the same side two vertices of $f$ that are not consecutive on its boundary, as otherwise they would have been connected in \textsf{Step~1}. Hence $f$ is a quadrilateral and $e$ splits $f$ into two equal parts (see \cref{fi:augmentation-d}). We can then augment $f$ by adding a diagonal red edge in the quadrilateral face.  We repeat this procedure until $\Gamma_r$ contains no face $f$ such that $\deg_{\Gamma_r}(f) \geq 4$.
        %
        %
        %
    \end{compactitem}
\end{proof}

\lemaximalmintwoplanarprop*
\begin{proof} We prove the three properties separately.
\begin{compactenum}[$(a)$]
  
  \item Suppose for contradiction that $f$ does not contain an edge $uv$. Then we can add (another copy of) edge $uv$ to $\Gamma$ (and therefore to $G$) in the interior of $f$, without introducing any additional crossings or creating a $2$-real bigon. Since the resulting drawing is bundle-proper min-$k$-planar, this contradicts the hypothesis that $G$ is maximally-dense.

  
  \item Let $f$ be a face of $\Gamma$. Since $G$ has no self-loops and $\Gamma$ is a simple drawing, then $\deg_{\Gamma}(f) > 1$. Also, since $\Gamma$ is simple then $f$ is neither a 0-real bigon nor a 1-real bigon. Finally, since $\Gamma$ is also bundle-proper, $f$ cannot be a 2-real bigon. It follows that $\deg_{\Gamma}(f) > 2$. 

  \item Suppose $\deg_\Gamma^r(f) \geq 3$. If $\deg_\Gamma(f) \geq 4$ then there would be two non-consecutive real vertices on the boundary of $f$ that are not connected by an edge, which is impossible by $(a)$. Then $f$ is necessarily a 3-real triangle.    
  
\end{compactenum}
\end{proof}

\thdensitymintwo*
\begin{proof}
  We already observed at the beginning of this section that there exist min-2-planar simple graphs with $5n-10$ edges (e.g., the optimal 2-planar). It remains to prove that min-2-planar simple graphs cannot have more than $5n-10$ edges. Since any simple graph is also a bundle-proper graph, we can show that the upper bound holds more in general for multi-graphs that are bundle-proper min-2-planar. Also, since we want to find an upper bound on the number of edges, we can restrict our attention to maximally-dense bundle-proper min-2-planar graphs, and in particular to those having the minimum number of crossings. Let $(G,\Gamma)$ be any maximally-dense crossing-minimal bundle-proper min-2-planar pair, with $|V(G)|=n$.
  We show the existence of a charging function $\ch'(\cdot)$ that satisfies \textsf{(C1)} and \textsf{(C2)} for $\alpha = \frac{2}{5}$, so the result will follow from \cref{eq:modified-charge}.

  Consider the initial charging function $\ch(\cdot)$ defined in \cref{eq:initial-charge}. For each type of triangle $t$ we analyze the value of $\ch(t)$ and the deficit/excess w.r.t. $\frac{2}{5}\deg_\Gamma^r(t)$.
  \begin{itemize}
    \item If $t$ is a $0$-real triangle, $\ch(t)=-1 < 0 = \tfrac{2}{5}\deg_{\Gamma}^r(t)$, thus $t$ has a deficit of $1$. 
    \item If $t$ is a $1$-real triangle, $\ch(t)=0 < \tfrac{2}{5}=\tfrac{2}{5}\deg_{\Gamma}^r(t)$, thus $t$ has a deficit of $\frac{2}{5}$.
    \item If $t$ is a $2$-real triangle, $\ch(t)=1 > \tfrac{4}{5}=\tfrac{2}{5}\deg_{\Gamma}^r(t)$, thus $t$ has an excess of $\frac{1}{5}$.
    \item If $t$ is a $3$-real triangle, $\ch(t) = 2 > \tfrac{6}{5} = \tfrac{2}{5}\deg_{\Gamma}^r(t)$, thus $t$ has an excess of $\frac{4}{5}$.
  \end{itemize}
   Also, if $f$ is any face of $\Gamma$ with $\deg_{\Gamma}(f) \geq 4$, then $\ch(f)=2\deg_{\Gamma}^r(f)+\deg_{\Gamma}^c(f)-4 = \deg_{\Gamma}(f)-4 + \deg_{\Gamma}^r(f) \geq \deg_{\Gamma}^r(f) \geq \tfrac{2}{5}\deg_{\Gamma}^r(f)$.

  Therefore $\ch(\cdot)$ only fails to satisfy \textsf{(C1)} at $0$-real and $1$-real triangles.  We begin by setting $\ch'(f)=\ch(f)$ for each face $f$ of $\Gamma$ and we explain how to modify $\ch'(\cdot)$ in such a way that $\ch'(f) \geq \frac{2}{5}\deg_{\Gamma}^r(f)$ for each face $f \in F(\Gamma)$, thus satisfying \textsf{(C1)}, and the total charge remains the same, thus satisfying \textsf{(C2)}. 
  
  \paragraph{Fixing $0$-real triangles.}
  Let $t$ be a $0$-real triangle in $\Gamma$ with edges $e_1$, $e_2$, and $e_3$.  Refer to \cref{fi:0-real-triangles}.  The edges $\overline{e}_1$, $\overline{e}_2$ and $\overline{e}_3$ are three pairwise crossing edges of $G$. Since $\Gamma$ is a simple drawing, $\overline{e}_1$, $\overline{e}_2$ and $\overline{e}_3$ are independent edges of $G$ (i.e., their six end-vertices are all distinct). Also, since $\Gamma$ is min-2-planar, at least two of these three edges, say $\overline{e}_2$ and $\overline{e}_3$, do not cross other edges of $G$ in $\Gamma$. This implies that each of the two end-vertices of $\overline{e}_2$ shares a face with an end-vertex of $\overline{e}_3$. Hence, by      
  \cref{le:maximal-min-2-planar-prop}(\ref{edge_appears}), the four vertices of $\overline{e}_2$ and $\overline{e}_3$ form a $4$-cycle $e'\overline{e}_2e''\overline{e}_3$ in $G$ and $\Gamma$ contains a $2$-real quadrilateral $f_1$ bounded by portions of $e''$, $\overline{e}_1$, $\overline{e}_2$, $\overline{e}_3$, and a $2$-real triangle $f_2$ bounded by portions of $e'$, $\overline{e}_2$, $\overline{e}_3$. 

  
  The charge of $f_1$ is $\ch'(f_1)=2$, with an excess of $\frac{6}{5}$ w.r.t. $\frac{2}{5}\deg_\Gamma^r(f_1)=\frac{4}{5}$. The charge of $f_2$ is $\ch'(f_2)=1$, with an excess of $\frac{1}{5}$ w.r.t. $\frac{2}{5}\deg_\Gamma^r(f_2)=\frac{4}{5}$. We reduce $\ch'(f_1)$ by $\frac{4}{5}$, reduce $\ch'(f_2)$ by $\frac{1}{5}$, and increase $\ch'(t)$ by $1$.
  After that, the total charge is unchanged and all the three faces $t$, $f_1$, and $f_2$ satisfy \textsf{(C1)}. Namely, $\ch'(t)=0$ (it has no deficit/excess), $\ch'(f_1)=\frac{6}{5}$ (it has an excess of $\frac{2}{5}$), and $\ch'(f_2)=\frac{4}{5}$ (it has no deficit/excess).   
  In the remainder of the proof, since we need a way to keep track of the $2$-real triangles and $2$-real quadrilaterals whose charge has been modified as described above, we call each of the faces $f_1$ and $f_2$ a \emph{$0$-real triangle-neighboring face}.  Each $0$-real triangle-neighboring face that is a $2$-real triangle (as $f_2$) shares its unique $0$-real vertex with a $0$-real triangle; each $0$-real triangle-neighboring face that is a $2$-real quadrilateral (as $f_1$) shares its unique $0$-real edge with a $0$-real triangle.
  
  \paragraph{Fixing $1$-real triangles.}
  Let $t$ be a $1$-real triangle, with real-vertex $v_1$ and crossing-vertices $v_2$ and $v_3$.  Refer to \cref{fi:1-real-triangles} for an illustration. Let $e_0=v_2v_3$ be the $0$-real edge of $t$, and let $f_1$ be the face of $\Gamma$ that shares $e_0$ with $t$. If $f_1$ is a $0$-real quadrilateral, denote by $e_1$ the $0$-real edge of $f_1$ not adjacent to $e_0$, and by $f_2$ the face of $\Gamma$ that shares $e_1$ with $f_1$. If $f_2$ is a $0$-real quadrilateral, denote by $e_2$ the $0$-real edge of $f_2$ not adjacent to $e_1$, and by $f_3$ the face of $\Gamma$ that shares $e_2$ with $f_2$. We continue in this way until we encounter a face $f_p$ $(p \geq 1)$ that is not a $0$-real quadrilateral. This procedure determines a sequence of faces $f_0, f_1, f_2, \dots f_p$, and a sequence of $0$-real edges $e_0, e_1, \dots, e_{p-1}$ such that $f_0=t$, $f_i$ is a $0$-real quadrilateral for each $i \in \{1, \dots, p-1\}$, $f_p$ is not a $0$-real quadrilateral, and the faces $f_i$ and $f_{i-1}$ share edge $e_{i-1}$ $(i \in \{1, \dots, p\})$. 

  
  Note that $\deg_\Gamma(f_p) \geq 4$. Namely, let $e=v_1v_2$ and $e'=v_1v_3$, and let $\overline{e}=v_1u$ and $\overline{e'}=v_1w$ be the edges of $G$ that contain $e$ and $e'$. Since $f_p$ has at least two crossing-vertices, if $f_p$ were a triangle then it would be either a $0$-real triangle or a $1$-real triangle.
  If $f_p$ were a $0$-real triangle then $\overline{e}$ and $\overline{e'}$ would cross in $\Gamma$, which is impossible as $\overline{e}$ and $\overline{e'}$ are adjacent edges and $\Gamma$ is a simple drawing. If $f_p$ were a $1$-real triangle then $u=w$, i.e., $\overline{e}$ and $\overline{e'}$ would be parallel edges both involved in a crossing, which is impossible as $\Gamma$ is bundle-proper. 
  
  Therefore, $\deg_\Gamma^r(f_p) \geq 4$ and, as already observed at the beginning of this proof, $\ch'(f_p) \geq \frac{2}{5} \deg_\Gamma^r(f_p)$. Also, the charge excess of $f_p$ is larger than $\frac{2}{5}$. Namely, the charge excess of $f_p$ is $x = 2\deg_\Gamma^r(f_p) + \deg_\Gamma^c(f_p)-4 - \frac{2}{5}\deg_\Gamma^r(f_p) = \deg_\Gamma(f)  + \frac{3}{5}\deg_\Gamma^r(f) - 4$.
  If $f_p$ has no real-vertices, then it must have at least five crossing-vertices (because $f_p$ is not a $0$-real quadrilateral), which implies $x \geq 1 > \frac{2}{5}$. If $f_p$ has at least one real-vertex then $x \geq \frac{3}{5} > \frac{2}{5}$.
  
  Hence, since the charge excess of $f_p$ is larger than $\frac{2}{5}$, the idea is to fill the $\frac{2}{5}$ charge deficit of $t$ by moving an equivalent amount of charge from $f_p$ to $t$. We say that  
  $t$ \emph{demands from $f_p$ through edge $e_{p-1}$} a charge amount of $\tfrac{2}{5}$. We call $f_0,\ldots,f_p$ (which is a path in the dual of $\Gamma$) the \emph{demand path} for $t$. Therefore, for each $1$-real triangle $t$ of $\Gamma$ whose demand path ends at a face $f=f_p$, we decrease $\ch'(f)$ by $\tfrac{2}{5}$ and increase $\ch'(t)$ from $0$ to $\tfrac{2}{5}$. Note that $f$ cannot be a $0$-real triangle-neighboring face. Indeed, $f$ is not a triangle, and if $f$ is a $2$-real quadrilateral then its $0$-real edge is shared either with a $0$-real quadrilateral or directly with the $1$-real triangle $t$. It follows that the set of faces whose charge is affected by fixing $1$-real triangles does not intersect with the set of faces whose charge is affected by fixing $0$-real triangles.
  
  Due to the considerations above, after we have fixed all 1-real triangles, we may have problems only if multiple $1$-real triangles demanded from the same face $f$. In this case, $f$ might no longer satisfy \textsf{(C1)}. In the remainder of the proof, we analyze which types of faces may be in this situation and, if so, we prove how to fix their charge.

  \paragraph{Fixing faces that received multiple demands from 1-real triangles.}
  Let $f$ be a face of $\Gamma$ of degree larger than three that received multiple demands from 1-real triangles. This is possible only if $f$ has more than one 0-real edge, hence we can exclude that $f$ is a $2$-real quadrilateral. Note that, by \cref{le:maximal-min-2-planar-prop}(\ref{3_real_triangle}), each face of $\Gamma$ contains at most three real-vertices. If $\deg_\Gamma(f) \geq 7$ then $f$ still satisfies \textsf{(C1)} even if it received a demand through each of its $\deg_\Gamma(f)$ edges when fixing $1$-real triangles. Indeed, in the worst case, the new charge of $f$ is $\ch'(f) = \deg_\Gamma(f) + \deg_\Gamma^r(f) - 4 - \frac{2}{5}\deg_\Gamma(f) = \frac{3}{5}\deg_\Gamma(f) + \deg_\Gamma^r(f) - 4 \geq \frac{2}{5}\deg_\Gamma^r(f)$ (because $\deg_\Gamma(f) \geq 7$). 
  The same happens if $\deg_\Gamma(f) \geq 5$ and $\deg_\Gamma^r(f) \geq 1$ (i.e., $f$ has at least one real-vertex). Indeed, in this case, the number of $0$-real edges of $f$ is at most $\deg_\Gamma^c(f) -1 = \deg_\Gamma(f) - \deg_\Gamma^r -1$, so $f$ received at most this number of demands from $1$-real triangles. Hence, in the worst case, the new charge of $f$ is $\ch'(f) = \deg_\Gamma(f)+\deg_\Gamma^r(f)-4-\frac{2}{5}(\deg_\Gamma(f)-\deg_\Gamma^r(f)-1) \geq \frac{3}{5}\deg_\Gamma(f) + \frac{7}{5}\deg_\Gamma^r(f) - \frac{18}{5} \geq \frac{7}{5}\deg_\Gamma^r(f) -\frac{3}{5} \geq \frac{2}{5}\deg_\Gamma^r(f)+\deg_\Gamma^r(f)-\frac{3}{5} \geq \frac{2}{5}\deg_\Gamma^r(f)$. 
  It follows that the only faces that may have received multiple demands from 1-real triangles and that (after we have fixed all 1-real triangles) no longer satisfy \textsf{(C1)} are the $1$-real quadrilaterals, the $0$-real pentagons, and the $0$-real hexagons. Each face $f$ of one of these types has at least two adjacent $0$-real edges. If $f$ no longer satisfies \textsf{(C1)}, we show how to find extra charges that can be moved from some suitable faces with charge excess towards $f$, so to compensate the charge deficit of $f$. To this aim, we first prove the following claim; refer to \cref{fi:leaks}.
  
  \begin{claim}
    Let $f$ be a face of $\Gamma$ and let $e_1, e_2 , e_3, e_4$ be consecutive edges on the boundary of $f$ for which a demand is made through both $e_2$ and $e_3$. Let $t_1$ be the $1$-real triangle that demanded from $f$ through $e_2$ and let $v_1 = \overline{e}_1\cap \overline{e}_3$ be the real-vertex of $t_1$. Analogously, let $t_2$ be the $1$-real triangle that demanded from $f$ through $e_3$ and let $v_2=\overline{e}_2\cap \overline{e}_4$ be the real-vertex of $t_2$.  Then there is a curve $C$ that begins in $f$, leaves $f$ through the crossing-vertex common to $e_2$ and $e_3$, passes through a sequence of zero or more $1$-real triangles and $1$-real edges, and ends in a face $f^*$ that is either a $2$-real triangle containing $v_1$ and $v_2$ or a $2$-real quadrilateral containing only one of $v_1$ and $v_2$. 
  \end{claim}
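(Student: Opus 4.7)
My plan is to construct $C$ iteratively, extending it from the interior of $f$ through the crossing vertex $c = \overline{e}_2 \cap \overline{e}_3$ and then through a (possibly empty) sequence of $1$-real triangles and $1$-real edges until reaching a $2$-real face of the required type. The guiding idea is that $v_1$ is the real endpoint of $\overline{e}_3$ on the ``$t_1$-side'' of $c$ and $v_2$ is the real endpoint of $\overline{e}_2$ on the ``$t_2$-side'' of $c$, so the curve can \emph{track} these real vertices by following the respective edges past $c$.

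First, I would extend $C$ through $c$ into the face $f'$ opposite $f$ across $c$; the two edges of $\partial f'$ at $c$ are the continuations $e_2^+$ of $\overline{e}_2$ and $e_3^+$ of $\overline{e}_3$ past $c$. Then I would perform a case analysis on whether these two edges are $1$-real or $0$-real. If both are $1$-real, their non-$c$ endpoints must be $v_2$ and $v_1$ respectively, so $f'$ has both $v_1$ and $v_2$ on its boundary; applying \cref{le:maximal-min-2-planar-prop}(\ref{3_real_triangle}) together with the observation that $c \in \partial f'$ is a crossing (forcing $\deg_\Gamma(f') \geq 4$) would show $\deg_\Gamma^r(f') = 2$, and \cref{le:maximal-min-2-planar-prop}(\ref{edge_appears}) would then guarantee an edge $v_1 v_2$ in $f'$; crossing-minimality of $(G, \Gamma)$ should force $f'$ to be the $2$-real triangle $\{v_1, c, v_2\}$, and I would set $f^* = f'$. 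If only one of $e_2^+, e_3^+$ is $1$-real, or neither, I would instead route $C$ through one of the $1$-real triangles adjacent to $c$ (such as $t_1$ or $t_2$ when their demand paths have length one) and exit through the $1$-real edge opposite to $c$. Each such $1$-real triangle contains as its real vertex one of $v_1, v_2$, and the iteration continues through further $1$-real triangles until the curve arrives at a $2$-real face $f^*$: a $2$-real quadrilateral containing only one of $v_1, v_2$ (if the two sides of the curve diverge) or a $2$-real triangle containing both (if they converge at the same face).

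The main obstacle is the careful local case analysis needed to verify that at each step the curve either reaches a $2$-real face of the required type or can continue through another $1$-real triangle. Ruling out the other possibilities (such as landing in a larger $0$-real or $1$-real face) will rely on \cref{le:maximal-min-2-planar-prop} together with bundle-properness, maximal density, and crossing-minimality of $(G, \Gamma)$. Termination of the iteration follows from the fact that $\overline{e}_2$ and $\overline{e}_3$ each have only finitely many crossings, so the curve's monotone progress along them must eventually end at a real vertex.
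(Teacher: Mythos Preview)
Your high-level plan—enter the face $f'$ opposite $f$ at the crossing $z$, then iterate through $1$-real triangles until a $2$-real face is reached—is the paper's route, but the execution has real gaps. The parenthetical ``such as $t_1$ or $t_2$'' is misleading: those triangles sit in the faces sharing the \emph{edges} $e_2,e_3$ with $f$, not in $f'$. What the paper does is observe that (up to swapping $v_1$ and $v_2$) the segment of $\overline{e}_2$ past $z$ ends at $v_2$, so $v_2\in\partial f'$; the curve then walks \emph{around $v_2$}, crossing successive $1$-real edges incident to $v_2$, while the $0$-real sides of the traversed $1$-real triangles march monotonically along $\overline{e}_3$ toward $v_1$. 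The stopping face $f^*$ thus always contains $v_2$ and an edge on $\overline{e}_3$, and the dichotomy is simply whether $f^*$ also contains $v_1$ (giving the $2$-real triangle) or has $\deg_\Gamma(f^*)>3$ (giving the $2$-real quadrilateral). Your case split on $e_2^+,e_3^+$ does not line up with this, and your assertion that $c\in\partial f'$ ``forces $\deg_\Gamma(f')\ge 4$'' is false—the desired outcome in the triangle case is precisely $\deg_\Gamma(f^*)=3$.

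The decisive missing ingredient is the observation that the triangular region $\Delta_{123}$ bounded by $\overline{e}_1,\overline{e}_2,\overline{e}_3$ contains no real vertex other than $v_1$, because it is tiled by $t_1$ and the $0$-real quadrilaterals of $t_1$'s demand path. This is what makes the quadrilateral case work: when $\deg_\Gamma(f^*)>3$, the boundary of $f^*$ carries an edge $e$ with $\overline{e}$ crossing $\overline{e}_3$ and $\overline{e}$ not incident to $v_1$; since $\overline{e}$ has no endpoint in $\Delta_{123}$ and cannot recross $\overline{e}_3$, it must cross $\overline{e}_1$. Now $\overline{e}_3$ already crosses $\overline{e}_2$, $\overline{e}_4$, and $\overline{e}$, so by min-$2$-planarity $\overline{e}$ is light with exactly those two crossings, hence one endpoint $v$ of $\overline{e}$ lies on $\partial f^*$. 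Only then do \cref{le:maximal-min-2-planar-prop}(\ref{edge_appears}) and crossing-minimality pin $f^*$ down as a $2$-real quadrilateral. Your appeal to ``bundle-properness, maximal density, and crossing-minimality'' is not a substitute for this argument; without the $\Delta_{123}$ observation and the crossing count on $\overline{e}_3$, nothing prevents $f^*$ from being, say, a $1$-real pentagon.
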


  \begin{nestedproof}
    Observe that the closed region $\Delta_{123}$ bounded by $\overline{e}_1$, $\overline{e}_2$, $\overline{e}_3$ does not contain any vertex of $G$ other than $v_1$ since if it did, then $t_1$ would be making a demand from some face other than $f$. Similarly, the closed region $\Delta_{234}$ bounded by $\overline{e}_2$, $\overline{e}_3$, $\overline{e}_4$ contains no vertices of $G$.

\begin{figure}[tb]
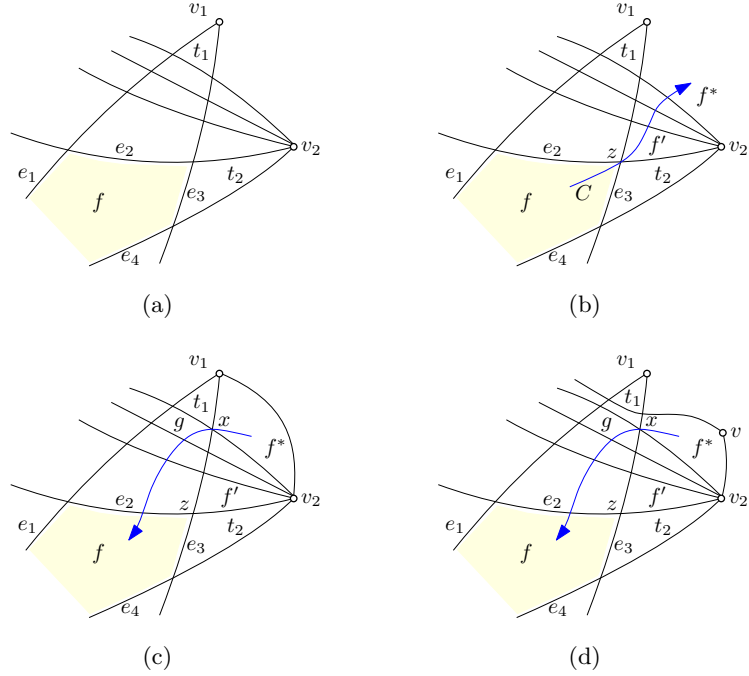

    \centering
    \subfigure[]{\includegraphics[page=2,width=.38\textwidth]{1-real-triangles}\label{fi:leaks-a}}
    \hfil
    \subfigure[]{\includegraphics[page=3,width=.38\textwidth]{1-real-triangles}\label{fi:leaks-b}}
    \hfil
    \subfigure[]{\includegraphics[page=4,width=.38\textwidth]{1-real-triangles}\label{fi:leaks-c}}
    \hfil
    \subfigure[]{\includegraphics[page=5,width=.38\textwidth]{1-real-triangles}\label{fi:leaks-d}}
    \caption{
    A supporting face $f^*$ for a face $f$ that receives charge demands through two consecutive $0$-real edges of its boundary ($e_2$ and $e_3$ in the figure).}
    \label{fi:leaks}
\end{figure}

      We can construct a curve $C$ that begins in $f$, passes through the crossing-vertex $z$ of $\Gamma$ common to $e_2$ and $e_3$, and then enters the face $f'$ opposite $f$ at $z$.
      From the interior of $f'$ the curve $C$ then crosses a sequence of zero or more $1$-real edges incident to $v_2$ and passes through zero or more $1$-real triangles that contain $v_2$ until reaching some face $f^*$ that contains $v_2$ and is not a $1$-real triangle. One of the following must occur:
      \begin{compactenum}
      
        \item The face $f^*$ contains $v_1$ (see \cref{fi:leaks-c}).  In this case \cref{le:maximal-min-2-planar-prop}(\ref{edge_appears}) implies that $f^*$ also contains the $2$-real edge $v_1v_2$.  The crossing-minimality of $\Gamma$ implies that $f^*$ is a $2$-real triangle that contains $v_1$ and $v_2$.  (Otherwise $\overline{e}_3$ has more than one crossing on the boundary of $f^*$ and could be rerouted to avoid all but one of these crossings.) 
        
        \item The face $f^*$ has degree larger than three (see \cref{fi:leaks-d}). Then $f^*$ contains an edge $e$ of $\Gamma$ that is not incident to $v_1$ and $\overline{e}$ crosses $\overline{e}_3$. No endpoint of $\overline{e}$ is in $\Delta_{123}$, $\overline{e}$ does not cross $e_2$, and since $\Gamma$ is simple, $\overline{e}$ has only one crossing with $\overline{e}_3$, so $\overline{e}$ must cross $\overline{e}_1$.  Since $e_3$ already crosses $e_2$ and $e_4$, this implies that $\overline{e}$ has no additional crossings.  Therefore one end-vertex $v$ of $\overline{e}$ belongs to $f^*$. By \cref{le:maximal-min-2-planar-prop}(\ref{edge_appears}), the edge $v_2v$ is on $f^*$.  The crossing minimality of $\Gamma$ then implies that $f^*$ is a $2$-real quadrilateral that contains $v_2$.
      \end{compactenum}
      This completes the proof of the claim.
    \end{nestedproof}
    
  Let $f$ be a $1$-real quadrilateral, a $0$-real pentagon, or $0$-real hexagon with edges $e_1,e_2,e_3,e_4$ that satisfy the conditions of the claim and let $f^*$ be the face whose existence is established by the claim. In each such case, we move a charge of $\tfrac{1}{5}$ from $f^*$ to $f$.
  Based on the claim, there are two cases to consider:

  \begin{compactenum}
    \item The face $f^*$ is a $2$-real triangle that contains $v_1$ and $v_2$ (see \cref{fi:leaks-c}).  Let $x$ be the crossing-vertex of $f^*$.  Then the face $g$ that shares $x$ with $f^*$ but has no edge in common with $f^*$ is either a $0$-real quadrilateral or it coincides with $f$.  In particular, $g$ is not a $0$-real triangle, which implies that $\ch'(f^*)$ was not modified when fixing $0$-real triangles.  Therefore $\ch'(f^*)=1$ immediately after fixing $0$-real triangles.  The charge on $2$-real triangles is never modified when fixing $1$-real triangles, hence $\ch'(f^*)=1$ even after fixing all $1$-triangles.
    Since we reduce the charge of $f^*$ by $\tfrac{1}{5}$ and increase the charge of $f$ by $\tfrac{1}{5}$, we can think of this charge travelling along the suffix of the demand path $t_1\leadsto f$ that begins at $g$; we also say that the charge \emph{leaks out of $f^*$ through $x$}.

    We show that charge leaks out of $f^*$ through $x$ at most once. This is obviously the case if $f^*$ and $f$ share the vertex $x=z$ (i.e., $g=f$).  The only other possibility is that the charge leaks out of $f^*$ into the $0$-real quadrilateral $g$ that is part of another demand path $t \leadsto f''$, with $t\neq t_1$.  Let $e$ and $e'$ be the two edges of $g$ other than $e_1$ and $e_3$.  Then $t$ is the $1$-real triangle that contains $v_2$ and whose $1$-real edges are portions of $\overline{e}$ and $\overline{e}'$.  Each of $\overline{e}$ and $\overline{e}'$ crosses $e_1$ and $e_3$.  Possibly $e_2\in\{e,e'\}$ but we can assume without loss of generality that $e\neq e_2$.
    Therefore the edge $\overline{e}_3$ crosses $\overline{e}_2$, $\overline{e}_4$, and $\overline{e}$, for a total of at least $3$ crossings.
    Hence, neither $e$ nor $e'$ is involved in any additional crossings, which implies that the face next to $g$ on the demand path $t \leadsto f''$ contains end-vertices of $\overline{e}$ and $\overline{e}'$, i.e., this face coincides with $f''$.  Hence, since $\Gamma$ is simple (which excludes that $\overline{e}$ and $\overline{e}'$ cross), $\deg_\Gamma^r(f'') \geq 2$. It follows that $f''$ is neither a $1$-real quadrilateral, nor a $0$-real pentagon, nor a $0$-real hexagon.  Since the charge that leaks out of $f^*$ through $x$ is always left at a $1$-real quadrilateral, or at a $0$-real pentagon, or at a $0$-real hexagon, we conclude that charge leaks out of $f^*$ at $x$ at most once.

    \item The face $f^*$ is a $2$-real quadrilateral (see \cref{fi:leaks-d}).  In this case, $f^*$ has only one $0$-real edge, shared with a $0$-real quadrilateral.  Again, this implies that $\ch'(f^*)$ was not modified when fixing of $0$-real triangles. Hence, immediately after fixing $0$-real triangles we have $\ch'(f^*)=2$.
    Since we reduce $\ch'(f^*)$ by $\tfrac{1}{5}$ and increase $\ch'(f)$ by $\tfrac{1}{5}$, we can again think of this as a charge of $\tfrac{1}{5}$ leaking from $f^*$ through a $0$-real vertex $x$ of $f^*$ and then travelling down a suffix of the demand path $t_1 \leadsto f$.  With the same reasoning as above, this can happen at most once for each of the two $0$-real vertices of $f^*$.  Therefore, the total charge that leaves through these two vertices of $f^*$ is at most $\frac{2}{5}$.
  \end{compactenum}

 \begin{figure}
    \centering
    \includegraphics[page=6,height=.38\textwidth]{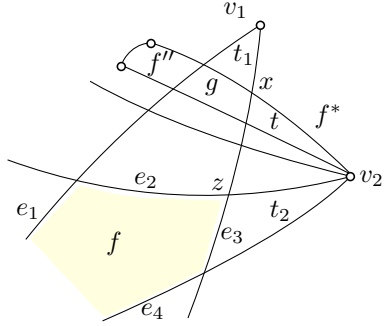}
    \caption{A supporting face $f^*$ supports at most one face $f$.}
    \label{fi:leaks-further}
  \end{figure}

  \medskip
  To summarize the discussion above, each face $f^*$ can give a charge of $\frac{1}{5}$ to at most another face $f$, and after that it still satisfies \textsf{(C1)}. In the following we call $f^*$ a \emph{supporting face}.
  To complete the proof, we have to show that if $f$ is either a $1$-real quadrilateral, or a $0$-real pentagon, or a $0$-real hexagon, and if $f$ received multiple demands from $1$-real triangles, then $f$ always finds a suitable number of supporting faces to satisfy \textsf{(C1)}. We analyze separately the three different types of categories for $f$, and assume that $f$ received more than one demand from some $1$-real triangles.   

  \begin{itemize}
  \item If $f$ is a $1$-real quadrilateral, then it received exactly two demands of $\frac{2}{5}$, through its two consecutive $0$-real edges. In this case, a charge of $\frac{1}{5}$ leaks into $f$ from one supporting face. Hence we have $\ch'(f) \geq 1 + \frac{1}{5} - 2\cdot\frac{2}{5} = \frac{2}{5} = \frac{2}{5} \deg_\Gamma^r(f)$, that is $f$ satisfies \textsf{(C1)}.

  \item If $f$ is a $0$-real pentagon, the following cases are possible:
    \begin{itemize}
      \item $f$ received exactly two demands. We have $\ch'(f) = 1 - 2\cdot\frac{2}{5} = \frac{1}{5} > 0 = \frac{2}{5}\deg_\Gamma^r(f)$, thus $f$ satisfies \textsf{(C1)} without needing supporting faces.
      \item $f$ received exactly three demands. Two of these demands necessarily occur through two consecutive edges of $f$, so a charge of at least $\tfrac{1}{5}$ leaks into $f$ from a supporting face. Therefore $\ch'(f) = 1 + \frac{1}{5} - 3\cdot\frac{2}{5}=0=\frac{2}{5}\deg_\Gamma^r(f)$, that is $f$ satisfies \textsf{(C1)}.
      \item $f$ received exactly four demands. There are three pairs of consecutive edges of $f$ at which the demands occur, so a total charge of $\tfrac{3}{5}$ leaks into $f$ from three supporting faces. Therefore $\ch'(f) = 1 +\frac{3}{5} - 4\cdot\frac{2}{5}=0=\frac{2}{5}\deg_\Gamma^r(f)$, that is $f$ satisfies \textsf{(C1)}.
      \item $f$ received exactly five demands. There are five pairs of consecutive edges at which the demands occur, so a total charge of $\frac{5}{5}=1$ leaks into $f$ from five supporting faces. Therefore $\ch'(f) = 1+1-5\cdot\tfrac{2}{5}=0=\frac{2}{5}\deg_\Gamma^r(f)$, that is $f$ satisfies \textsf{(C1)}.
    \end{itemize}
    \item If $f$ is a $0$-real hexagon, we have two cases. If $f$ received at most five demands, then $\ch'(f) \geq 2 - 5\cdot\frac{2}{5}=0=\tfrac{2}{5}\deg_\Gamma^r(f)$. Otherwise, a total charge of at least $\frac{6}{5}$ leaks into $f$ from six supporting faces, thus we have $\ch'(f) = 2+\frac{6}{5}-6\cdot\frac{2}{5} = \frac{4}{5}>0=\frac{2}{5}\deg_\Gamma^r(f)$. Hence, in both cases $f$ satisfies \textsf{(C1)}. 
\end{itemize}
      
In conclusion, at the end of the discharging process, the new function $\ch'(\cdot)$ satisfies \textsf{(C1)} for all faces of $\Gamma$, and the total charge is the same as the initial total charge, that is, $\ch(\cdot)$ satisfies \textsf{(C2)}. This completes the proof.
\end{proof}

\thheavymintwo*
\begin{proof}
    We apply a simplified version of the charging technique as in \cref{th:density-min2}, to achieve the number of heavy edges. Here is a sketch of the proof. From $G$ with a corresponding drawing $\Gamma$, we derive graph $G^*$ with drawing $\Gamma^*$ with the same number of vertices and the same heavy edges crossing the same edges as in $\Gamma$, and which induces a 1-planar triangulation when the heavy edges being removed.  
    We observe that 
    this triangulation contains only 2-real and 3-real triangles, to which we give, as before, 1 unit or 2 units of charge, respectively, for a total of $4n - 8$. 
    We distribute the charge of those faces to the traversing segments of the heavy edges such that each heavy edge receives at least $\frac{10}{3}$ charge. From this, we can conclude an upper bound of $\frac{6}{5}(n-2)$ heavy edges.

More formally, let $G$ be any min-2-planar graph with a corresponding drawing $\Gamma$ with the maximum number of heavy edges.
Note that all heavy edges in $\Gamma$ have at least three crossings with light edges.
From $\Gamma$, we derive a graph $G^-$ and drawing $\Gamma^-$ by removing all the heavy edges in $\Gamma$. Then we remove all light edges that have two crossings in $\Gamma^-$. Note that the edges that have been originally crossed the heavy edges will not be removed during this phase. The new drawing $\Gamma'$ is 1-planar. 
All the faces of $\Gamma'$ describe cyclic sequences of real vertices and crossing-vertices, and in each sequence we do not have two subsequent crossing-vertices, as this would mean two crossings on a light edge. 

    \begin{claim}
    $\Gamma'$ can be extended by additional light edges to a $\Gamma^+$ such that all faces are triangles and
$\Gamma^*$ achieved by inserting the heavy edges is still min-2-planar.         
    \end{claim}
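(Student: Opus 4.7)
The plan is to triangulate $\Gamma'$ face by face, adding one chord at a time. For each non-triangular face $f$ of $\Gamma'$, I will add an edge between two real vertices of $\partial f$, drawn entirely inside $f$. Such a chord does not cross any edge of $\Gamma'$, so $\Gamma^+$ remains $1$-planar; moreover, if the chord crosses at most two heavy edges when these are reinserted, then it is light in $\Gamma^*$, and min-$2$-planarity is preserved because every kept edge of $\Gamma'$ is already light in $\Gamma$ and heavy edges simply retain their crossings (no two heavy edges cross).

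First, because $\Gamma'$ is $1$-planar, no two crossing-vertices can be adjacent on the boundary of any face (otherwise the $\Gamma'$-edge between them would carry two crossings in $\Gamma'$). Hence every face $f$ with $\deg_{\Gamma'}(f) = d \geq 4$ has at least $\lceil d/2 \rceil \geq 2$ real vertices on $\partial f$.

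Second, the heavy edges of $\Gamma$ that pass through $f$ are pairwise non-crossing, so inside $f$ they appear as non-crossing chords partitioning $f$ into sub-regions whose dual is a tree $T_f$. A chord drawn between two real vertices $u, v$ of $\partial f$, routed inside $f$, crosses precisely the heavy edges lying on the $T_f$-path from the sub-region containing $u$ to the sub-region containing $v$. In particular, the chord is light in $\Gamma^*$ whenever these two sub-regions are at $T_f$-distance at most $2$.

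The main step is then to show that one can always select real vertices $u, v$ whose sub-regions lie at $T_f$-distance at most $2$. I will establish this by a counting argument: every boundary segment of $f$ belongs to a $\Gamma'$-edge which is light in $\Gamma$, and therefore can be crossed by at most two heavy edges in $\Gamma^*$; this bounds the number of heavy chords through $f$, and hence the size of $T_f$, linearly in $d$. On the other hand, $\partial f$ carries at least $\lceil d/2 \rceil$ real vertices, which for $d \geq 4$ exceeds the maximum size of a distance-$3$ independent set in any tree of the relevant size, so at least two real vertices must lie in sub-regions at $T_f$-distance at most $2$.

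After the chord is added, $f$ is split into two strictly smaller sub-faces, to which the same procedure is applied recursively. The process terminates with a drawing $\Gamma^+$ whose every face is a triangle. Since $\Gamma^+$ remains $1$-planar, each such triangle contains at most one crossing-vertex and is therefore a $2$-real or $3$-real triangle, matching the structure needed by the subsequent discharging. I expect the main obstacle to be the key step above: verifying that the distribution of real vertices across the sub-regions of $T_f$ always admits two sub-regions at tree-distance at most $2$, which may require a careful case analysis for small faces and degenerate tree topologies (for instance, long paths in $T_f$ with few real vertices concentrated at their ends).
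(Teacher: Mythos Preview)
Your plan diverges from the paper's argument, and the divergence is exactly at the step you flag as the obstacle. The paper does not attempt a global counting on the tree $T_f$; instead it proceeds constructively in two local phases. First, for every crossing-vertex $c$ on $\partial f$ with neighbouring real vertices $u,v$, it adds the chord $uv$ cutting off the $2$-real triangle $uvc$: since the two $1$-real segments at $c$ each lie on a $\Gamma'$-edge already crossed once in $\Gamma'$, each can carry at most one heavy crossing, so the new chord is crossed by at most two heavy edges. After this phase every non-triangular face has only real boundary vertices. Second, for such an all-real face the paper picks any heavy edge $e$ through $f$, follows it from an incident vertex (or from a boundary crossing) to the opposite boundary side $vv'$, and adds a chord parallel to $e$; this chord is crossed only by the at most two heavy edges that enter the narrow strip between it and $e$, since heavy edges are pairwise non-crossing. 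A short ad-hoc argument handles the residual case where the natural choice of endpoints is already adjacent.

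Your $T_f$ abstraction is a clean way to phrase the target (``find two real boundary vertices whose regions are at tree-distance $\le 2$''), but the counting you sketch does not close. Two concrete problems: (i) heavy edges incident to real vertices of $\partial f$ contribute chords to $T_f$ without consuming any boundary-crossing capacity, so the size of $T_f$ is not bounded just by the light-edge crossing budget; (ii) even granting $|T_f|=O(d)$, a path on $N$ nodes admits a distance-$3$ independent set of size about $N/3$, and a back-of-the-envelope outerplanar bound gives $|T_f|$ up to roughly $4r$, which yields an independent set larger than the $r\ge\lceil d/2\rceil$ real vertices you have available. So the pigeonhole you invoke does not fire. The paper's local operations sidestep this entirely by always cutting off a \emph{leaf} region (the triangle at a crossing-vertex, or the thin strip along a single heavy chord), which automatically gives tree-distance $\le 2$ without any global estimate. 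If you want to salvage the $T_f$ framework, the right move is to argue directly that some leaf or near-leaf region of $T_f$ sees two non-adjacent real vertices, which essentially reproduces the paper's case analysis rather than the independence-number bound.
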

    \begin{nestedproof}
        This insertion of extra edges for triangulation can be done using similar techniques as in \cref{le:density-min1-support}. If $u$ and $v$ are vertices of two distinct 1-real edges with a crossing $c$, then we can add an edge between $u$ and $v$ having at most two crossings with heavy edges $e$ and $e'$ (see \cref{fi:augmentation-min2-a}). So we assure that faces $f$ with $\deg_{\Gamma^+}(f) \ge 4$ have only real vertices. We can now reduce inductively $\deg_{\Gamma^+}(f)$. If $f$ is not traversed by a heavy edge, we can triangulate it instantly. If $f$ is traversed by a heavy edge $e$ which is incident to a vertex $u$ of $f$ and crosses the boundary of $f$ through the edge $vv'$, we can add an edge $uv$ which has at most one crossing with a heavy edge $e'$ (see \cref{fi:augmentation-min2-b}). Note that instead of $uv$ we can choose $uv'$, if $u$ and $v$ are already adjacent. Assume now that no heavy edge $e$ traversing $f$ is incident to any vertex of $f$. Let $u,u',v,v'$ be vertices of $f$ and $uu',vv'$ the edges which are crossed by $e$. Then we can add the edge $uv$ as it has at most two crossings with heavy edges $e', e''$ (see \cref{fi:augmentation-min2-c}). Note again that instead of $uv$ we may choose $u'v'$, if $u$ and $v$ are already adjacent. If $u'$ and $v'$ are also adjacent, this implies $\deg_{\Gamma^+}(f) = 4$. In this case there are at most four heavy edges traversing $f$ and all their combinations allow adding an edge between two vertices of $f$.
    \begin{figure}[b]
	\centering
    \subfigure[]{\includegraphics[width=.185\textwidth]{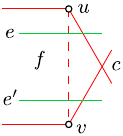}\label{fi:augmentation-min2-a}}
    \hfil
    \subfigure[]{\includegraphics[width=.27\textwidth]{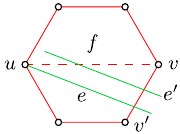}\label{fi:augmentation-min2-b}}
    \hfil
    \subfigure[]{\includegraphics[width=.27\textwidth]{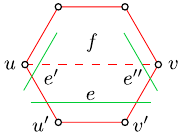}\label{fi:augmentation-min2-c}}
    \caption{
    The triangulation in the proof of \cref{th:heavy-min2}.}
    \label{fi:augmentation-min-2}
    \end{figure}
    \end{nestedproof}
    
    Since each edge in $\Gamma^+$ has at most one crossing, there are no 0-real triangles nor 1-real triangles in $\Gamma^+$. See \cref{fi:traversed-triangles-min-2-a} for an illustration of a heavy edge traversing such a graph. For this reason we show properties of 2-real and 3-real triangles.

    \begin{claim}
        Let $f$ be a 2-real triangle in $\Gamma^+$. Then the following holds:
        $(i)$ If $f$ is a start face for a heavy edge $e$, it is the only heavy edge traversing $f$.
        $(ii)$ At most two heavy edges traverse $f$.
        $(iii)$ If two heavy edges traverse $f$, then the face $f'$ that shares only a crossing-vertex with $f$ is not traversed by a heavy edge.
    \end{claim}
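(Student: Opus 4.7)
The plan is to exploit the local structure around a $2$-real triangle $f$ in $\Gamma^+$. Write $u, v$ for its real vertices and $z$ for its crossing-vertex, so that $z$ is the crossing of two light edges $uu'$ and $vv'$ of $\Gamma^+$; the boundary of $f$ is the real edge $uv$ together with the arcs $uz$ (a portion of $uu'$) and $vz$ (a portion of $vv'$). The crucial first step is the following \emph{key bound}: in $\Gamma^*$, at most one heavy edge crosses $uu'$, and similarly at most one heavy edge crosses $vv'$. Both $uu'$ and $vv'$ are inherited from $\Gamma$ (the triangulating augmentation only inserts edges between real vertices); they are light in $\Gamma$ and already cross each other at $z$, so the min-$2$-planarity of $\Gamma$ allows each at most one further crossing, which by preservation of heavy--light crossings remains available for at most one heavy edge in $\Gamma^*$. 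An analogous argument for an original $uv$, together with the two-crossings-with-heavy-edges bound from the previous claim for a triangulating $uv$, gives that at most two heavy edges cross $uv$ in $\Gamma^*$.

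For $(i)$: if $e$ is a heavy edge starting at $u$ and entering $f$, then $e$ is adjacent to $uv$ and $uz$ and can cross neither, so it exits $f$ through $vz$ and crosses $vv'$. A second heavy edge $e^*$ traversing $f$ is ruled out by a brief case analysis. Note that $e$ partitions $f$ into two subregions $R_1, R_2$ with $uv \subset \partial R_1$ and $uz \subset \partial R_2$, while $vz$ is split between them. Any configuration in which $e^*$ either starts at $u$ or meets $vz$ gives $vv'$ a second heavy crossing, contradicting the key bound; in every remaining configuration (entering/exiting through the pair $\{uv, uz\}$, or starting at $v$ and exiting through $uz$), the two intersections of $e^*$ with $\partial f$ lie in different subregions of $f \setminus e$, forcing $e^*$ to cross $e$, which is forbidden for two heavy edges.

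For $(ii)$: assume $f$ is not a start face, as the start-face case is handled by $(i)$. A heavy edge incident to $u$ or $v$ that were to pass through $f$ would have to cross $vz$ (respectively $uz$) twice, which is impossible, so every traversing heavy edge enters and exits through two distinct boundary arcs, contributing $2$ to the sum $\alpha + \beta + \gamma$ of heavy crossings of $uz, vz, uv$ respectively. The key bound gives $\alpha \leq 1$, $\beta \leq 1$, and $\gamma \leq 2$, so at most $(1+1+2)/2 = 2$ heavy edges traverse $f$. For $(iii)$: if this bound is attained then necessarily $\alpha = \beta = 1$, so the unique heavy crossings available on $uu'$ and on $vv'$ are already used inside $f$. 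Since $\Gamma^+$ has neither $0$- nor $1$-real triangles and the opposite face $f'$ at $z$ contains the crossing $z$, $f'$ must be a $2$-real triangle with real vertices $u', v'$ and boundary arcs $u'z$ (part of $uu'$), $v'z$ (part of $vv'$), and $u'v'$. Any heavy edge $e_3$ traversing $f'$ must cross at least one of $u'z, v'z$ --- giving a second heavy crossing on $uu'$ or on $vv'$, contradicting the key bound --- or must enter and exit through $u'v'$ alone, requiring two crossings with a single edge, which is impossible in a simple drawing.

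I expect the main obstacle to be the topological case analysis in part $(i)$: one must carefully enumerate how $e^*$ can enter and leave $f$ and, using the subregion partition induced by $e$ together with the single-heavy-crossing bound on $vv'$, reach a contradiction in each case. Once the key bound is established, parts $(ii)$ and $(iii)$ reduce to a clean counting argument.
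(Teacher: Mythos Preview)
Your argument is correct and follows essentially the same route as the paper: bound the number of heavy crossings on each boundary piece of the $2$-real triangle ($\le 1$ on each of $uz,vz$ since $uu',vv'$ already cross at $z$; $\le 2$ on $uv$) and then count. One minor slip: the parenthetical ``the triangulating augmentation only inserts edges between real vertices'' is vacuous and does not justify that $uu',vv'$ come from $\Gamma$; the cleaner justification (which the paper uses) is simply that $\Gamma^+$ is $1$-planar, so $uu'$ already has its unique $\Gamma^+$-crossing at $z$, and min-$2$-planarity of $\Gamma^*$ then leaves room for at most one heavy crossing.
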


    \begin{nestedproof}
        $(i)$ This is clear as one edge of $f$ has already 2 crossings and a heavy edge through the other edges of $f$ would imply two crossing heavy edges. 
        $(ii)$ Because of $(i)$ we know that this is true, if a heavy edge starts at $f$. Otherwise three or more traversing heavy edges would imply at least six extra crossings with the border of $f$ and this contradicts that $\Gamma^*$ is min-2-planar. $(iii)$
        Both 1-real edges of $f$ are also edges of $f'$. Because of $(i)$ both heavy edges do not start at $f$ and therefore the border of $f$ is crossed four times. So the 1-real edges of $f$ already have each two crossings and no heavy edge can traverse $f'$ (see \cref{fi:traversed-triangles-min-2-b}).
    \end{nestedproof}

\begin{figure}[hb]
	\centering
    \subfigure[]{\includegraphics[width=.50\textwidth]{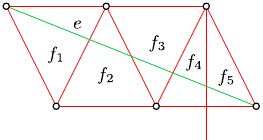}\label{fi:traversed-triangles-min-2-a}}
    \hfil
    \subfigure[]{\includegraphics[width=.28\textwidth]{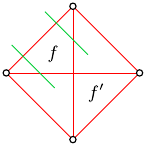}\label{fi:traversed-triangles-min-2-b}}
    \caption{
    Heavy edges crossing 2-real and 3-real triangles. $(a)$ $f_1$ and $f_5$ are start faces for the the edge $e$ and $f_2, f_3, f_4$ intermediate faces. $(b)$ If $f$ is traversed by two heavy edges, then $f'$ is traversed by no heavy edges.}
    \label{fi:traversed-triangles-min-2}
    \end{figure}
    \begin{claim}
        Let $f$ be a 3-real triangle in $\Gamma^+$. Then the following holds:
        $(i)$ If $f$ is a start face for a heavy edge $e$, there is at most one other heavy edge traversing $f$.  
        $(ii)$ At most three heavy edges traverse $f$.
    \end{claim}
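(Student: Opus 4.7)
The plan is to prove (i) by a subregion argument exploiting that two heavy edges cannot cross each other, and then to deduce (ii) from (i) together with a short counting bound along the boundary of $f$. For (i), I would let $u$ denote the real-vertex of $f$ at which $e$ starts, and let $e_1=vw$ be the edge of $f$ opposite $u$. Simplicity of $\Gamma^*$ forces $e$ to exit $f$ by crossing $e_1$ exactly once, so the portion of $e$ inside $f$ is a simple arc from $u$ to some point $x\in e_1$ that splits $f$ into two subregions $R_v$ and $R_w$, with $R_v$ bounded by $uv$, the half of $e_1$ from $v$ to $x$, and the arc $ux$, and $R_w$ defined symmetrically.

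I would then take any other heavy edge $e'$ with a portion $\pi$ inside $f$ and show that $\pi$ must cross $e_1$. Since $e$ and $e'$ are both heavy, $\pi$ cannot meet the arc $ux$, so $\pi$ lies entirely in $R_v$ or entirely in $R_w$. If $e'$ passes through $f$ (no endpoint in $V(f)$), then $\pi$ enters and leaves its subregion through the three-piece boundary; unable to use $ux$, it must enter/exit via the half of $e_1$ and the incident edge of $f$, so it crosses $e_1$. If $e'$ has an endpoint in $V(f)$, I would rule out this endpoint being $v$ (the case $w$ is symmetric): at $v$, $e'$ is adjacent to $uv$ and $e_1=vw$, so by simplicity its exit from $f$ must be through $uw$, which lies on the boundary of $R_w$, while $v$ lies on the boundary of $R_v$; hence $\pi$ must cross $ux$, contradicting that $e,e'$ are both heavy. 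So the endpoint is $u$, and then simplicity forces $e'$ to exit through $e_1$. In all cases $e'$ crosses $e_1$. Since $e_1$ is a light edge of $\Gamma^*$, it has at most two crossings, one already taken by $e$; so at most one other heavy edge $e'$ can traverse $f$.

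For (ii), let $k$ be the number of heavy edges traversing $f$. If at least one of them has an endpoint in $V(f)$, then part (i) immediately gives $k\le 2\le 3$. Otherwise every heavy edge traversing $f$ passes completely through $f$ and therefore contributes exactly two crossings to the three boundary edges of $f$; since each boundary edge is light in $\Gamma^*$ and so has at most two crossings, the boundary can absorb at most $6$ such crossings, giving $2k\le 6$ and hence $k\le 3$.

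The main obstacle I expect is the subregion case analysis in (i), specifically ruling out that a heavy edge $e'\neq e$ starts at $v$ or $w$; the key point is that such a start position forces the portion of $e'$ inside $f$ to cross the arc $ux\subset e$, which is precisely the disallowed crossing between two heavy edges. Once this is accepted, the rest of (i) is a clean boundary-crossing count on $e_1$, and (ii) follows by combining (i) with the elementary boundary-capacity estimate.
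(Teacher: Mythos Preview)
Your proof is correct and follows essentially the same approach as the paper. For (i), both you and the paper argue that the initial segment of $e$ splits $f$ into two parts and that any other heavy edge traversing $f$ is forced to cross the side $e_1$ opposite $u$; you spell out the case analysis (pass-through versus start at $u$, $v$, or $w$) more carefully than the paper's one-line assertion, but the underlying idea and the final crossing count on $e_1$ are identical. For (ii), your argument (start-face case via (i), pass-through case via the boundary-capacity bound $2k\le 6$) matches the paper's proof verbatim.
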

    \begin{nestedproof}
        $(i)$ Let $e'$ be the edge of $f$ which is crossed by $e$ and observe that $e$ divides the triangle in two parts, so that all other heavy edges must also cross $e'$. So the maximal number of crossing of $e'$ limits the number of heavy edges traversing $f$ to two.
        $(ii)$ Note that the border of $f$ can have at most six crossings. If no traversing heavy edge starts at $f$, this implies at most three traversing heavy edges. Otherwise we know because of $(i)$ that the result is true. 
    \end{nestedproof}

Now we apply the charging technique from before. In total, we distribute $4n-8$ charge such that the 2-real triangles receive 1 charge and the 3-real triangles receive 2 charge. We also move 1 charge from a 2-real triangle $f'$ to the 2-real triangle $f$ that shares only the crossing-vertex with $f'$, if $f'$ is not traversed by a heavy edge. Then we assign the charge to the corresponding parts of the heavy edges which traverse those triangles. Because of the above statements we know that we can assign to each start segment 1 charge and to each intermediate segment $\frac{2}{3}$ without having a face with negative charge. In total, each heavy edge receives at least $\frac{10}{3}$ charge, as it has two start segments and at least two intermediate segments.
Hence overall, there cannot be more than $\frac{4 \cdot 3}{10}(n-2) = \frac{6}{5}(n-2)$ heavy edges.

\medbreak
For the lower bound, we refer to the middle part of \cref{fi:min-k-heavy-edge-lower-bound}.
\begin{figure}[htb]
    \includegraphics[width=.95\textwidth]{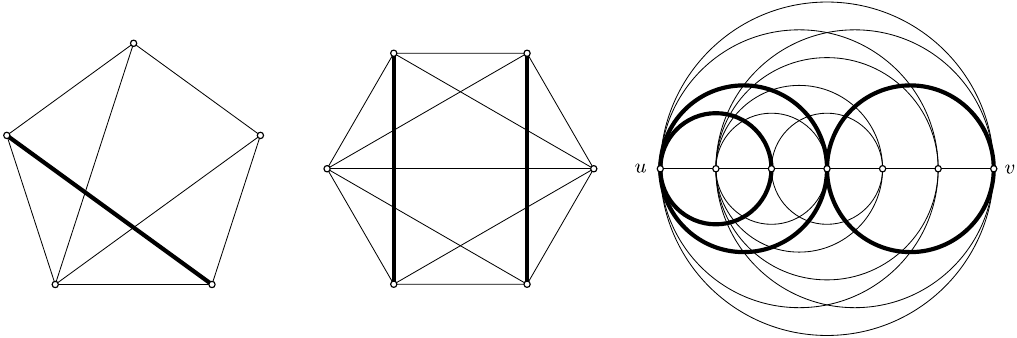}
    \caption{Lower bound examples for the number of heavy edges for min-1-planar, min-2-planar and (non-simple) min-3-planar drawings respectively. For min-1-planar drawings, we show one pentagon of a planar pentagonalization with one heavy edge for each pentagon. The bound follows from Euler's formula.  For min-2-planar drawings, we use an almost complete hexagonalization with the indicated hexagonal tile as shown. 
    The hexagons contain 2 heavy edges. The bound follows from Euler's formula since there are approximately $\frac {2n-4} 4$ hexagons. The rightmost subfigure follows the construction of \cref{fi:8-gons}. Note that the pattern can be repeated arbitrarily often with the poles $u$ and $v$.}
    \label{fi:min-k-heavy-edge-lower-bound}
\end{figure}
\end{proof}

\thdensityminthree* 
\begin{proof}
    As in the proof of \cref{th:density-min2} 
    we can assume to restrict to maximally-dense crossings-minimal bundle-proper min-3-planar pairs $(G,\Gamma)$ with $|V(G)|=n$ vertices, and we will use a discharging technique to prove the statement.
    Our discharging function $\ch'(\cdot)$ is similar to that in \cite{ackerman:on}, where the same bound for $4$-planar graphs was proven, but the details differ, as here edges with more than four crossings can exist. 
    We can assume that $\Gamma$ is 2-connected and hence that the boundary of each face $f$ of $\Gamma$ is a simple cycle. Indeed, Ackerman proved that if $\Gamma$ is not 2-connected it always has no more than $6n-12$ edges \cite[Proposition~2.1]{ackerman:on}. Although Ackerman concentrates on 4-planar graphs, he does not use this hypothesis to show this fact, thus his argument works also~in~our~case. 
    
    We introduce the discharging steps of $\ch'(\cdot)$ and show that all faces satisfy \textsf{(C1)} and \textsf{(C2)} for $\alpha = \frac{1}{3}$, which implies with \cref{eq:modified-charge} the desired bound on the edge density. This is relatively difficult to see for 0-real pentagons, so the major part of the proof is reserved for these. For that we distinguish different cases depending on the structure of the graph near a 0-real pentagon.

    Before we can write down the discharging function, we introduce some definitions. Let $x$ be a crossing-vertex in a drawing $\Gamma$. We call the faces $f$ and $f'$ \textit{vertex-neighbors}, if both their boundaries contain $x$ but not a common edge $e \in E(\Gamma)$. 
    Recall the definition of \textit{demand path} in the proof of \cref{th:density-min2} for 1-real triangles. 
    We naturally extend the definition of demand path for arbitrary faces $f_0$ through each of their 0-real edges. If $f$ is the end of a demand path for $f_0$, then we say that $f$ and $f_0$ are \textit{demand-path-neighbors} (even if no demand is made). Note that in \cref{sse:edge-density-min-2} $\deg_\Gamma(f) \ge 4$ was already shown for the case that $f_0$ is a 1-real triangle and this holds also for 0-real triangles, still using the fact that $\Gamma$ is a simple drawing and two edges cannot cross more than once.
    
    We assign to every face $f \in F(\Gamma)$ the initial charge $\ch'_0(f) = \ch(f)$ as defined in \cref{eq:initial-charge} and modify the charges by the following steps $i \in \{1,2,3,4\}$ to get $\ch'_i(f)$. The final charge is $\ch'(f) = \ch'_4(f)$. The idea is as in \cref{sse:edge-density-min-2} to fix first the charge of the 0-real and then the 1-real triangles. The last step fixes 0-real pentagons, which contribute to multiple triangles in the first step.
    \smallskip
	\begin{compactitem}
	    \item[\textsf{Step 1.}] Every 0-real triangle receives $\frac{1}{3}$ from each of its demand-path-neighbors.
		\item[\textsf{Step 2.}] If $f$ is a face with positive charge that is not a 1-real quadrilateral, then $f$ gives $\frac{1}{6}$ to each 1-real triangle that shares a 1-real edge with $f$. However, if $f$ is a 2-real triangle that shares only one of its two 1-real edges with a 1-real triangle $t$, then $f$ gives $\frac{1}{3}$ to $t$.
        \item[\textsf{Step 3.}] Let $t$ be a 1-real triangle with $\ch'_2(t) < \frac{1}{3}$ charge. Then $t$ receives $\frac{1}{3} - \ch'_2(t)$ from its demand-path-neighbor.
	\item[\textsf{Step 4.}] Every face $f$ distributes its (positive) excess $\ch'_3(f) - \frac{1}{3}\deg_\Gamma^r(f)$ equally over all 0-real pentagons that are vertex-neighbors of $f$.
	\end{compactitem}
    
    \smallskip
    Since charge is only moved, \textsf{(C2)} holds for $\ch'(f)$. For the four steps we have:
    
	\begin{restatable}{proposition}{prominthreeprop} \label{pro:min3-chargingprop}
        In any maximally-dense crossing-minimal bundle-proper min-3-planar pair $(G, \Gamma )$ the following holds for the above charging function $\ch'(f)$:
		\begin{itemize}
            \item[(a)] In \textsf{Step 1} and \textsf{Step 3} charge gets only moved through 0-real edges.
            \item[(b)] In \textsf{Step 2} charge gets only moved through 1-real edges.
            \item[(c)] Charge gets never moved through 2-real edges.
            \item[(d)] In \textsf{Step 3} each 1-real triangle receives most $\frac{1}{3}$ charge.
			\item[(e)] Let $f$ be a 1-real triangle and $e$ a 1-real edge of the boundary of $f$. If $\overline{e}$ has more than three crossings, then $f$ receives at least $\frac{1}{6}$ charge in \textsf{Step 2}.
		\end{itemize}
	\end{restatable}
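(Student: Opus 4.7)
The plan is to establish items (a)--(e) in turn. Items (a), (b), and (c) fall out by unpacking the discharging rules. For (a), Steps 1 and 3 move charge along demand paths, which (by the extended definition used in this proof) begin with a 0-real edge and continue through 0-real quadrilaterals, so the transferred charge crosses a 0-real edge. For (b), Step 2 is stated explicitly in terms of shared 1-real edges. For (c), the only remaining step is Step 4, which moves charge between vertex-neighbors; by definition these share only a crossing vertex and no common edge, so Step 4 moves no charge across any edge. Combining these, no step ever moves charge through a 2-real edge.

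For (d), the plan is to show $\ch'_2(t) \geq 0$ for every 1-real triangle $t$; the bound on what $t$ gets in Step 3 then follows because Step 3 deposits $\tfrac{1}{3} - \ch'_2(t) \leq \tfrac{1}{3}$. Initially $\ch'_0(t) = 3 + 1 - 4 = 0$, and the only step before Step 3 that could reduce the charge of $t$ is Step 1, and only if $t$ is a demand-path-neighbor of some 0-real triangle $t_0$. I will rule this out by adapting the strip argument implicit in the proof of \cref{th:density-min2}: the 0-real quadrilaterals along a demand path form a strip whose two flanks are segments of a fixed pair of edges of $G$. For the demand path starting at $t_0$, these two flank edges are two of the three pairwise-crossing edges bounding $t_0$, and hence cross each other inside $t_0$. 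If that demand path ended at a 1-real triangle $t$, the two 1-real edges of $t$ would be segments of those flanks, both incident to the real vertex of $t$; but two adjacent edges cannot cross in a simple drawing --- contradiction. Hence $\ch'_1(t) = 0$, and since Step 2 only deposits charge at 1-real triangles, $\ch'_2(t) \geq 0$.

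For (e), let $g$ be the face sharing $e$ with $f$, and let $c_1 = e \cap \overline{h}$ where $\overline{h}$ is the other edge through $c_1$. Since $\overline{e}$ is heavy, min-3-planarity forces $\overline{h}$ to be light, and two of its at most three crossings are already $c_1$ and $c_2 = \overline{h} \cap \overline{e'}$ along $f$. The plan is to show that in every subcase $g$ has positive charge after Step 1 and is not a 1-real quadrilateral, so Step 2 deposits at least $\tfrac{1}{6}$ from $g$ onto $f$. The cleanest subcase is when $\overline{h}$ has only the crossings $c_1, c_2$: then the boundary of $g$ leaves $c_1$ along $\overline{h}$ and reaches directly a real endpoint $u$ of $\overline{h}$, so $g$ contains the two distinct real vertices $v$ and $u$; \cref{le:maximal-min-2-planar-prop}(\ref{edge_appears}) together with crossing-minimality then force $g$ to be the 2-real triangle $v c_1 u$ with $\ch'_0(g) = 1$. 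The remaining subcases (where $\overline{h}$ has a third crossing) are handled analogously: at each step along the boundary of $g$ away from $c_1$, \cref{le:maximal-min-2-planar-prop}(\ref{edge_appears}) forces any real vertex reached to be connected back to $v$ by a boundary edge of $g$, and crossing-minimality caps how many crossing-vertices $g$ can absorb before closing.

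The main obstacle will be the case analysis in (e): unlike (d), there is no single structural contradiction that kills the ``bad'' shapes for $g$ in one shot. One must walk around the boundary of $g$ through each possible position of the third crossing of $\overline{h}$ and each possible neighbor of $v$ on the $g$-side, repeatedly using \cref{le:maximal-min-2-planar-prop} and the lightness of $\overline{h}$ to preclude $g$ being a 1-real quadrilateral in every resulting configuration.
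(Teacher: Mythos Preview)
Your treatment of (a)--(d) is fine and essentially matches the paper; for (d) you re-derive inside the proof what the paper already recorded just before the proposition (namely that a demand-path-neighbor of a $0$-real triangle has degree at least $4$, so a $1$-real triangle never donates in \textsf{Step~1}), but the argument is correct.

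The gap is in (e). You commit to extracting the $\tfrac{1}{6}$ from the face $g$ lying across the \emph{given} edge $e$, and then propose a case analysis to show that $g$ is never a $1$-real quadrilateral (and has positive charge). That does not go through. Write $f=v c_1 c_2$ with $e=vc_1$ and $\overline{h}=\overline{e'}$ the light edge carrying the $0$-real side $c_1c_2$. If the (possible) third crossing $c_3$ of $\overline{h}$ lies on the $c_1$-side, nothing prevents $g$ from being the $1$-real quadrilateral $v,c_1,c_3,c_4$, where $c_4$ is a crossing between the edge through $c_3$ and some further edge of $G$ incident to $v$; equally, $g$ can be the $1$-real triangle $v,c_1,c_3$ when the edge through $c_3$ is itself incident to $v$. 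In both configurations \cref{le:maximal-min-2-planar-prop} says nothing (only one real vertex is present), and there is no rerouting that lowers the crossing count, so neither maximality nor crossing-minimality excludes them. In the first case $g$ is exempt from \textsf{Step~2} by definition; in the second $\ch'_1(g)=0$, so again $g$ gives nothing. Hence your plan to ``preclude $g$ being a $1$-real quadrilateral in every resulting configuration'' cannot succeed.

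The paper sidesteps this entirely. It does not look across $e$; it looks at the $0$-real edge $e'=c_1c_2$. Since $\overline{e'}$ crosses the heavy $\overline{e}$, it is light and has at most three crossings, two of which are $c_1,c_2$. Thus on at least one side of the segment $c_1c_2$ the edge $\overline{e'}$ reaches an endpoint with no further crossing, and the face across the corresponding $1$-real edge of $f$ (which may be $e$ or the \emph{other} $1$-real edge $vc_2$) is a $2$-real triangle $f'$. A $2$-real triangle has no $0$-real edge, so $\ch'_1(f')=\ch'_0(f')=1>0$, and it is not a $1$-real quadrilateral; hence $f'$ donates at least $\tfrac{1}{6}$ to $f$ in \textsf{Step~2}. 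This one-line observation replaces your entire case analysis and shows that the ``main obstacle'' you anticipate is an artifact of fixing the wrong side.
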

	\begin{nestedproof}
		$(a)$ follows by the definition of a demand-path-neighbor and $(b)$ directly by the definition of \textsf{Step 2} of $\ch'(\cdot)$. Because of $(a)$ and $(b)$ charge gets never moved through 2-real edges in \textsf{Step 1-3} and this is also clear for \textsf{Step 4} what shows $(c)$. Each 1-real triangle $f$ has $\ch_0'(f) = 0$ and contributes no charge in \textsf{Step 1-2}. Thus $\frac{1}{3} - \ch_2'(f) \le \frac{1}{3}$ and so $(d)$ is true. For $(e)$ we consider the 0-real edge $e'$ of the boundary of the 1-real triangle $f$. Since $\overline{e}$ has more than three crossings, $\overline{e'}$ has at most three crossings. Therefore $f$ has a common 1-real edge on its boundary with a 2-real triangle $f'$. 2-real triangles have no 0-real edges and therefore by applying (a) we have $\ch'_1(f) = \ch'_0(f) = \frac{1}{3} > 0$. So $f'$ contributes at least $\frac{1}{6}$ in \textsf{Step 2} to $f$.
	\end{nestedproof}
 
    We analyze the final charges $\ch'(f)$ for all faces $f \in F(\Gamma)$. 
    By \cref{le:maximal-min-2-planar-prop}(\ref{3_real_triangle})
    we have to consider only $h$-real faces for $h \le 2$ and 3-real triangles. Note that a face contributes in \textsf{Step 1-3} through each edge at most once. Also a face can not get a deficit in \textsf{Step 4}, so it is enough to show $\ch'_3(f) \ge \frac{1}{3} \deg_\Gamma^r(f)$. We use \cref{pro:min3-chargingprop}$(a)$-$(d)$ to receive the following results:
    \begin{itemize}
        \item Each 0-real triangle $f$ has $\ch'_0(f) = -1$, receives $3 \cdot \frac{1}{3}$ in \textsf{Step 1} and does not contribute or receive charge in \textsf{Step 2-3}. So $\ch_3'(f) = 0 = \frac{1}{3} \deg_\Gamma^r(f)$.
        \item Each 1-real triangle $f$ has $\ch'_3(f) =\frac{1}{3} = \frac{1}{3} \deg_\Gamma^r(f)$.
        \item Each 2-real triangle $f$ has $\ch'_0(f) = 1$ and contributes through one or two edges in total at most $\frac{1}{3}$ in \textsf{Step 2}. So $\ch_3'(f) \ge \frac{2}{3} = \frac{1}{3} \deg_\Gamma^r(f)$.
        \item Each 3-real triangle $f$ has $\ch'_0(f) = 2$ and contributes or receives no charge, so $\ch_3'(f) = 2 > 1 = \frac{1}{3} \deg_\Gamma^r(f)$.
        \item Each 0-real quadrilateral $f$ has $\ch'_0(f) = 0$ and contributes or receives no charge, so $\ch_3'(f) = 0 = \frac{1}{3} \deg_\Gamma^r(f)$.
        \item Each 1-real quadrilateral $f$ has $\ch'_0(f) = 1$ and contributes through each 0-real edge at most once and never through a 1-real edge, so it loses in \textsf{Step 1-3} in total at most $\frac{2}{3}$. Therefore $\ch_3'(f) \ge \frac{1}{3} = \frac{1}{3} \deg_\Gamma^r(f)$.
        \item Each 2-real quadrilateral $f$ has $\ch'_0(f) = 2$ and contributes at most $\frac{1}{3}$ through its only 0-real edge and $2 \cdot \frac{1}{6}$ through the 1-real edges, so it loses in \textsf{Step 1-3} in total at most $\frac{2}{3}$. Therefore $\ch_3'(f) \ge \frac{4}{3} > \frac{2}{3} = \frac{1}{3} \deg_\Gamma^r(f)$.
        \item Each $1$-real pentagon $f$ has $\ch'_0(f) = 2$ and contributes through its three 0-real edges at most $3 \cdot \frac{1}{3}$ and through its two 1-real edges at most $2 \cdot \frac{1}{6}$, so $\ch_3'(f) \ge \frac{2}{3} > \frac{1}{3} = \frac{1}{3} \deg_\Gamma^r(f)$.
        \item Each 2-real pentagon $f$ has $\ch'_0(f) = 3$ and contributes through its two 0-real edges at most $2 \cdot \frac{1}{3}$ and through its two 1-real edges at most $2 \cdot \frac{1}{6}$, so $\ch_3'(f) \ge 2 > \frac{2}{3} = \frac{1}{3} \deg_\Gamma^r(f)$.
        \item Each face $f$ with $\deg_\Gamma(f) \ge 6$ has $\ch'_0(f) = \deg_\Gamma(f) - 4 + \deg^r_\Gamma(f)$ and contributes at most $\frac{1}{3}\deg_\Gamma(f)$ charge through its $\deg_\Gamma(f)$ edges. So $\ch_3'(f) \ge \frac{2}{3}\deg_\Gamma(f) - 4 + \deg^r_\Gamma(f) \ge \deg^r_\Gamma(f) \ge \frac{1}{3} \deg^r_\Gamma(f)$.
    \end{itemize}
    It only remains to show $\ch'_4(f) \ge 0$ for all 0-real pentagons $f$. For this we denote for $i \in \{0,1,2,3,4\}$ by $e_i$ the edges of the boundary of $f$ in clockwise order and by $t_i$ (even if not a triangle) the demand-path-neighbor of $f$ at $e_i$. Further we denote by $f_i$ the vertex-neighbors of $f$ at the crossing-vertex of $e_i$ and $e_{(i+1) \mod 5}$.
    We consider different cases for the demand-path-neighbors of $f$. Because of rotation- and mirror-symmetry we can shift or negate all indices modulo $5$ without loss of generality and to ease notation we will always use fixed indices. We distinct that more than one demand-path-neighbor of $f$ is a 0-real triangle (case 1 and 2), exactly one 0-real triangle (case 3 and 4) or no 0-real triangle (case 5 and 6). Note that $\ch'_0(f) = 1$, so if $f$ receives $\frac{2}{3}$ or loses not more than $1$ charge then $\ch'_4(f) \ge 0$. This is for example the case if three or less demand-path-neighbors of $f$ are 0-real or 1-real triangles, so we can assume always the opposite. We use this argument repeatedly in the proof.
    
	\paragraph{Case 1. The demand-path-neighbors $t_i$ and $t_{(i+1) \mod 5}$ are 0-real triangles.}
    Fix $i=1$. 
    \begin{figure}[tb]
    	\centering
        \subfigure[]{\includegraphics[width=.3\textwidth]{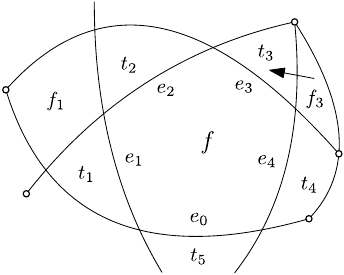}\label{fi:min3-case-1a}}
    	\hfil
    	\subfigure[]{\includegraphics[width=.3\textwidth]{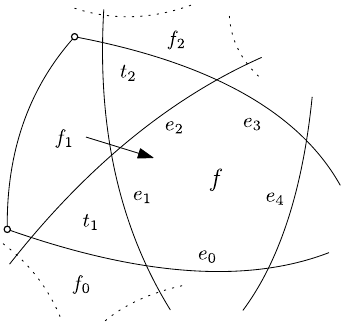}\label{fi:min3-case-1b}}
    	\hfil
    	\subfigure[]{\includegraphics[width=.3\textwidth]{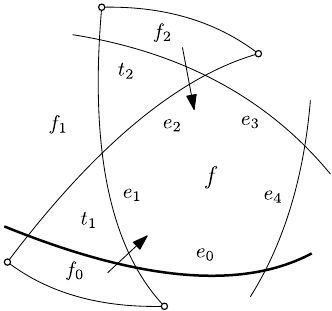}\label{fi:min3-case-1c}}
    	\caption{
    		Case 1: If $\overline{e}_0$ and $\overline{e}_3$ have exactly three crossings, then $f_1$ is a (a) 1-real quadrilateral or (b) a 2-real pentagon. (c) If $\overline{e}_0$ has more than three crossings, then $f$ receives charge from $f_0$ and $f_2$. (\textit{Here and in the following figures, edges that exist in some subcases are dotted and edges with more than three crossings are bold.})}
    	\label{fi:min3-case-1}
    \end{figure}
    Assume that $\overline{e}_0$ and $\overline{e}_3$ have exactly three crossings. Then $f_1$ is a 1-real quadrilateral or a 2-real pentagon.
    \begin{itemize}
    	\item If $f_1$ is a 1-real quadrilateral, then $t_4$ is a 2-real quadrilateral and therefore $\ch'_4(f) \ge 0$, in case one of $t_3$ and $t_5$ is not a triangle. If both are triangles, one of them -- say without loss of generality $t_3$ -- is a 1-real triangle as otherwise $\overline{e}_1$ and $\overline{e}_2$ would have more than three crossings (see \cref{fi:min3-case-1a}). This implies that $t_3$ receives $\frac{1}{3}$ from the 2-real triangle $f_3$ in \textsf{Step 2} and so $\ch'_4(f) \ge 0$.
    	\item If $f_1$ is a 2-real pentagon, then $f_0$ and $f_2$ are not both 0-real pentagons, because then $\overline{e}_1$ and $\overline{e}_2$ would have more than three crossings (see \cref{fi:min3-case-1b}). So $f_1$ distributes its excess of at least $\frac{4}{3}$ in \textsf{Step 4} over at most two faces and therefore $\ch'_4(f) \ge 0$.
    \end{itemize}
	Assume now that without loss of generality $\overline{e}_0$ has four or more crossings (see \cref{fi:min3-case-1c}). Then $\overline{e}_1$ and $\overline{e}_2$ have exactly three crossings and $f_0$ and $f_2$ are 2-real quadrilaterals, which contribute each their excess of at least $\frac{2}{3}$ to at most two faces in \textsf{Step 4}. So $f$ receives $2 \cdot \frac{1}{3}$ and therefore $\ch'_4(f) \ge 0$.
	
	\paragraph{Case 2. The demand-path-neighbors $t_i$ and $t_{(i+2) \mod 5}$ are 0-real triangles.}
    Fix $i=1$. Then $\overline{e}_2$ has four or more crossings and so $\overline{e}_0$ and $\overline{e}_4$ have exactly three crossings. So $f_4$ is a 2-real triangle and contributes in \textsf{Step 2} and \textsf{Step 4} in total $\frac{1}{3}$ to $f$ and its demand-path-neighbors. If $f$ has four or less demand-path-neighbors that are 0-triangles or 1-triangles, then already $\ch'_4(f) \ge 0$.\\
	Assume that $f$ has five such demand-path-neighbors and $t_0, t_2, t_4$ are 1-real triangles (see \cref{fi:min3-case-2}) as otherwise we can refer to case 1. 
 	\begin{figure}[tb]
    	\centering
    	\includegraphics[width=.4\textwidth]{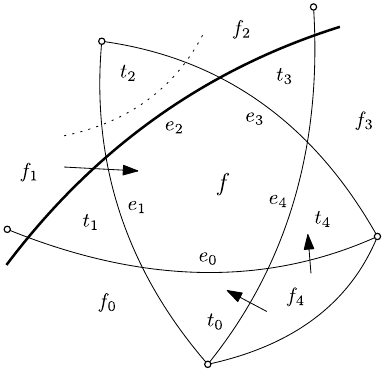}
    	\caption{Case 2: If $t_1$ and $t_3$ are 0-real triangles, then $f_4$ is a 2-real triangle. $f$ receives also charge from $f_1$.}\label{fi:min3-case-2}
    \end{figure}
    If $f_1$ is a 2-real quadrilateral, then it contributes its excess of at least $\frac{2}{3}$ to $f$ in \textsf{Step 4} and so $\ch'_4(f) \ge 0$. If $f_1$ is a 1-real quadrilateral, then it has an excess of $\frac{1}{3}$, because it contributes no charge in \textsf{Step 1} to $f_2$. This excess is contributed in \textsf{Step 4} to only $f$, so $\ch'_4(f) \ge 0$. For $\deg_\Gamma(f_1) \ge 5$ the excess of $f_1$ after \textsf{Step 3} is at least
	\begin{align*}
	\deg_\Gamma(f_1) - 4 + \frac{2}{3} \deg_\Gamma^r(f_1) - \frac{1}{3}\deg_\Gamma(f_1) + 2 \cdot \frac{1}{3} \ge \frac{2}{3} \deg_\Gamma(f_i) - \frac{8}{3},
	\end{align*}
    because $f_1$ contributes through the 1-real edges in total at most $\frac{1}{3}$, it contributes no charge to $f_2$ and $\deg_\Gamma^r(f_1) \ge 1$. This is distributed over at most $\deg_\Gamma(f_1) - 3$ faces as $f_1$ has at most $\deg_\Gamma(f_1) - 1$ vertex-neighbors and does not contribute charge to $f_0$ and $t_2$. For $\deg_\Gamma(f_1) \ge 5$ the equation
	\begin{align*}
	\frac{\frac{2}{3} \deg_\Gamma(f_1) - \frac{8}{3}}{\deg_\Gamma(f_1) - 3} \ge \frac{1}{3}
	\end{align*}
	holds. This implies $\ch'_4(f) \ge 0$.

	\paragraph{Case 3. The demand-path-neighbor $t_i$ is a 0-real triangle and the demand-path-neighbor $t_{j}, j \in \{(i+1) \mod 5, (i-1) \mod 5\}$ is a 1-real triangle with no 0-real quadrilaterals in its demand path.}
    Fix $i = 1$ and $j = 2$. 
	\begin{itemize}
		\item Assume $\overline{e}_0$ has more than three crossings. So $\overline{e}_2$ has exactly three crossings. Then $f_2$ is a 2-real triangle contributing $\frac{1}{3}$ to $f$ and its demand-path-neighbors (see \cref{fi:min3-case-3a}). 
        \begin{figure}[tb]
        	\centering
        	\subfigure[]{\includegraphics[width=.4\textwidth]{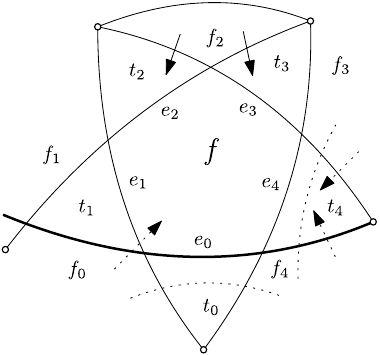}\label{fi:min3-case-3a}}
        	\hfil
        	\subfigure[]{\includegraphics[width=.4\textwidth]{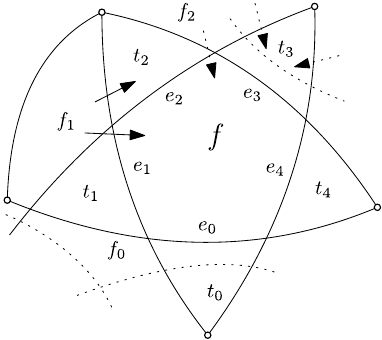}\label{fi:min3-case-3b}}
        	\caption{
        		Case 3: (a) If $\overline{e}_0$ has more than three crossings, then $f_2$ contributes charge to $t_2$ and $t_3$. Also charge is moved to $t_4$ and from $f_0$ to $f$. (b) If $\overline{e}_0$ has exactly three crossings, then $f_1$  contributes charge to $f$ and $t_2$. If this is not $\frac{2}{3}$, then charge is moved to $t_3$ or from $f_2$ to $f$.}
        	\label{fi:min3-case-3}
        \end{figure}
        So we can assume that all demand-path-neighbors of $f$ are 0-real or 1-real triangles (otherwise $\ch'_4(f) \ge 0$) and $t_0, t_2, t_3, t_4$ are 1-real triangles (otherwise we can refer to case 1-2). Note that if the demand path of $t_4$ contains no 0-real quadrilateral, then $f_3$ is a 2-real triangle and therefore $\ch'_4(f) \ge 0$. If it contains one or more 0-real quadrilaterals, then $t_4$ receives at least $\frac{1}{6}$ by \cref{pro:min3-chargingprop}(e). Also $f$ receives at least $\frac{1}{6}$ from $f_0$, as we can see with the following argument: If $f_0$ is a 1-real quadrilateral, then $t_0$ receives $\frac{1}{3}$ charge from a 2-real triangle in \textsf{Step 2}. If $f_0$ is a 2-real quadrilateral, then $f$ receives the excess of $f_0$ in \textsf{Step 4}, which is at least $\frac{2}{3}$. If $f_0$ has more than four vertices, then it distributes not less than $\deg_\Gamma(f_0) - 3 - \frac{\deg_\Gamma(f_0)}{3} = \frac{2}{3} \deg_\Gamma(f_0)-3$ over at most $\deg_\Gamma(f_0) - 3$ vertex-neighbors and so $f$ receives at least $\frac{1}{6}$ charge in \textsf{Step 4}. So in total $\ch'_4(f) \ge 0$.
		\item Assume $\overline{e}_0$ has exactly three crossings. Then $f_1$ is a 2-real quadrilateral, which contributes $\frac{1}{6}$ in \textsf{Step 2} to $t_2$ and has an excess of at least $\frac{2}{3}$ after \textsf{Step 3}, which is distributed over at most two faces in \textsf{Step 4} (see \cref{fi:min3-case-3b}). If there is a demand-path-neighbor of $f$, which is not a 0-real or 1-real triangle, then $\ch'_4(f) \ge 0$. So we assume that $t_0, t_3, t_4$ are 1-real triangles (otherwise we can refer to case 1 or 2).\\
		If now $f_0$ is not a 0-real pentagon, then $f_1$ contributes its excess in \textsf{Step 4} only to $f$ and $\ch'_4(f) \ge 0$ follows. If $f_0$ is a 0-real pentagon, then $\overline{e}_2$ has more than three crossings. So if the demand path of $t_3$ contains one or more 0-real quadrilaterals, then $t_3$ receives at least $\frac{1}{6}$ in by \cref{pro:min3-chargingprop}(e) and $\ch'_4(f) \ge 0$. If the demand path of $t_3$ contains no 0-real quadrilaterals, then $f_2$ is a 2-real triangle that contributes $\frac{1}{6}$ both to $t_2$ and $t_3$ and so $\ch'_4(f) \ge 0$.
	\end{itemize}

	\paragraph{Case 4. The demand-path-neighbor $t_i$ is a 0-real triangle and the demand-path-neighbor $ t_{j}, j \in \{(i+1) \mod 5, (i-1) \mod 5\}$ is a 1-real triangle with one or more 0-real quadrilaterals in its demand path.}
    Fix $i = 1$ and $j = 2$.
	\begin{itemize}
		\item Assume $\overline{e}_0$ has more than three crossings. So $\overline{e}_1$ and $\overline{e}_2$ have exactly three crossings. It follows that $f_0$ is a 2-real quadrilateral with an excess of at least $\frac{2}{3}$ after \textsf{Step 3}, so it contributes at least $\frac{1}{3}$ to $f$ in \textsf{Step 4} (see \cref{fi:min3-case-4a}). 
        \begin{figure}[tb]
        	\centering
        	\subfigure[]{\includegraphics[width=.3\textwidth]{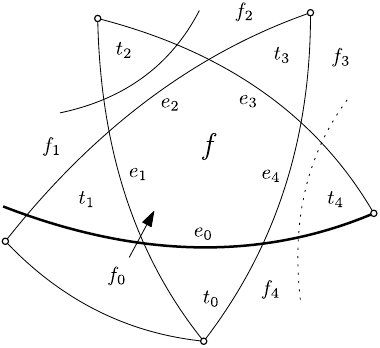}\label{fi:min3-case-4a}}
        	\hfil
        	\subfigure[]{\includegraphics[width=.3\textwidth]{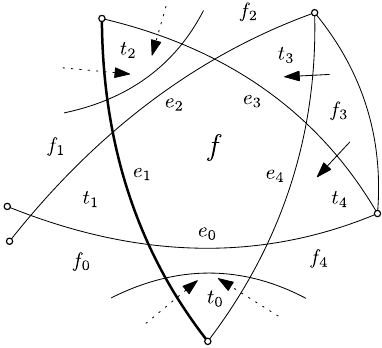}\label{fi:min3-case-4b}}
        	\hfil
        	\subfigure[]{\includegraphics[width=.3\textwidth]{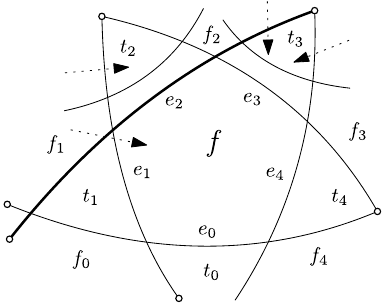}\label{fi:min3-case-4c}}
        	\caption{
        		Case 4: (a) If $\overline{e}_0$ has more than three crossings, then $f_0$ contributes charge to $f$ and we can refer to case 1-3. (b) The situation for the subcase that $\overline{e}_0$ has exactly three crossings and $t_0$ is a 1-real triangle.
        		(c) Situation that $\overline{e}_0$ has exactly three crossings and $t_0$ is not a 1-triangle.}
        	\label{fi:min3-case-4}
        \end{figure}
        So if four or less demand-path-neighbors of $f$ are 0-real or 1-real triangles, then $\ch'_4(f) \ge 0$. Otherwise we assume that all demand-path-neighbors except $t_1$ are 1-real triangles as with another 0-real triangle we can refer to case 1 or 2. But then $t_0$ is a 1-real triangle without a 0-real quadrilateral in its demand path and we can refer to case 3.
		\item Assume $\overline{e}_0$ has exactly three crossings and $t_0$ is a 1-real triangle. If the demand path of $t_0$ has no 0-real quadrilaterals, then we refer to case 3. Otherwise $\overline{e}_1$ has more than three crossings and so by applying \cref{pro:min3-chargingprop}(e) both $t_0$ and $t_2$ receive each $\frac{1}{6}$ charge in \textsf{Step 2} (see \cref{fi:min3-case-4b}). So if not all demand-path-neighbors of $f$ are 0-real or 1-real triangles, then $\ch'_4(f) \ge 0$. So assume now the opposite and that $t_3$ and $t_4$ are 1-triangles (otherwise we can refer to case 1-2). Because $\overline{e}_0$ and $\overline{e}_2$ already have three crossings, $t_3$ and $t_4$ have no 0-real quadrilaterals in their demand paths. So $f_3$ is a 2-real triangle contributing $\frac{1}{6}$ to each $t_3$ and $t_4$. Therefore $\ch'_4(f) \ge 0$.
		\item Assume $\overline{e}_0$ has exactly three crossings and $t_0$ is not a 1-real triangle. Then we can assume that $t_0$ is also not a 0-real triangle (otherwise we can refer to case 1) and that $t_3$ and $t_4$ are 1-real triangles (otherwise we can refer to case 2). If then the demand path of $t_3$ contains no 0-real quadrilateral, it follows that $f_3$ is a 2-real triangle contributing $\frac{1}{6}$ each to $t_3$ and $t_4$ and so $\ch'_4(f) \ge 0$.\\
        If otherwise the demand path of $t_3$ contains a 0-real quadrilateral, by \cref{pro:min3-chargingprop}(e) $t_3$ receives $\frac{1}{6}$ charge in \textsf{Step 2} (see \cref{fi:min3-case-4c}). We now consider different cases for $f_1$. If $f_1$ is a 1-real quadrilateral, then $t_2$ receives $\frac{1}{3}$ from a 2-real triangle in \textsf{Step 2} and so $\ch'_4(f) \ge 0$. If otherwise $\deg_\Gamma(f_1) \ge 5$, then the excess of $f_1$ after \textsf{Step 3} is at least $\deg_\Gamma(f) - 3 - \frac{\deg_\Gamma(f)}{3} = \frac{2}{3}\deg_\Gamma(f) - 3$ and this is distributed in \textsf{Step 4} over at most $\deg_\Gamma(f) - 3$ vertex-neighbors. So $f$ receives at least $\frac{1}{6}$ from $f_1$ and so $\ch'_4(f_0) \ge 0$.
	\end{itemize}

	\paragraph{Case 5. Exactly four demand-path-neighbors of $f$ are 1-real triangles and no demand-path-neighbor is a 0-real triangle.}
    Fix the indices so that $t_i, i \in \{1,2,3,4\}$ are 1-real triangles. If the demand paths of $t_2$ and $t_3$ contain no 0-real quadrilaterals, then $f_2$ is a 2-real triangle contributing $\frac{1}{6}$ each to $t_2$ and $t_3$ and this implies $\ch'_4(f) \ge 0$. So assume without loss of generality that the demand path of $t_2$ contains a 0-real quadrilateral. If the demand path of $t_4$ contains at least one 0-real quadrilateral, then $\overline{e}_3$ has more than three crossings and it follows by \cref{pro:min3-chargingprop}(e) that $t_2$ and $t_4$ both receive $\frac{1}{6}$ in \textsf{Step 2} and therefore $\ch'_4(f) \ge 0$.\\
	If otherwise the demand path of $t_4$ contains no 0-real quadrilaterals, then we can assume that the demand path of $t_3$ does so, because otherwise $f_3$ is a 2-real triangle, what implies $\ch'_4(f) \ge 0$. If now the demand path of $t_1$ contains a 0-real quadrilateral we know by \cref{pro:min3-chargingprop}(e) that $t_1$ and $t_3$ both receive $\frac{1}{6}$ in \textsf{Step 2} and therefore $\ch'_4(f) \ge 0$.\\
	So assume that the demand paths of $t_1$ and $t_4$ contain no 0-real quadrilaterals (see \cref{fi:min3-case-5}). If $\overline{e}_1$ has exactly three crossings, then $f_0$ is 2-real triangle, which contributes $\frac{1}{3}$ to $t_1$ and so $\ch'_4(f) \ge 0$. If $\overline{e}_4$ has exactly three crossings, then with the same argument $f_4$ contributes $\frac{1}{3}$ to $t_4$ and so $\ch'_4(f) \ge 0$. If both $\overline{e}_1$ and $\overline{e}_4$ have more than three crossings, $t_2$ and $t_3$ receive by applying \cref{pro:min3-chargingprop}(e) each $\frac{1}{6}$ charge in \textsf{Step 3} and so $\ch'_4(f) \ge 0$.
	\begin{figure}[tb]
    	\centering
    	\includegraphics[width=.4\textwidth]{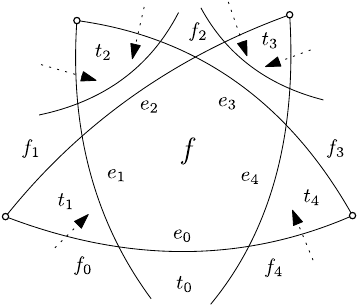}
    	\caption{Case 5: The situation if only $t_2$ and $t_3$ have a 0-real quadrilateral in their demand paths.}\label{fi:min3-case-5}
    \end{figure}
	
	\paragraph{Case 6. All demand-path-neighbors of $f$ are 1-real triangles.}
	If the demand paths of only one demand-path-neighbor $t_i$ or two demand-path-neighbors $t_i, t_{(i+1) \mod 5}$ contain 0-real quadrilaterals, then $f_{(i+2) \mod 5}$ and $f_{(i+3) \mod 5}$ are 2-real triangles and so $\ch'_4(f) \ge 0$. So we assume without loss of generality that the demand paths of $t_1$ and $t_3$ contain 0-real quadrilaterals.\\
	Then $\overline{e}_2$ has more than three crossings and both $t_1$ and $t_3$ receive $\frac{1}{6}$ in \textsf{Step 2} by applying \cref{pro:min3-chargingprop}(e). So if the demand paths of $t_0$ and $t_4$ contain no 0-real quadrilaterals and therefore $f_4$ is a 2-real triangle, it follows $\ch'_4(f) \ge 0$. Thus assume without loss of generality that the demand path of $t_4$ contains a 0-real quadrilateral (see \cref{fi:min3-case-6a}). Then $\overline{e}_0$ has more than three crossings and $t_4$ receives at least $\frac{1}{6}$ in \textsf{Step 2} by applying \cref{pro:min3-chargingprop}(e).\\
    \begin{figure}[tb]
    	\centering
    	\subfigure[]{\includegraphics[width=.4\textwidth]{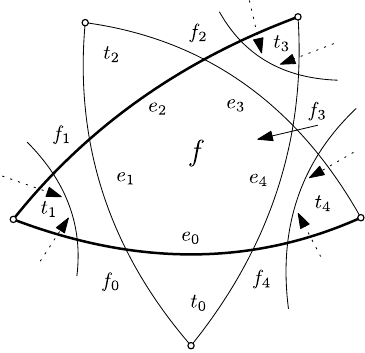}\label{fi:min3-case-6a}}
    	\hfil
    	\subfigure[]{\includegraphics[width=.4\textwidth]{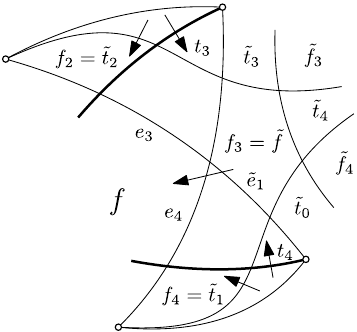}\label{fi:min3-case-6b}}
    	\caption{
    		Case 6: (a) The situation if only $t_1, t_3$ and $t_4$ have a 0-real quadrilateral in their demand paths. (b) Focus on $f_3$ and its renamed neighbors with the further assumption that $f_2$ and $f_4$ are 1-real triangles. Here we show the example that $f_3$ is a 0-real pentagon.}
    	\label{fi:min3-case-6}
    \end{figure}
	Note that the demand paths of $t_3$ and $t_4$ can not contain more than one 0-real quadrilateral and if one of $f_2$ and $f_4$ is a 2-real quadrilateral, it contributes its excess of at least $\frac{2}{3}$ to $f$ in \textsf{Step 4} and so $\ch'_4(f) \ge 0$. So we can assume the opposite. We consider different cases for $f_3$ (see \cref{fi:min3-case-6b} for the example $f_3$ is a 0-real pentagon):
	\begin{itemize}
		\item $f_3$ is a 0-real quadrilateral: Then $f_2$ and $f_4$ are 1-real triangles and therefore $t_3$ and $t_4$ receive both $\frac{1}{6}$ from a 2-real triangle in \textsf{Step 2}. Further they receive also $\frac{1}{6}$ from a 2-real quadrilateral in \textsf{Step 2} and so $\ch'_4(f) \ge 0$.
		\item $f_3$ is a 1-real quadrilateral: Then $t_3$ and $t_4$ receive both $\frac{1}{3}$ charge in \textsf{Step 2} from a 2-real triangle and so $\ch'_4(f) \ge 0$.
		\item $\deg_\Gamma(f_3) \ge 5$: We rename for this purpose the faces in the neighborhood of $\tilde{f}:= f_3$ so that we denote for $i \in \{0, 1, \dots, \deg_\Gamma(\tilde{f}) \}$ by $\tilde{e}_i$ the edges of the boundary of $\tilde{f}$ in clockwise order and we introduce further $\tilde{e}_1$ so that $\overline{\tilde{e}}_1 = \overline{e}_3$, $\tilde{t}_1:=f_4$ and so on. The demand-path-neighbors $\tilde{t}_1$ and $\tilde{t}_2$ are 1-real triangles, which receive both $\frac{1}{6}$ from a 2-real triangle. If further $\tilde{t}_3$ is a 1-real triangle, then it receives (together with $\tilde{t}_4$,  if $\tilde{t}_4$ is a 1-real triangle) $\frac{1}{3}$ charge from a 2-real triangle. The same is true for $\tilde{t}_0$ (together with $\tilde{t}_{\deg_\Gamma(f_3)-1}$). So the excess of $f_3$ is at least
		\begin{align*}
			\deg_\Gamma(f_3) - 4 + \frac{2}{3} \deg_\Gamma^r(f_3) - \frac{1}{3}\deg_\Gamma(f_3) + 2 \cdot \frac{1}{6} + 2 \cdot \frac{1}{3}
			\ge \frac{2}{3} \deg_\Gamma(f_3) - 3
		\end{align*}
		and this is distributed over at at most $\deg_\Gamma(f_3) - 4$ faces. So for $\deg_\Gamma(f_3) \ge 5$ the face $f_3$ contributes at least $\frac{1}{6}$ to each of its vertex-neighbors and this implies with the other charges $\ch'_4(f) \ge 0$.
	\end{itemize}

    \paragraph{Summary} Since we have shown for all 0-real pentagons $f$ that we have $\ch'(f) \ge 0$ in all cases, we know that \textsf{(C1)} holds for all faces w.r.t. $\alpha = \frac{1}{3}$. Together with \cref{eq:modified-charge} and the observation that \textsf{(C2)} is true, \cref{th:density-min3} follows.
\end{proof}

\thheavyminthree*
\begin{proof}
    We apply a similar charging technique as in \cref{th:heavy-min2}, to achieve a bound for the number of heavy edges. Here is a sketch of the proof. From $G$ with a corresponding drawing $\Gamma$, we derive graph $G^*$ with drawing $\Gamma^*$ with the same number of vertices and the same heavy edges crossing the same edges as in $\Gamma$. With the heavy edges being removed, $\Gamma^*$ induces a 2-planar drawing.
    Different than before, 0-real and 1-real faces might occur, and we also cannot guarantee the existence of a triangulation. But fortunately, we can redistribute our total charge of $4n-8$ similar as before, such that each heavy edge receives charge 2 and that implies the claimed upper bound.

More formally,
let $G$ be any min-3-planar graph with a corresponding drawing $\Gamma$ with the maximum number of heavy edges.
Note that all heavy edges in $\Gamma$ have at least four crossings with light edges.
From $\Gamma$, we derive a graph $G^-$ and drawing $\Gamma^-$ by removing all the heavy edges in $\Gamma$. Further we remove all light edges that have three crossings in $\Gamma^-$ which is 2-planar by construction. Note that the edges that have been originally crossed by the heavy edges will not be removed during this phase. 

The next claim describes how to construct $\Gamma^*$ from the drawing $\Gamma^-$. We apply the operation whenever it is possible. 

\begin{claim}
    Let $\Gamma^+$ be the drawing obtained from $\Gamma^-$ by adding the heavy edges. If there exist two light edges $u_1v_1, u_2v_2$ with endpoints $u_1,u_2,v_1,v_2$ having a crossing $x$ and if additionally the two edges have in total $c \ge 1$ crossings in $\Gamma^+$ between $u_1$ and $x$ and between $u_2$ and $x$, we can extend $\Gamma^-$ by the light edge $v_1v_2$ such that it forms a 2-triangle $v_1v_2x$, if it does not exist. Then the achieved graph $G^*$ with the drawing $\Gamma^*$ is still min-2-planar and the 2-real triangle $v_1v_2x$ is crossed by at most one heavy edge if $c=3$ and by no heavy edge if $c\ge 4$.
    
\end{claim}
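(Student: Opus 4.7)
The plan is to draw the new edge $v_1v_2$ tightly along the concatenation of the $v_1$-to-$x$ portion of $u_1v_1$ and the $x$-to-$v_2$ portion of $u_2v_2$, nudged slightly into the quadrant of $x$ bounded by the two $v$-segments. With this routing, the new edge together with these two portions bounds a 2-real triangle $v_1 v_2 x$ whose interior contains no vertex of $\Gamma^-$. If $v_1v_2$ already exists in $G^-$ and already bounds such a triangle with $x$, nothing is added; otherwise we verify that the freshly routed arc is a simple Jordan arc from $v_1$ to $v_2$ that, by the closeness of the routing, avoids all edges adjacent to $v_1$ or $v_2$ and meets every other edge in at most one point.

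The key consequence of the close routing is that the crossings of $v_1v_2$ in $\Gamma^-$ are exactly the crossings of the two $v$-portions of $u_1v_1$ and $u_2v_2$: since $\Gamma^-$ is 2-planar and one of the at most two crossings of each of $u_1v_1$ and $u_2v_2$ is $x$ itself, each contributes at most one crossing to $v_1v_2$. Hence $v_1v_2$ has at most two crossings in $\Gamma^*$, which I will use to argue that $\Gamma^*$ remains min-2-planar: every crossing of $\Gamma^*$ either involves the new edge (whose crossing count is at most~$2$, satisfying the condition) or is inherited from the 2-planar drawing $\Gamma^-$.

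For the heavy-edge bound, I apply the same close-routing view inside $\Gamma^+$ (with heavy edges added back): a heavy edge crosses the triangle $v_1 v_2 x$ if and only if it crosses one of the two $v$-portions of $u_1v_1$ or $u_2v_2$. Since each $u_iv_i$ is light, it has at most three crossings in $\Gamma^+$, one of which is $x$; after subtracting the $c$ crossings on the $u$-sides we obtain at most $4-c$ crossings on the $v$-sides, and hence at most $4-c$ heavy edges cross the triangle. Specializing gives at most one heavy edge for $c=3$ and none for $c \geq 4$, matching the statement.

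The main obstacle is the subtle case in the min-2-planarity verification where two distinct edges $e_1, e_2$ are both crossed by $v_1v_2$ and also cross each other inside $\Gamma^-$: after adding $v_1v_2$, both $e_1$ and $e_2$ acquire a third crossing, and their mutual crossing would violate the min-2-planar condition. Resolving this requires either exploiting the two possible routing sides at $x$ so that the new edge bypasses at least one of the two problematic crossings, or invoking the crossing-minimality of the original $\Gamma$ to locally reroute $e_1$ or $e_2$ across $v_1v_2$ and so exclude the configuration; I expect this to be the most delicate step of the proof.
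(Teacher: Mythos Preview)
Your routing idea and your $4-c$ count on the $v$-side crossings in $\Gamma^+$ are exactly the paper's argument, and your derivation of the heavy-edge bounds for $c=3$ and $c\ge 4$ matches the paper verbatim. The problem is that you have been misled by a typo in the claim: it should read ``min-$3$-planar'', not ``min-$2$-planar''. This claim sits inside the proof of \cref{th:heavy-min3} (the min-$3$-planar heavy-edge bound), and the paper's own proof of the claim concludes ``Hence $v_1v_2$ is crossed at most three times and $\Gamma^*$ is min-$3$-planar.'' With the correct target, the whole argument is simply your $4-c$ count applied in $\Gamma^+$: for $c\ge 1$ the new edge has at most three crossings in the full drawing including heavy edges, so it is light, and that is all the paper uses.

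Two consequences. First, your intermediate assertion that $v_1v_2$ has at most two crossings in $\Gamma^*$ is not right as stated: that bound is obtained inside $\Gamma^-$, but $\Gamma^*$ contains the heavy edges, which may also cross $v_1v_2$; the correct bound there is the $\le 3$ coming from $4-c$, which you already compute later. Second, the ``main obstacle'' you flag at the end---two light edges $e_1,e_2$ each pushed to a third crossing while still crossing each other---is entirely an artifact of trying to enforce min-$2$-planarity. It evaporates once the target is min-$3$-planar, and the proposed rescues via the alternative routing side or via crossing-minimality of the original $\Gamma$ are neither needed nor likely to salvage the sharper (and in general false) min-$2$ statement.
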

 \begin{nestedproof}
    Let $c\ge 1$ the number of crossings as defined in the claim. Note that both $u_1v_1$ and $u_2v_2$ have also the crossing $x$. Therefore the two edges might have in total $6-2-c \le 3$ possible crossings at the 2-real triangle $v_1v_2x$. Since each heavy edge that crosses $v_1v_2x$ must cross one of these edges, there can be at most three such heavy edges. Hence $v_1v_2$ is crossed at most three times and $\Gamma^*$ is min-3-planar.
    For $c = 3$ respectively $c=4$ the same argument leads to one respectively zero heavy edges crossing the 2-real triangle $v_1v_2x$.
 \end{nestedproof}

As before, we distinguish $k$-real faces in $\Gamma^*$, where $k$ denotes the number of real vertices in the corresponding face. According to the formula from \cref{eq:initial-charge}, we assign charges to the faces. Our final goal is to distribute $\frac{1}{4}$ charge to each start segment of a heavy edge and $\frac{1}{2}$ charge to each intermediate segment. Furthermore we have to ensure that the charge of each face is non-negative. 

In the following, we discuss different faces:
Note that we will move charges from one face $f$ to another $f'$, if necessary. This will happen at crossings where $f$ and $f'$ do not have adjacent sides but they are opposite at the crossings. Since we ensure that $f$ will always be a 2-real triangle, it has a unique crossing-vertex and therefore it will not lose charge more than once. This general strategy will ensure the correctness of the transfer of the charges. There will be one subcase deviating from this strategy that will be discussed separately. 

\begin{itemize}
    \item 0-real triangles $t$: We start with an initial charge of $-1$. Since each of the three light edges is crossed twice 
    we know there are three 2-real triangles adjacent to the crossing-vertices of $t$. These triangles are crossed in total by at most three heavy edges. So they can give together $\frac{3}{2}$ charge to $t$ and keep $\frac{3}{2}$ for their three possible intermediate segments of heavy edges.
    Hence the final charge of $t$ is non-negative. Even if the 0-real triangle is traversed by a heavy edge (note that there can be only one), the corresponding segment can receive the remaining $\frac 1 2$ charge.
    
    \item 1-real triangles $t$: We start with an initial charge of 0. Let $u$ be the single vertex adjacent to the 1-real triangle, and let $e$ be the unique edge at the triangle which is not incident to $u$. Let $c_1$ and $c_2$ be the crossings of the edge $e$ at the boundary of the triangle. Call the endpoints of the edge $e$ $x$ and $y$. The vertices $u$, $c_1$ and $x$ and also the vertices $u$, $c_2$ and $y$ form two 2-real triangles $t_1$ and $t_2$.
    Assume first that $t$ has a heavy edge $e'$ incident to $u$. Then no heavy edges cross $t_1$ and $t_2$, because $e$, $uc_1$ and $uc_2$ cannot have another heavy edge crossing.

    \medbreak
    Here is the critical case for the transfer of the charge: We can charge the start segment of $e'$ by $\frac{1}{4}$ from one of the two 2-triangles that are opposite to $t$ at crossing $c_1$ respectively $c_2$, if not both are crossed twice, as then it has $\frac{1}{2}$ charge left for its one crossing. If both such 2-real triangles are crossed twice, we know that the face $f'$ that is opposite to $c_1$ and $c_2$ at $t_1$ respectively $t_2$ has at least $\deg_{\Gamma^*} \ge 4$ and two real vertices. We will never move charge to this face and therefore $t_1$ and $t_2$ can safely charge the start segment of $e'$ by $\frac{1}{4}$.
    
    Assume now that exactly one heavy edge $e'$ has two crossings with $t$. Then one of the faces opposite at $c_1$ or $c_2$ is a 2-real triangle, that can give $\frac{1}{2}$ charge to $t$. This is true as we know from the above claim that the 2-real triangles exist and if both would have two crossing heavy edges, there would be two crossing edges with more than three crossings which is not allowed.
    
    
    In the last case, we have two heavy edges crossing the 1-real triangle $t$ twice. No matter if one of the heavy edges crosses $e$ or not, we can guarantee the existence of a 2-real triangle $t'$, which is opposite to $t$ at $c_1$ or $c_2$ and which is traversed by no heavy edge. Since this 2-real triangle $t'$ has no heavy edge, we can safely move $1 = 2 \cdot \frac 1 2$ charge to the 1-real triangle, in particular to the two segments of the heavy edges crossing the 1-real triangle $t$.
     \medbreak
    \item 2-real triangles $t$ with crossing $c$: We start with an initial charge of 1, which means that we have enough charge available, if there are at most 2 heavy edges that cross. If three heavy edges cross, then there is a 2-real triangle $t'$ opposite to $t$ at crossing $c$. Note that $t'$ is crossed by at most one heavy edge. Again we can move $\frac{1}{2}$ charge from $t'$ to the third heavy edge that crosses $t$. Note that $t'$ has still enough charge that might be assigned to the single heavy edge of $t'$ if it exists.  
     \medbreak
    \item 3-real triangles $t$: We start with an initial charge of 2, which means that we have enough charge for 4 possible heavy edges. Observe that we can have at most 4 heavy edges which possibly cross $t$. Hence we do not need to transfer any extra charge from somewhere else.
    \item 0-real quadrilaterals $q$: We start with an initial charge of 0, which means that we need to collect either $\frac 1 2$ or $1$ charge depending on the number of possible crossing heavy edges. In both cases (1 or 2 crossing edges), we can argue that there is at least one of the four possible 2-real triangles adjacent to the four crossings of $q$ that is empty and can give its charge to the crossing heavy edges.
    \item 1-real quadrilaterals $q$: We start with an initial charge of 1, so we only need additional charge for the case of 3 crossing heavy edges. Note that more heavy edges can not cross $q$.
    In this critical case, all 2-real triangles adjacent to the three crossings of $q$ are empty and therefore one can give its charge to the third heavy edge of the 1-real quadrilateral.
    \item 0-real pentagons: We start with an initial charge of 1, which means that we have enough charge for 2 possible heavy edges. Observe that we can have at most 2 heavy edges which possibly cross. 
\end{itemize}

For larger faces, it is sufficient to argue only about the number of crossings:
For 2-real quadrilaterals, 1-real pentagons and 0-real hexagons the number of possible crossings on their boundary is not more than a quarter of their charge. This allows us to move $\frac{1}{4}$ to the start segments and $\frac{1}{2}$ to the intermediate segments as they correspond to one respectively two crossings and we avoid therefore negative charge.
This holds also for faces with more real vertices or higher degree, by the following argument. Replacing one crossing by a real vertex increases the charge of a face by 1, while the number of possible crossings increases also by 1 and therefore the required charge increases by only $\frac{1}{4}$. Increasing the degree of a face by 1 increases the charge by 1, while two more crossings are possible. So we can charge the heavy edges that make those additional crossings. 

This gives in total $2 \cdot \frac 1 4 + 3 \cdot \frac 1 2 = 2$ charge, as we have 2 start segments and at least 3 intermediate segments per heavy edge. From the charge of 2 for each heavy edge, we conclude that we have at most $\frac {4(n-2)} 2 = 2(n-2)$ heavy edges.

\medskip

For the lower bound, we refer to the right part of \cref{fi:min-k-heavy-edge-lower-bound}, where we have 6 heavy edges for blocks of 6 vertices (8 vertices from where $u$ and $v$ are removed). Note that this construction can be repeated between the poles $u$ and $v$ similarly as in \cref{fi:8-gons}.
\end{proof}

\subsection{Details for \Cref{se:inclusion-relationships}}\label{se:app-inclusion-relationships}

\thmintwotwoplanar*
\begin{proof}
We first observe that there exist non-optimal min-2-planar graphs that are not 2-planar. For example, $K_{5,5}$ is not 2-planar~\cite{DBLP:journals/jgaa/AngeliniBKS20},  
while \cref{fi:K55} illustrates a min-2-planar drawing of $K_{5,5}$, where black edges have no crossings, orange edges have 1 crossing, blue edges have 2 crossings, green edges have 3 crossings, and the red edge has 4 crossings. 
In the following, we show how to construct optimal $n$-vertex min-2-planar graphs that are not 2-planar.


Let $G$ be the truncated icosahedral graph and let $\Gamma$ be a planar drawing of~$G$, as depicted in \cref{fi:truncated-icosahedral-graph-a}. This drawing has 12 pentagonal faces, 20 hexagonal faces, 60 vertices and 90 edges.
We enrich $\Gamma$ by adding 5 edges inside each pentagonal face and 7 edges inside each hexagonal face. Denote the obtained graph and the obtained drawing as $G'$ and $\Gamma'$, respectively. $\Gamma'$ is depicted in \cref{fi:truncated-icosahedral-graph-b}, where the edges inside pentagonal faces are colored orange and the edges inside hexagonal faces are colored blue.
More precisely, for each pentagonal face we add an edge between each pair of vertices of the face that are not connected. For each hexagonal face $f$, we add 7 edges as follows; refer to \cref{fi:truncated-icosahedral-graph-b} for an illustration, where the vertices of $f$ are denoted as $u$, $v$, $w$, $x$, $y$, and $z$ . We add an edge between each pair of vertices having distance two on the boundary of $f$. Additionally, we arbitrarily choose two vertices having the maximum distance on the boundary of $f$ ($w$ and $z$ in \cref{fi:truncated-icosahedral-graph-b}) and we add an edge between them, which we call the \emph{diagonal of $f$}. Note that the end-vertices of the diagonal of $f$ have degree 5 in $f$. All the diagonals are dashed in \cref{fi:truncated-icosahedral-graph-b}.

Observe that: $(i)$ $G'$ has $n=60$ vertices and $m=90 + 12 \cdot 5 + 20 \cdot 7 = 290$ edges, thus $m = 5n-10$; $(ii)$ each edge of $G'$ added inside a pentagonal face of $\Gamma$ has two crossings in $\Gamma'$; $(iii)$ for each hexagonal face $f$ of $\Gamma$, the two edges that cross the diagonal (bold in \cref{fi:truncated-icosahedral-graph-b}) have three crossings each, while the other edges added inside $f$ have two crossings each; $(iv)$ no two edges with three crossings cross each other. This implies that $G'$ is optimal min-2-planar and $\Gamma'$ is a min-2-planar drawing of $G'$.

To show that $\Gamma'$ is not 2-planar, we exploit a property on the degree distribution of optimal 2-planar graphs from \cite{DBLP:conf/gd/Forster0R21}, which states that the degree of each vertex of a 2-planar graph is a multiple of three. In what follows, we show that~$G'$ contains vertices whose degree is not a multiple of three.

Each vertex of $G$ belongs to the boundary of two hexagonal faces and one pentagonal face, and it has degree 3. We now show that each vertex of $G'$ has degree 9, or 10, or 11, and that not all of them have degree 9.
Let $u$ be a vertex of $G'$ whose degree is 9 (see, e.g., vertex $u$ in \cref{fi:truncated-icosahedral-graph-b}). Since $u$ belongs to the boundary of two hexagonal faces and one pentagonal face in $G$, and it has degree 9, it cannot be incident to any diagonals (otherwise it would have degree larger than 9). This implies that at least one of the vertices that are adjacent to $u$ in $G$, call it $v$, is the end-vertex of at least one diagonal in $G'$. Two cases are possible: $(a)$ $v$ is the end-vertex of one diagonal; $(b)$ $v$ is the end-vertex of two diagonals. In case $(a)$, $v$ has degree 10 (see, e.g., vertex $v$ in \cref{fi:truncated-icosahedral-graph-b}); in case $(b)$, $v$ has degree 11 (see, e.g., vertex $z$ in \cref{fi:truncated-icosahedral-graph-b}).
\end{proof}

\thminthreethreeplanar*
\begin{proof}
%
%
First, consider a planar graph $G$, and a corresponding drawing $\Gamma$, consisting of $h$ parallel chains ($h \geq 1$), each with $8$ vertices, sharing the two end-vertices $u$ and $v$, and interleaved by $h$ copies of edge $uv$; refer to \cref{fi:8-gons-a}. Then, construct a new graph $G'$, and a corresponding drawing $\Gamma'$, obtained from~$G$, and from $\Gamma$, by replacing each parallel chain with a copy of the graph $G''$ depicted in \cref{fi:8-gons-b}. In the drawing $\Gamma'$, each copy of $G''$ has the same edge crossings as the drawing illustrated in \cref{fi:8-gons-b}.
Graph $G''$ has $8$ vertices and $33$ edges, and it is min-3-planar. Indeed, only four edges in the drawing of $G''$ shown in the figure have more than three crossings and they do not cross each other (see the bold edges in \cref{fi:8-gons-b}). It follows that $G'$ is min-3-planar; also it has $n=6h+2$ vertices and $m=33h+h=34h$ edges. Since $h=\frac{n-2}{6}$, we have $m=\frac{17}{3}n-\frac{34}{3}=5.\overline{6}n-11.\overline{3}$. Therefore, $m>5.5n-11$ for every $n > \frac{1}{2}$. Since a 3-planar graph has at most $5.5n-11$ edges~\cite{DBLP:journals/dcg/PachRTT06}, $G'$ is not 3-planar.
\end{proof}

\subsection{Details for \Cref{se:open}}\label{se:app-conclusions}


\leminkgapplanar*
\begin{proof}
Let $\Gamma$ be a min-$k$-planar drawing, for any given $k \geq 1$.
Each crossing in $\Gamma$ involves at least one light edge, hence the set of light edges covers all crossings in $\Gamma$. Consider any light edge $e$ and assign each crossing of $e$ to $e$. Then, consider a second light edge $e'$ and assign all unassigned crossings of $e'$ to $e'$. Iterate this procedure until all crossings have been assigned to some light edge. Since each light edge has at most $k$ crossings, no more than $k$ crossings are assigned to a single edge. Hence $\Gamma$ is $k$-gap-planar.

We now prove that $\Gamma$ is also $(k+2)$-quasi planar. Suppose by contradiction that this is not the case. This means that $\Gamma$ contains $k+2$ mutually crossing edges. Since no two heavy edges cross, at least $k+1 \geq 2$ of these edges are light edges. But each of them cross $k+1$ times, a contradiction.
\end{proof}

\leminkfanplanar*
\begin{proof}
The existence of min-$k$-planar graphs that are not fan-planar is an immediate consequence of the fact that there exist 2-planar graphs that are not fan-planar \cite{DBLP:journals/tcs/BinucciGDMPST15}. 
%
To show the existence of fan-planar graphs that are not min-$k$-planar, consider the graph $K_{1,3,h}$, for any $h \geq 1$, and let $n=4+h$ be its number of vertices. It is easy to see that this graph is fan-planar (see, e.g.,~\cite{DBLP:journals/tcs/BinucciGDMPST15}). Also, it is known that any drawing of $K_{1,3,h}$ has $\Omega(h^2)=\Omega(n^2)$ crossings \cite{DBLP:journals/jgt/Asano86}.
On the other hand, by \cref{th:density-mink-general}, any min-$k$-planar drawing with $n$ vertices has at most $c \sqrt{k} \cdot n$ edges (for a constant $c$) and therefore, by \cref{pr:crossings-min-k}, it has at most $c k^{1.5}n$ crossings. Hence, $K_{1,3,h}$ is not min-$k$-planar for sufficiently large values of $n$.  
\end{proof}


\end{document}